\newtheorem{thm}{Theorem}[section]
\newtheorem{lemma}[thm]{Lemma}
\newtheorem{prop}[thm]{Proposition}
\newtheorem{cor}[thm]{Corollary}
\newtheorem{bigthm}{Theorem}   % Numbered separately, as A, B, etc.
\theoremstyle{remark}
\newtheorem*{rem}{Remark}
\newcommand{\R}{\mathbb{R}}
\newcommand{\Z}{\mathbb{Z}}
\newcommand{\N}{\mathbb{N}}
\newcommand{\G}{\mathcal{N}}
\newcommand{\A}{\mathcal{A}}
\newcommand{\sB}{\mathscr{B}}
\newcommand{\sC}{\mathscr{C}}
\newcommand{\sd}{{\Sigma\Delta}}
\newcommand{\ignore}[1]{} 
\renewcommand{\thefootnote}{\fnsymbol{footnote}}
\title{Sobolev Duals for Random Frames and \\
$\Sigma\Delta$ Quantization of Compressed Sensing Measurements}
\author{C.S. G\"unt\"urk\footnotemark[1], A. Powell\footnotemark[2],
R. Saab\footnotemark[3],  \"O. Y\i lmaz\footnotemark[3] }
\begin{document}
\footnotetext[1]{
Courant Institute of Mathematical Sciences, New York University.}
\footnotetext[2]{
Vanderbilt University.}
\footnotetext[3]{
University of British Columbia.}

\maketitle

\renewcommand{\thefootnote}{\alph{footnote}}

\begin{abstract}
  Quantization of compressed sensing measurements is typically
  justified by the robust recovery results of Cand{\`e}s, Romberg and
  Tao, and of Donoho.  These results guarantee that if a uniform
  quantizer of step size $\delta$ is used to quantize $m$ measurements
  $y = \Phi x$ of a $k$-sparse signal $x \in \R^N$, where $\Phi$
  satisfies the restricted isometry property, then the approximate
  recovery $x^\#$ via $\ell_1$-minimization is within $O(\delta)$ of
  $x$. The simplest and commonly assumed approach is to quantize each
  measurement independently.  In this paper, we show that if instead
  an $r$th order $\Sigma\Delta$ quantization scheme with the same
  output alphabet is used to quantize $y$, then there is an
  alternative recovery method via Sobolev dual frames which guarantees
  a reduction of the approximation error by a factor of
  $(m/k)^{(r-1/2)\alpha}$ for any $0 < \alpha < 1$, if $m \gtrsim_r k
  (\log N)^{1/(1-\alpha)}$. The result holds with high probability on
  the initial draw of the measurement matrix $\Phi$ from the Gaussian
  distribution, and uniformly for all $k$-sparse signals $x$ that
  satisfy a mild size condition on their supports.
\end{abstract}

\section{Introduction}

Compressed sensing is concerned with when and how sparse signals 
can be recovered exactly or approximately from few linear measurements
\cite{Donoho2006,CRT2006,CandesICM}.
Let $\Phi$ be an $m \times N$ matrix providing the measurements where
$m \ll N$, and $\Sigma_k^N$ denote the space 
of $k$-sparse signals in $\R^N$, $k<m$. A standard objective, after a 
suitable change of basis, is that the mapping $x \mapsto y=\Phi x$ 
be injective on $\Sigma^N_k$. Minimal
conditions on $\Phi$ that offer such a guarantee are well-known
(see, e.g. \cite{CDD2009}) and require 
at least that $m \geq 2 k$. On the other hand, under stricter
conditions on $\Phi$, such as the restricted isometry property (RIP), 
one can recover sparse vectors from their measurements
by numerically efficient methods, such as $\ell^1$-minimization.
Moreover, 
the recovery will also be robust when the measurements are corrupted
\cite{CRT}, cf. \cite{D06};
if $\hat y = \Phi x + e$ where $e$ is any vector such that 
$\|e\|_2 \leq \epsilon$,
then the solution $x^\#$ of the optimization problem
\begin{equation}\label{ell1_eps_prog}
\min \|z \|_1 \mbox{ subject to } \|\Phi z - \hat y \|_2 \leq \epsilon 
\end{equation}
will satisfy $\|x - x^\#\|_2 \lesssim \epsilon$. 

The price paid for these stronger recovery guarantees is the somewhat
smaller range of values available for the dimensional parameters $m$,
$k$, and $N$.  While there are some explicit (deterministic)
constructions of measurement matrices with stable recovery guarantees,
best results (widest range of values) have been found via random
families of matrices. For example, if the entries of $\Phi$ are
independently sampled from the Gaussian distribution
$\G(0,\frac{1}{m})$, then with high probability, $\Phi$ will satisfy
the RIP (with a suitable set of parameters) if $m \sim k \log (\frac{N}{k})$.
Significant effort has been put on understanding the phase transition
behavior of the RIP parameters for other random families, e.g.,
Bernoulli matrices and random Fourier samplers.

\subsection*{Quantization for compressed sensing measurements}
The robust recovery result mentioned above is essential to the
practicality of compressed sensing, especially from an
analog-to-digital conversion point of view.  If a discrete alphabet
$\A$, such as $\A = \delta \Z$ for some step size $\delta > 0$, is to
be employed to replace each measurement $y_j$ with a quantized
measurement $q_j:=\hat y_j \in \A$, then the temptation, in light of
this result, would be to minimize $\|e\|_2=\|y - q\|_2$ over $q \in
\A^m$. This immediately reduces to minimizing $|y_j - q_j|$ for each
$j$, i.e., quantizing each measurement separately to the nearest
element of $\A$, which is usually called Pulse Code Modulation (PCM).
%The only exception in the literature that
%we are aware of is \cite{boufounos2007sigma}, which uses $\Sigma\Delta$ to quantize $x$
%\emph{before} the random measurements are made, and therefore deals with a different problem
%than the one we are addressing in this paper.  

Since $\|y - q\|_2 \leq \frac{1}{2} \delta \sqrt{m}$, the robust
recovery result guarantees that
\begin{equation}\label{err_bound_PCM}
\|x - x^\#_{\mathrm{PCM}}\|_2 \lesssim \delta \sqrt{m}.
\end{equation}
Note that \eqref{err_bound_PCM} is somewhat surprising as
the reconstruction error bound does not improve by increasing
the number of (quantized) measurements; on the contrary, it deteriorates.
However, the $\sqrt{m}$ term is
an artifact of our choice of normalization for the measurement matrix
$\Phi$. In the compressed sensing literature, it is conventional to
normalize a (random) measurement matrix $\Phi$ so that it has unit-norm
columns (in expectation). This is the necessary scaling to achieve 
isometry, and for random matrices it 
ensures that $\mathbb{E} \|\Phi x\|^2 = \|x\|^2$ for any
$x$, which then leads to the RIP through
concentration of measure and finally to the robust recovery result stated in
\eqref{ell1_eps_prog}. On the other hand, this normalization imposes
an $m$-dependent dynamic range for the measurements which scales as
$1/\sqrt{m}$, hence it is
not fair to use the same value $\delta$ for the quantizer resolution
as $m$ increases. In this
paper, we investigate the dependence of the recovery error on the
number of quantized measurements where $\delta$ is independent of $m$.
A fair assessment of this dependence can be made only if the
dynamic range of each measurement is kept constant while increasing
the number of measurements. This suggests that the natural
normalization in our setting should ensure that the entries of the
measurement matrix $\Phi$ are independent of $m$. In the specific case
of random matrices, we can achieve this by choosing the entries of
$\Phi$ standard i.i.d.~random variables, e.g. according to $\G(0,1)$. 
With this normalization of
$\Phi$, the robust recovery result of \cite{CRT}, given above, can be
modified as
\begin{equation} \label{ell1_eps_prog_alt}
\|\hat{y}-y\|_2 \le \epsilon\ \implies \|x-x^\#\|_2 \lesssim
\frac{1}{\sqrt{m}}\epsilon,
\end{equation}
which also replaces \eqref{err_bound_PCM} with
\begin{equation}\label{err_bound_PCM_alt}
\|x - x^\#_{\mathrm{PCM}}\|_2 \lesssim \delta.
\end{equation}
As expected, this error bound does not deteriorate with $m$ anymore. 
In this paper, we will adopt this normalization convention
and work with the standard Gaussian distribution $\G(0,1)$ 
when quantization is involved, but also use the more typical
normalization $\G(0,1/m)$ 
for certain concentration estimates that will be derived in 
Section \ref{main_proof}. 
The transition between these two conventions is of course trivial.

% In what follows, we adopt the compressed sensing convention when
% obtaining our estimates in Section~\ref{}normalize the Gaussian
% measurement matrix $\Phi$ so that it has unit column
% variance. However, we state our main theorem and other
% quantization-related results for measurement matrices whose entries
% are i.i.d. from $\G(0,1)$. Note that the transition from one setting
% to another is trivial, and our preference aims to minimize any
% misinterpretation of the corresponding results.

%In fact, it has only been recently that some focus has shifted to   
%the quantization of compressed sensing measurements and to the recovery of sparse signals from
% such quantized measurements. In particular, several papers, e.g.,
% \cite{BB,ZBC,JHF,dai901quantized,GoyalCSQ,laska2009democracy} have proposed various quantization and reconstruction schemes, providing performance
%guarantees for these schemes. Importantly, most such papers consider PCM and variants thereof, and in that sense their results suffer from a fundamental performance limitation, which we circumvent with the use of Sigma-Delta quantization and a two stage reconstruction scheme (see Theorem \ref{main_thm_2} and the discussion immediately following).

The above analysis of quantization error is based on PCM, which
involves separate (independent) quantization of each measurement.
The vast logarithmic reduction of the
ambient dimension $N$ would seem to suggest that this strategy is
essentially optimal since information appears to
be squeezed (compressed) into few uncorrelated measurements. Perhaps
for this reason, the existing literature on quantization of compressed
sensing measurements focused mainly on alternative reconstruction
methods from PCM-quantized measurements and variants thereof, e.g., \cite{BB,ZBC,JHF,dai901quantized,GoyalCSQ,laska2009democracy}. The only exception
we are aware of is \cite{boufounos2007sigma}, which uses $\Sigma\Delta$
modulation to quantize $x$ \emph{before} the random measurements are made.

On the other hand, it is clear that if (once) the support of the signal
is known (recovered), then the $m$ measurements that have been taken
are highly redundant compared to the maximum $k$ degrees of freedom
that the signal has on its support.  At this point, the signal may be
considered {\em oversampled}. 
However, the error bound \eqref{err_bound_PCM_alt} does not offer 
an improvement of reconstruction accuracy, even if additional samples 
become available. (The RIP parameters of $\Phi$ are likely to 
improve as $m$ increases, but this
does not seem to reflect on the implicit constant factor in
 \eqref{err_bound_PCM_alt} satisfactorily.)
This is contrary to the conventional wisdom
in the theory and practice of oversampled quantization in A/D conversion
where reconstruction error decreases as the sampling rate increases, especially
with the use of quantization algorithms specially geared for the
reconstruction procedure. The main goal of this paper is to show how this can 
be done in the compressed sensing setting as well. 

\subsection*{Quantization for oversampled data}
Methods of quantization have long been studied for oversampled data
conversion. Sigma-delta ($\Sigma\Delta$) quantization (modulation),
for instance, is the dominant method of A/D conversion for audio
signals and relies heavily on oversampling, see
\cite{NST-book,DD,G-exp}. In this setting, oversampling is typically
exploited to employ very coarse quantization (e.g., $1$ bit/sample),
however, the working principle of $\Sigma\Delta$ quantization is
applicable to any quantization alphabet. In fact, it is more
natural to consider $\Sigma\Delta$ quantization as a
``noise\footnote{The quantization error is often modeled as white
  noise in signal processing, hence the terminology. However our
  treatment of quantization error in this paper is entirely
  deterministic.} shaping'' method, for it seeks a quantized signal
$(q_j)$ by a recursive procedure to push the quantization error signal
$y-q$ towards an unoccupied portion of the signal spectrum.  In the
case of bandlimited signals, this would correspond to high frequency
bands.

As the canonical example, the standard first-order $\Sigma\Delta$
quantizer computes a bounded solution $(u_j)$ to the difference
equation
\begin{equation}
(\Delta u)_j := u_j - u_{j-1} = y_j - q_j.
\end{equation}
This can be achieved recursively by choosing, for example, 
\begin{equation}
q_j = \arg\min_{p \in \A} |u_{j-1} + y_j - p|.
\end{equation}
Since the reconstruction of oversampled bandlimited signals can be
achieved with a low-pass filter $\varphi$ that can also be arranged to
be well-localized in time, the reconstruction error $\varphi*(y-q) =
\Delta \varphi * u$ becomes small due to the smoothness of $\varphi$.
It turns out that, with this procedure, the reconstruction error is
reduced by a factor of the oversampling ratio $\lambda$, defined to be
the ratio of the actual sampling rate to the bandwidth of $\varphi$.

This principle can be iterated to set up higher-order $\Sigma\Delta$
quantization schemes. It is well-known that a reconstruction accuracy
of order $O(\lambda^{-r})$ can be achieved (in the supremum norm) if a
bounded solution to the equation $\Delta^r u = y - q$ can be found
\cite{DD} (here, $r\in \N$ is the order of the associated $\sd$
scheme). The boundedness of $u$ is important for practical
implementation, but it is also important for the error bound. The
implicit constant in this bound depends on $r$ as well as
$\|u\|_\infty$. Fine analyses of carefully designed schemes have shown
that optimizing the order can even yield exponential accuracy
$O(e^{-c\lambda})$ for fixed sized finite alphabets $\A$ (see
\cite{G-exp}), which is optimal apart from the value of the constant
$c$. For infinite alphabets, there is no theoretical lower bound for
the quantization error as $\lambda$ increases. (However almost all
practical coding schemes use some form of finite alphabet.)

The above formulation of noise-shaping for oversampled data conversion
generalizes naturally to the problem of quantization of arbitrary
frame expansions, e.g., \cite{BPY}. Specifically, we will consider
finite frames in $\R^k$.  Let $E$ be a full-rank $m \times k$ matrix
and $F$ be any left inverse of $E$.  In frame theory, one refers to
the collection of the rows of $E$ as the analysis frame and the
columns of $F$ as the synthesis (dual) frame. For any $x \in \R^k$,
let $y=Ex$ be its frame coefficient vector, $q \in \A^m$ be
its quantization, and let $\hat x:= Fq$ be its reconstruction using the
dual frame.  Typically $\A^m \cap y + \mathrm{Ker}(F) = \emptyset$, so
we have $\hat x \not= x$. The reconstruction error is given by
\begin{equation}\label{SD-err-1}
  x - \hat x = F(y - q),  
\end{equation}
and the goal of noise shaping amounts to arranging $q$ in such a way
that $y-q$ is close to $\mathrm{Ker}(F)$.

If the sequence $(f_j)_1^m$ of dual frame vectors were known to vary
smoothly in $j$ (including smooth termination into null vector), then
$\Sigma\Delta$ quantization could be employed without much alteration,
e.g., \cite{LPY,BPA}. However, this need not be the case for many
examples of frames (together with their canonical duals) that are used
in practice. For this reason, it has recently been proposed in
\cite{BLPY} to use special alternative dual frames, called Sobolev
dual frames, that are naturally adapted to $\Sigma\Delta$
quantization. It is shown in \cite{BLPY} (see also Section
\ref{sec2}) that for any frame $E$, if a standard $r$th order
$\Sigma\Delta$ quantization algorithm with alphabet $\A = \delta \Z$
is used to compute $q := q_{\mathrm{\Sigma\Delta}}$, then with an
$r$th order Sobolev dual frame $F := F_{\mathrm{Sob},r}$
and $\hat x_{\mathrm{\Sigma\Delta}} := F_{\mathrm{Sob},r} 
q_{\mathrm{\Sigma\Delta}}$, the
reconstruction error obeys the bound
\begin{equation}
\|x - \hat x_{\mathrm{\Sigma\Delta}} \|_2 \lesssim_r 
\frac{\delta\sqrt{m}}{\sigma_\mathrm{min}(D^{-r}E)} , 
\end{equation}
where $D$ is the $m\times m$ difference matrix defined by
\begin{equation}\label{def-D}
D_{ij} := \left \{ 
\begin{array}{rl}
1, & \mbox{if } i=j, \cr
-1, & \mbox{if } i=j+1, \cr
0, & \mbox{otherwise,}
\end{array}
\right.
\end{equation}
and $\sigma_{\min}(D^{-r}E)$ stands for the smallest singular value
of $D^{-r}E$.

\subsection*{Contributions}
For the compressed sensing application that is the subject of this
paper, $E$ will simply be a sub-matrix of the measurement matrix
$\Phi$, hence it may have been found by sampling an i.i.d.~random
variable.  Minimum singular values of random matrices with
i.i.d.~entries have been studied extensively in the mathematical
literature.  For an $m \times k$ random matrix $E$ with i.i.d.~entries
sampled from a sub-Gaussian distribution with zero mean and unit
variance,\footnote{ As mentioned earlier, 
we do not normalize the measurement matrix $\Phi$
in the quantization setting.}  one has
\begin{equation}
\sigma_{\min}(E) \geq \sqrt{m}-\sqrt{k}
\end{equation}
with high probability \cite{RV2009}. Note that in general $D^{-r}E$
would not have i.i.d.~entries.  A naive lower bound for
$\sigma_{\min}(D^{-r}E)$ would be
$\sigma_{\min}(D^{-r})\sigma_{\min}(E)$.  However (see Proposition
\ref{sing_val_Dr}), $\sigma_{\min}(D^{-r})$ satisfies
\begin{equation}
\sigma_{\min}(D^{-r}) \asymp_r 1,
\end{equation}
and therefore this naive product bound yields no improvement on the
reconstruction error for $\Sigma\Delta$-quantized measurements over
the bound \eqref{err_bound_PCM_alt} for PCM-quantized ones.  In fact,
the true behavior of $\sigma_{\min}(D^{-r}E)$ turns out to be
drastically different and is described in Theorem~\ref{main_thm_1},
one of our main results (see also Theorem \ref{bound_sigma_min}).

For simplicity, we shall work with standard i.i.d.~Gaussian variables
for the entries of $E$. In analogy with our earlier notation, we
define the ``oversampling ratio'' $\lambda$ of the frame $E$ by
\begin{equation}
\lambda := \frac{m}{k}.
\end{equation}
%We have the following result (see also Theorem \ref{bound_sigma_min}).
\begin{bigthm}\label{main_thm_1}
Let $E$ be an $m\times k$ random matrix whose entries are
i.i.d. $\G(0,1)$. For any $\alpha \in (0,1)$, if 
$\lambda \geq c (\log m)^{1/(1-\alpha)}$, then with probability at
least $1 - \exp(-c' m \lambda^{-\alpha})$,
\begin{equation}\label{our-sing-val-bound}
\sigma_{\min}(D^{-r}E) \gtrsim_r \lambda^{\alpha(r-\frac{1}{2})}\sqrt{m},
\end{equation}
which yields the reconstruction error bound
\begin{equation} \label{eq14}
\|x - \hat x_{\mathrm{\Sigma\Delta}} \|_2 \lesssim_r  \lambda^{-\alpha(r-\frac{1}{2})} \delta.
\end{equation}
\end{bigthm}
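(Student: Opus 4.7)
The plan is to diagonalize $D^{-r}$ via its SVD and exploit the rotational invariance of the Gaussian ensemble. Writing $D^{-r} = U\Sigma V^T$, the columns of $V^T E$ are independent $\G(0, V^T V) = \G(0, I_m)$ vectors, so $\tilde E := V^T E$ has the same joint distribution as $E$. Since $U$ is orthogonal and preserves singular values, this reduces the problem to bounding $\sigma_{\min}(\Sigma \tilde E)$, where $\Sigma = \mathrm{diag}(\sigma_1,\ldots,\sigma_m)$ lists the singular values of $D^{-r}$ in decreasing order and $\tilde E$ is i.i.d.\ $\G(0,1)$.

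To exploit the fact that the top singular values of $D^{-r}$ are much larger than the bottom ones, I would truncate to the top $\ell$ rows for some $k < \ell \leq m$ to be chosen. Denoting by $\tilde E_{(\ell)}$ the $\ell \times k$ leading block of $\tilde E$ and by $\Sigma_{(\ell)}$ the corresponding diagonal block, the elementary bound $\|\Sigma \tilde E x\|_2 \geq \|\Sigma_{(\ell)}\tilde E_{(\ell)}x\|_2 \geq \sigma_\ell \|\tilde E_{(\ell)}x\|_2$ yields $\sigma_{\min}(\Sigma \tilde E) \geq \sigma_\ell \cdot \sigma_{\min}(\tilde E_{(\ell)})$. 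Two ingredients are now needed: on the deterministic side, a refinement of Proposition \ref{sing_val_Dr} giving $\sigma_\ell(D^{-r}) \gtrsim_r (m/\ell)^r$ throughout the relevant range (the natural scaling since $D$ is a discrete differentiation whose smoothest modes get amplified by $(m/j)^r$), and on the random side, the Davidson--Szarek bound $\sigma_{\min}(\tilde E_{(\ell)}) \geq \sqrt{\ell} - \sqrt{k} - t$ with probability $\geq 1 - 2e^{-t^2/2}$.

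Setting $\ell := \lceil c_0\, m^{1-\alpha} k^\alpha \rceil$, a direct algebraic check gives $\sigma_\ell \sqrt{\ell} \asymp (m/\ell)^r \sqrt{\ell} = \sqrt{m}\, \lambda^{\alpha(r-1/2)}$, matching the target \eqref{our-sing-val-bound}. With this choice $\sqrt{k} \ll \sqrt{\ell}$ whenever $\lambda \gtrsim 1$, so Davidson--Szarek with $t \asymp \sqrt{m\lambda^{-\alpha}}$ (which is comparable to a small constant times $\sqrt{\ell}$) delivers exactly the exponential tail $\exp(-c'm\lambda^{-\alpha})$ claimed; the hypothesis $\lambda \geq c(\log m)^{1/(1-\alpha)}$ is what is needed to make this tail genuinely small (and, in the uniform-over-supports version, to absorb the union-bound factor $\binom{N}{k}$). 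Once the singular-value bound is in hand, the reconstruction error estimate \eqref{eq14} is an immediate substitution into the Sobolev-dual error inequality $\|x - \hat x_\sd\|_2 \lesssim_r \delta\sqrt{m}/\sigma_{\min}(D^{-r}E)$ recalled from \cite{BLPY}.

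The main obstacle I anticipate is the deterministic spectral bound on $\sigma_\ell(D^{-r})$: Proposition \ref{sing_val_Dr} as quoted controls only the smallest singular value, whereas the argument above needs sharp control of $\sigma_\ell$ for $\ell$ deep inside the spectrum. Because $D$ does not commute with $D^T$, the singular values of $D^{-r}$ are not simply $r$-th powers of those of $D^{-1}$, so one must analyze $(D^{-r})^T(D^{-r})$ directly, presumably via a Toeplitz/Fourier-symbol argument, to verify the expected $(m/j)^{2r}$ decay on the smooth modes. This is the step that ultimately determines the exponent $\alpha(r-\tfrac12)$ that appears in the theorem.
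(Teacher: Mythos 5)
Your argument is correct, but its probabilistic core is genuinely different from the paper's. You share the first step (reduce $\sigma_{\min}(D^{-r}E)$ to $\sigma_{\min}(\Sigma(D^{-r})\tilde E)$ by rotation invariance), but from there you truncate to the top $\ell \asymp m\lambda^{-\alpha}$ rows and invoke the Davidson--Szarek bound $\sigma_{\min}(\tilde E_{(\ell)}) \geq \sqrt{\ell}-\sqrt{k}-t$ with tail $e^{-t^2/2}$, which already contains the uniformity over the unit sphere of $\R^k$. The paper instead keeps all $m$ coordinates and proves everything by hand: a Chernoff-type lower-tail bound for the weighted chi-square $\sum_j s_j^2\xi_j^2$ in which only the product $s_1\cdots s_L$ of the top $L=\lfloor m/\Lambda\rfloor$ weights enters (Proposition \ref{crude1}, Corollary \ref{power1} -- morally the same truncation level as your $\ell$), combined with an $\varepsilon$-net over the sphere and an operator-norm bound $\|SE\|\lesssim \|s\|_\infty$ (Lemma \ref{lower}, Theorem \ref{upper}). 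Your route is shorter, needs no upper bound on the $\sigma_j(D^{-r})$, and in fact shows that for a single frame $E$ the hypothesis $\lambda \gtrsim (\log m)^{1/(1-\alpha)}$ is not really needed (only $\lambda \geq C_r$); in the paper that log condition is an artifact of beating the $5^k$-type net cardinalities, and it reappears unavoidably (with $\log N$) only in the union bound over supports in Theorem \ref{sing_val_for_CS}. What the paper's more hands-on route buys is self-containedness (no appeal to Gordon/Davidson--Szarek) and a statement for an arbitrary diagonal $S$ that is then recycled verbatim for the compressed sensing union bound. Finally, the ``main obstacle'' you anticipate is not actually a gap: Proposition \ref{sing_val_Dr} as stated in the paper bounds $\sigma_j(D^{-r})\asymp_r (m/j)^r$ for \emph{all} $j=1,\dots,m$ (not just $j=m$), proved in Appendix \ref{sec_sing_val_D} exactly along the lines you sketch, via the observation that ${D^*}^rD^r-(D^*D)^r$ has rank at most $2r$ plus Weyl's interlacing inequality applied to the explicit singular values of $D$; so the deterministic input $\sigma_\ell(D^{-r})\gtrsim_r(m/\ell)^r$ you need is available, and your proof goes through.
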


While the kind of decay in 
this error bound is familiar to $\Sigma\Delta$ modulation, the 
domain of applicability of this result is rather 
surprising. Previously, the only
setting in which this type of approximation accuracy could be achieved 
(with or without Sobolev duals)
was the case of highly structured frames (e.g. when
the frame vectors are found by sampling along 
a piecewise smooth frame path). 
Theorem~\ref{main_thm_1} shows that such an
accuracy is obtained even when the analysis frame is a random Gaussian
matrix, provided the reconstruction is done via Sobolev duals.

\ignore{This lower bound for the smallest
singular value of $D^{-r}E$ is bigger than the smallest singular value of
$E$ by a factor of $\lambda^{\alpha(r-\frac{1}{2})}$. 
It is perhaps interesting to note that this factor is 
roughly about the same size as the 
$k$th {\em largest} singular value of $D^{-r}$, i.e.,
\begin{equation}
\sigma_{\min}(D^{-r}E) \gtrsim_r \sigma_k(D^{-r}) \sigma_{\min}(E).
\end{equation}
}

In the compressed sensing setting, one needs
\eqref{our-sing-val-bound} to be uniform for all the frames $E$ that
are found by selecting $k$ columns of $\Phi$ at a time. 
The proof of Theorem \ref{main_thm_1} extends
in a straightforward manner using a
standard ``union bound'' argument, provided $\lambda$ is known to be
slightly larger. More precisely, if $\Phi$ is an $m \times N$ matrix
whose entries are i.i.d.~according to $\G(0,1)$, and if $\lambda:=m/k
\geq c (\log N)^{1/(1-\alpha)}$, then \eqref{our-sing-val-bound} holds
for all $E = \Phi_T$ with $\#T \leq k$ with the same type of
probability bound (with new constants). This result can be utilized to
improve the reconstruction accuracy of a sparse signal $x$ from its
$\sd$-quantized compressed sensing measurements if the support $T$ of
$x$ is known. This is because if $T$ is known, $\Phi_T$ is known, and
its Sobolev dual can be found and used in the reconstruction. On the
other hand, for most signals, recovering the exact or approximate
support is already nearly guaranteed by the robust recovery result
shown in \eqref{ell1_eps_prog_alt} together with the stability of the
associated $\sd$ quantizer. For example, a simple sufficient condition
for full recovery of the support is that all the $|x_j|$ for $j \in T$
be larger than $C\|y - q_{\mathrm{\Sigma\Delta}}\|_2$ for a suitable
constant $C$. A precise version of this condition is stated in
Theorem~\ref{main_thm_2}.

In light of all these results, we propose $\Sigma\Delta$ quantization
as a more effective alternative of PCM (independent quantization) for
compressed sensing. With high probability on the measurement matrix, a
significant improvement of the reconstruction accuracy of sparse
signals can be achieved through a two-stage recovery procedure:
\begin{enumerate}
\item {\bf Coarse recovery:} $\ell_1$-minimization (or any other robust
recovery procedure) applied to $q_{\mathrm{\Sigma\Delta}}$
yields an initial, ``coarse'' approximation $x^\#$ of $x$, 
and in particular, the exact (or approximate) support $T$ of $x$.
\item {\bf Fine recovery:} 
Sobolev dual of the frame $\Phi_T$ applied to $q_{\mathrm{\Sigma\Delta}}$
yields a finer approximation $\hat x_{\mathrm{\Sigma\Delta}}$ of $x$.
\end{enumerate}

Combining all these, our second main theorem follows (also 
see Theorem~\ref{main_thm_2_1}):

\begin{bigthm}\label{main_thm_2} Let $\Phi$ be an $m \times N$ matrix whose entries are
  i.i.d.~according to $\G(0,1)$. Suppose $\alpha \in (0,1)$ and
  $\lambda:=m/k \geq c (\log N)^{1/(1-\alpha)}$ where
  $c=c(r,\alpha)$. Then
  there are two constants $c'$ and $C$ that depend only on $r$ such that 
  with probability at least $1 - \exp(-c' m \lambda^{-\alpha})$ on the 
  draw of $\Phi$, the
  following holds: For every $x\in \Sigma^N_k$
  such that $\min_{j\in \mathrm{supp}(x)} |x_j| \ge C \delta$,
  the reconstruction $\hat{x}_{\sd}$ satisfies
\begin{equation}\label{sd_error_main}
\|x - \hat{x}_{\sd}\|_2 \lesssim_r  \lambda^{-\alpha(r-\frac{1}{2})}
\delta.
\end{equation} 
\end{bigthm}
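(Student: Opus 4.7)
The plan is to combine Theorem~\ref{main_thm_1} with a standard coarse-recovery step via $\ell_{1}$-minimization, following the two-stage procedure advertised right before the theorem. I would structure the argument as (i) a uniform version of Theorem~\ref{main_thm_1} over all $k$-column submatrices of $\Phi$, (ii) exact support identification from the $\sd$-quantized measurements, and (iii) reconstruction via the Sobolev dual of $\Phi_{T}$ on the recovered support.

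\textbf{Step 1 (uniform singular-value bound).} For each $T\subset\{1,\dots,N\}$ with $|T|\le k$, the submatrix $\Phi_{T}$ is an $m\times|T|$ matrix with i.i.d.\ $\G(0,1)$ entries, so Theorem~\ref{main_thm_1} gives $\sigma_{\min}(D^{-r}\Phi_{T})\gtrsim_{r}\lambda^{\alpha(r-1/2)}\sqrt{m}$ with failure probability at most $\exp(-c'm\lambda^{-\alpha})$. Since $\binom{N}{k}\le\exp(k\log(eN/k))\le\exp(k\log N)$, a union bound over all such $T$ produces an additional factor $\exp(k\log N)$. Because $m\lambda^{-\alpha}=m^{1-\alpha}k^{\alpha}$, the inequality $m\lambda^{-\alpha}\gtrsim k\log N$ is equivalent to $\lambda^{1-\alpha}\gtrsim\log N$, which is precisely the hypothesis $\lambda\ge c(\log N)^{1/(1-\alpha)}$ for a suitable $c=c(r,\alpha)$. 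Thus the event $\mathcal{E}_{\text{SV}}$ that the singular-value bound holds simultaneously for every $T$ with $|T|\le k$ occurs with probability at least $1-\exp(-c''m\lambda^{-\alpha})$, with a slightly adjusted constant.

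\textbf{Step 2 (support recovery).} I would restrict to $\mathcal{E}_{\text{SV}}$ and intersect with the (high-probability) event that $\Phi/\sqrt{m}$ satisfies the RIP at sparsity $2k$ with a good constant; the hypothesis on $\lambda$ again suffices. Stability of any fixed $r$th-order $\sd$ scheme gives a state variable bound $\|u\|_{\infty}\lesssim_{r}\delta$ (for inputs of bounded amplitude, which holds up to a negligible probability correction by Gaussian tail estimates on $\|\Phi x\|_{\infty}$), hence $\|y-q_{\sd}\|_{2}\le C_{r}\delta\sqrt{m}$. Applying the robust $\ell_{1}$-recovery bound \eqref{ell1_eps_prog_alt} then yields a coarse estimate $x^{\#}$ with $\|x-x^{\#}\|_{2}\le C'_{r}\delta$. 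Choosing the constant $C\ge 2C'_{r}$ in the magnitude condition and hard-thresholding $x^{\#}$ at level $C\delta/2$ recovers the exact support $T=\mathrm{supp}(x)$.

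\textbf{Step 3 (fine recovery) and main obstacle.} With $T$ identified exactly and $y=\Phi_{T}x_{T}$, I would form the $r$th-order Sobolev dual $F_{\text{Sob},r}$ of $\Phi_{T}$ and set $\hat{x}_{\sd}$ equal to $F_{\text{Sob},r}q_{\sd}$ on $T$ and zero off $T$. The error identity \eqref{SD-err-1}, combined with the Sobolev-dual estimate $\|x_{T}-F_{\text{Sob},r}q_{\sd}\|_{2}\lesssim_{r}\delta\sqrt{m}/\sigma_{\min}(D^{-r}\Phi_{T})$ recalled from Section~\ref{sec2} and the uniform lower bound from Step~1, yields $\|x-\hat{x}_{\sd}\|_{2}\lesssim_{r}\lambda^{-\alpha(r-1/2)}\delta$, as claimed. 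The real work lives entirely in Theorem~\ref{main_thm_1}; once that bound is available, the main obstacle here is a bookkeeping one, namely to verify that the union bound in Step~1, the RIP event used in Step~2, and the magnitude condition in Step~2 are all simultaneously consistent with a single probability budget of order $\exp(-c'm\lambda^{-\alpha})$, and that support identification can be made \emph{exact} rather than approximate, which is exactly what the hypothesis $\min_{j\in T}|x_{j}|\ge C\delta$ provides.
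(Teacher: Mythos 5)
Your proposal is correct and follows essentially the same route as the paper: a union bound extending Theorem~\ref{main_thm_1} to all $k$-column submatrices (the paper's Theorem~\ref{sing_val_for_CS}), coarse $\ell_1$ recovery with the stability bound $\|y-q_\sd\|_2 \le 2^{r-1}\delta\sqrt{m}$ to identify the support exactly under the size condition (the paper's Proposition~\ref{supp_recov} with $\gamma=\sqrt{2}$), and then the Sobolev-dual bound \eqref{err-bound-Sob} with $\|u\|_2\le 2^{-1}\delta\sqrt{m}$. The only cosmetic difference is your probabilistic correction for $\|\Phi x\|_\infty$ in Step~2, which is unnecessary here since the greedy rule with the infinite alphabet $\A=\delta\Z$ gives $\|u\|_\infty\le\delta/2$ deterministically for every input.
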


% then \eqref{our-sing-val-bound}
%   holds for all $E=\Phi_T$ with $\#T \leq k$ with probability at least
%   $1 - \exp(-c' m \lambda^{-\alpha})$.
% \item[(ii)]
% In this case, there is a constant $C$ that only depends on the
%   RIP constants of $\Phi$ such that for every $x\in \Sigma_k$ with
%   support $T$ for which $\min_{j\in T} |x_j| \ge C \delta$,
% \begin{equation}\label{sd_error_main}
% \|x - \hat{x}_{\sd}\|_2 \lesssim_r  \lambda^{-\alpha(r-\frac{1}{2})}
% \delta
% \end{equation}
% \end{enumerate}
% \end{bigthm}

To put this result in perspective, note that the approximation error
given in \eqref{sd_error_main} decays as the ``redundancy''
$\lambda=\frac{m}{k}$ increases. In fact, by using an arbitrarily
high order
$\sd$ scheme, we can make this decay faster than any power law 
(albeit with higher constants). Note that such a decay is
not observed in the reconstruction error bound for PCM given
in \eqref{err_bound_PCM_alt}. Of course, one could argue that these
upper bounds may not reflect the actual behavior of the
error. However, in the setting of frame quantization the performance
of PCM is well investigated. In particular, let $E$ be an $m\times k$
real matrix, and let $K$ be a bounded set in $\R^k$. For $x\in K$,
suppose we obtain $q_{\text{PCM}}(x)$ by quantizing the entries
of $y=Ex$ using PCM with alphabet $\A=\delta \Z$. Let
$\Delta_{\text{opt}}$ be an optimal decoder.
%that
%maps $q_{\text{PCM}}(x)$ to the centroid of the set of all $z$ such
%that $Q_{\text{PCM}}
Then, Goyal et al. show in \cite{GVT} that 
$$
\left[\mathbb{E} \left\|x-\Delta_{\text{opt}}(q_{\text{PCM}}(x))\right\|^2_2\right]^{1/2}
\gtrsim \lambda^{-1} \delta
$$
where $\lambda=m/k$ and the expectation is with respect a probability
measure on $x$ that is, for example, absolutely continuous.  This lower bound
limits the extent to which one can improve the reconstruction by means
of alternative reconstruction algorithms from PCM-quantized compressed
sensing measurements. On the other hand, setting, for example,
$\alpha=3/4$ in Theorem~\ref{main_thm_2} we observe that if we use a
second-order $\sd$ scheme to quantize the measurements, and if we
adopt the two-stage recovery procedure proposed above, the resulting
approximation will be superior to that produced optimally from
PCM-quantized measurements, provided $m/k$ is sufficiently large.

It is possible to imagine more sophisticated and more effective quantization 
and recovery algorithms for compressed sensing. However
using $\Sigma\Delta$ quantization has a number of appealing features:

\begin{itemize}
\item It produces {\bf more accurate} approximations than any known
  quantization scheme in this setting (even when sophisticated
  recovery algorithms are employed).
\item It is {\bf modular} in the sense that if the fine recovery stage is not available or practical to implement, then the standard (coarse) recovery procedure can still be applied as is. 
\item It is {\bf progressive} in the sense that if new measurements arrive
(in any given order), noise shaping can be continued on these measurements 
as long as the state of the system ($r$ real values for an $r$th order scheme)
has been stored. 
\item It is {\bf universal} in the sense that
it uses no information about the measurement matrix or the signal.
\end{itemize}

The paper is organized as follows.
We review the basics of $\Sigma\Delta$ quantization
and Sobolev duals in frame theory in Section \ref{sec2}, followed by 
the reconstruction 
error bounds for random Gaussian frames in Section \ref{main_proof}.
We then present the 
specifics of our proposed quantization and recovery algorithm for 
compressed sensing in Section \ref{SD_CS}. We present our
numerical experiments in Section \ref{numerics} and conclude with extensions
to more general settings in Section \ref{extensions}.

\section{Background on $\Sigma\Delta$ quantization of frame expansions}
\label{sec2}

\subsection*{$\Sigma\Delta$ quantization}

The governing 
equation of a standard $r$th order $\Sigma\Delta$ quantization scheme 
with input $y = (y_j)$ and output $q = (q_j)$ is
\begin{equation}\label{r-th-SD-eq}
(\Delta^ru)_j = y_j - q_j,~~~j=1,2,\dots,
\end{equation}
where the $q_j \in \A$ are chosen according to some quantization rule
given by
\begin{equation}
q_j = Q(u_{j-1},\dots,u_{j-T},y_j,\dots,y_{j-S}).
\end{equation}
Not all $\Sigma\Delta$ quantization schemes are presented (or
implemented) in this canonical form, but they all can be rewritten as
such for an appropriate choice of $r$ and $u$. We shall not be
concerned with the specifics of the mapping $Q$, except that we need
$u$ to be bounded. The smaller the size of the alphabet $\A$ gets
relative to $r$, the harder it is to guarantee this property. The
extreme case is $1$-bit quantization, i.e., $|\A|=2$, which is
typically the most challenging setting. We will not be working in this
case. In fact, for our purposes, $\A$ will in general have to be
sufficiently fine to allow for the recovery of the support of sparse
signals. In order to avoid technical difficulties, we shall work with
the infinite alphabet $\A = \delta \Z$, but also note that only a
finite portion of this alphabet will be used for bounded signals.  A
standard quantization rule that has this ``boundedness'' property is
given by the greedy rule which minimizes $|u_j|$ given
$u_{j-1},\dots,u_{j-r}$ and $y_j$, i.e.,
\begin{equation}\label{greedy-quant}
q_j = \arg\min_{a \in \A} \Big |\sum_{i=1}^r (-1)^{i-1} 
{r \choose i} u_{j-i} + y_j -  a \Big|.
\end{equation}
It is easy to check that with this rule, one has $|u_j| \leq 2^{-1} \delta$
and $|y_j-q_j| \leq 2^{r-1} \delta$. In turn, if $\|y\|_\infty < C$,
then one needs only $L:=2 \lceil \frac{C}{\delta}\rceil +2^r + 1$
levels. In this case, the associated quantizer is said to be $\log_2
L$-bit, and we have
% The second bound is generally 
% referred to as the quantizer being $r$-bit in the sense that 
% $2^r+1$ levels are necessary and sufficient for $\A$
% if $\|y\|_\infty \leq \delta$.
%Hence, this quantization rule guarantees that 
\begin{equation}
\|u\|_\infty \lesssim \delta \mbox{ and } \|y-q\|_\infty \lesssim_r \delta.
\end{equation}
With more stringent quantization rules, the first inequality would also
have an $r$-dependent constant. In fact, it is known that 
for quantization rules with a $1$-bit alphabet, 
this constant will be as large as $O(r^r)$, e.g., see \cite{DD,G-exp}.
In this paper, unless otherwise stated, we shall be working with the 
greedy quantization rule of \eqref{greedy-quant}.

The initial condition of the recursion in \eqref{r-th-SD-eq} can be
set arbitrarily, but it will be convenient for us to set them equal to
zero for finite frames.  With $u_{-r+1} = \cdots = u_0 = 0$, and
$j=1,\dots,m$, the difference equation \eqref{r-th-SD-eq} can be
rewritten as a matrix equation
\begin{equation}\label{r-th-SD-matrix}
D^r  u = y - q,
\end{equation}
where $D$ is as in \eqref{def-D}.
\ignore{
More specifically, we have 
\begin{equation}
D = \begin{bmatrix} 
1 & 0     & 0      & 0      & \cdots & 0 \\
-1 & 1      & 0     & 0      & \cdots & 0\\ 
0 & -1      & 1     & 0      & \cdots & 0\\ 
 & \ddots & \ddots & \ddots &   \ddots     &  \\ 
0  &		0		 &  \cdots      &    -1    &     1   &  0 \\
0  &		0		 &  \cdots      &    0    &     -1   &  1
\end{bmatrix}
\end{equation}
}

As before, we assume $E$ is an $m\times k$ matrix whose rows form the
analysis frame and $F$ is a $k\times m$ left inverse of $E$
whose columns form the dual (synthesis) frame. 
Given any $x \in \R^k$, we set $y = Ex$, and define its
$r$th order $\Sigma\Delta$ quantization $q_{\mathrm{\Sigma\Delta}}$
and its reconstruction $\hat x_{\mathrm{\Sigma\Delta}} := 
Fq_{\mathrm{\Sigma\Delta}}$. 
Substituting \eqref{r-th-SD-matrix} into \eqref{SD-err-1},
we obtain the error expression
\begin{equation}\label{SD-err}
x - \hat x = FD^r u.
\end{equation}
With this expression, $\|x -\hat x\|$ can be bounded for any norm
$\| \cdot \|$ simply as 
\begin{equation}\label{sup-err-bound}
\|x - \hat x \| \leq \|u\|_\infty \sum_{j=1}^m \| (FD^r)_j \|.
\end{equation}
Here $(FD^r)_j$ is the $j$th column of $FD^r$. This bound is also valid
in infinite dimensions, and in fact has been used extensively
in the mathematical treatment of oversampled A/D conversion of bandlimited
functions.
\ignore{For example, for $\| \cdot \| = \| \cdot \|_\infty$ and 
$f_j = \tau \varphi(\cdot - j\tau)$, $j \in \Z$,
one can bound the (infinite) sum on the right hand side of 
\eqref{sup-err-bound} by 
$\tau^{r} \|\varphi^{(r)} \|_1$.}

For $r=1$, and the $\ell_2$ norm, the sum term on the right hand side
motivated the study of the so-called {\em frame variation} defined by
\begin{equation}
V(F) := \sum_{j=1}^{m} \|f_j-f_{j+1}\|_2,
\end{equation}
where $(f_j)$ are the columns of $F$, and one defines $f_{m+1} = 0$.
Higher-order frame variations 
to be used with higher-order $\Sigma\Delta$ schemes are defined
similarly, see \cite{BPY,BPY2}.
Frames (analysis as well as synthesis) 
that are obtained via uniform sampling a smooth curve in $\R^k$
(so-called {\em frame path}) are typical in many settings. However,
the ``frame variation bound'' is useful in finite dimensions when
the frame path terminates smoothly. Otherwise, it does not provide
higher-order reconstruction accuracy. Designing smoothly
terminating frames can be technically challenging, e.g., \cite{BPA}.

\subsection*{Sobolev duals}
Recently, a more straightforward approach was proposed in \cite{LPY}
for the design of (alternate) duals of finite frames for
$\Sigma\Delta$ quantization.  Here, one instead considers the operator
norm of $FD^r$ on $\ell_2$ and the corresponding bound
\begin{equation}\label{SD-op-bound}
\|x - \hat x \|_2 \leq \|FD^r \|_\mathrm{op} \|u\|_2.
\end{equation}
Note that this bound is not available in the infinite dimensional setting of 
bandlimited functions due to the fact that $u$ is typically not in $\ell_2$. 
It is now natural to minimize
$\|FD^r \|_\mathrm{op}$ over all dual frames of a given analysis frame $E$.
These frames, introduced in \cite{BLPY}, have been called Sobolev duals, in analogy with $\ell_2$-type
Sobolev (semi)norms.

$\Sigma\Delta$ quantization algorithms are normally designed 
for analog circuit operation, so they
control $\|u\|_\infty$, which would control $\|u\|_2$ only in a
suboptimal way. However, it turns out that there are important
advantages in working with the $\ell_2$ norm in the analysis.
The first advantage is that
Sobolev duals are readily available by an explicit formula. The
solution $F_{\mathrm{sob},r}$ of the optimization problem
\begin{equation}\label{opt-frame-r}
\min_{F} \|FD^r\|_{\mathrm{op}} \mbox{ subject to } FE = I
\end{equation}
is given by the matrix equation
\begin{equation}\label{sobolev-formula}
F_{\mathrm{sob},r} D^r = (D^{-r}E)^\dagger,
\end{equation}
where $^\dagger$ stands for the Moore-Penrose inversion operator,
which, in our case, is given
by $E^\dagger := (E^* E)^{-1}E^*$. Note that for $r=0$ (i.e., no
noise-shaping, or PCM), one simply obtains $F = E^\dagger$, the
canonical dual frame of $E$.

The second advantage of this approach is that highly developed methods
are present for spectral norms of matrices, especially in the random
setting.  Plugging \eqref{sobolev-formula} into \eqref{SD-op-bound},
it immediately follows that
\begin{equation}\label{err-bound-Sob}
\|x - \hat x\|_2 \leq \|(D^{-r}E)^\dagger\|_{\mathrm{op}} \|u\|_2
= \frac{1}{\sigma_{\min}(D^{-r}E)}\|u\|_2,
\end{equation}
where $\sigma_{\min}(D^{-r}E)$ stands for the smallest singular value
of $D^{-r}E$.

\section{Reconstruction error bound for random frames}
\label{main_proof}

In what follows, $\sigma_j(A)$ will denote the $j$th largest singular
value of the matrix $A$. Similarly, $\lambda_j(B)$ will denote the 
$j$th largest eigenvalue of the Hermitian matrix $B$. Hence, we have
$\sigma_j(A) = \sqrt{\lambda_j(A^*A)}$. We will also use the notation
$\Sigma(A)$ for the diagonal matrix of singular values of $A$, with 
the convention $(\Sigma(A))_{jj} = \sigma_j(A)$. All matrices
in our discussion will be real valued and the Hermitian conjugate
reduces to the transpose.

We have seen that the main object of interest for the reconstruction error 
bound is $\sigma_{\min}(D^{-r}E)$ for a random frame $E$. 
Let $H$ be a square matrix.
The first observation we make is that when $E$ is
i.i.d.~Gaussian, the distribution of $\Sigma(HE)$
is the same as the distribution of $\Sigma(\Sigma(H)E)$. To see this,
let $U \Sigma(H) V^*$ be the singular value decomposition of 
$H$ where $U$ and $V$ are unitary matrices. Then $HE = U \Sigma(H) V^* E$. 
Since the unitary transformation $U$ does not alter singular values, 
we have $\Sigma(HE) = \Sigma(\Sigma(H) V^* E)$, and because of the unitary 
invariance of the i.i.d.~Gaussian measure, the matrix $\tilde E
:= V^*E$ has the same distribution as $E$, hence the claim. 
Therefore it suffices to study the singular values of $\Sigma(H)E$.
In our case, $H = D^{-r}$ and we first need information
on the deterministic object $\Sigma(D^{-r})$.
The following result will be sufficient for our purposes:

\begin{prop}\label{sing_val_Dr}
Let $r$ be any positive integer and $D$ be as in \eqref{def-D}. There are
positive numerical constants $c_1(r)$ and $c_2(r)$, independent of $m$, 
such that 
\begin{equation}\label{sing_val_Dr_bounds}
c_1(r) \Big(\frac{m}{j}\Big)^r \leq \sigma_j(D^{-r}) \leq 
c_2(r) \Big(\frac{m}{j}\Big)^r, ~~j=1,\dots,m.
\end{equation}
\end{prop}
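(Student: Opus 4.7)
My plan is to reduce both bounds to the explicit spectrum of the tridiagonal matrix $D^*D$, whose eigenvalues $\mu_\ell = 4\sin^2\bigl((2\ell-1)\pi/(2(2m+1))\bigr)$ and sine eigenvectors $v^{(\ell)}_i = \sin((2\ell-1)\pi i/(2m+1))$ are standard. This gives the $r=1$ instance of the proposition immediately: $\sigma_j(D^{-1}) = 1/(2\sin((2j-1)\pi/(2(2m+1)))) \asymp m/j$ uniformly in $m$. Since $D$ is invertible, $\sigma_j(D^{-r}) = 1/\sigma_{m-j+1}(D^r)$, and I handle the two bounds separately.

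For the \emph{upper bound}, I iterate Horn's singular-value inequality $\sigma_{i+k-1}(AB) \le \sigma_i(A)\sigma_k(B)$ on the factorization $D^{-r} = (D^{-1})^r$, obtaining $\sigma_{rs-(r-1)}(D^{-r}) \le \sigma_s(D^{-1})^r$. Choosing $s = \lceil (j+r-1)/r \rceil$ and substituting $\sigma_s(D^{-1}) \lesssim m/s$ yields $\sigma_j(D^{-r}) \lesssim_r (m/j)^r$ with $c_2(r) = O(r^r)$.

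For the \emph{lower bound}, the same inequality in the reverse direction is too weak (the obvious chain $\sigma_j(D^{-r}) \ge \sigma_j(D^{-1})\sigma_{\min}(D^{-1})^{r-1}$ only yields $\gtrsim m/j$, losing a factor $(m/j)^{r-1}$), so I analyze the symmetric matrix $(D^*)^r D^r$ directly. A short calculation using the Chu-Vandermonde identity $\sum_s \binom{r}{s+p}\binom{r}{s} = \binom{2r}{r+p}$ shows that in the bulk its entries form the banded symmetric Toeplitz pattern $((D^*)^r D^r)_{ij} = (-1)^{i-j}\binom{2r}{r+|i-j|}$ for $|i-j| \le r$, corresponding to the nonnegative Fourier symbol $|1-e^{i\theta}|^{2r} = (2\sin(\theta/2))^{2r}$. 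Let $C$ be the $m \times m$ circulant matrix with that symbol; its eigenvalues are $(2\sin(\pi k/m))^{2r}$ for $k = 0, \ldots, m-1$, whose $p$-th smallest is of order $(p/m)^{2r}$ for $p \ll m$. The only way $(D^*)^r D^r$ differs from $C$ is through (i) truncation of the last $r$ columns of $D^r$ at the right boundary, and (ii) the circulant wrap-around in the top-right and bottom-left corners; each effect is supported in $O(r)$ rows/columns, so the perturbation $(D^*)^r D^r - C$ has rank at most some $c(r) = O(r)$ independent of $m$. Weyl's low-rank eigenvalue interlacing for Hermitian matrices then gives $\lambda_{m-j+1}((D^*)^r D^r) \le \lambda_{m-j+1-c(r)}(C) \lesssim_r (j/m)^{2r}$ for $j > c(r)$, and taking reciprocals and square roots delivers $\sigma_j(D^{-r}) \gtrsim_r (m/j)^r$ on that range; the finitely many remaining indices $j \le c(r)$ are absorbed by monotonicity $\sigma_j \ge \sigma_{c(r)+1} \gtrsim_r (m/(c(r)+1))^r$ into a slightly smaller $c_1(r)$.

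The principal obstacle is the circulant/Toeplitz bookkeeping: verifying the Chu-Vandermonde identity that pins down the bulk symbol, and in particular ensuring that the rank of $(D^*)^r D^r - C$ is bounded by a quantity depending only on $r$. Once this rank bound is $m$-independent, Weyl's interlacing introduces only an $O(r)$ index shift, and the resulting constants $c_1(r), c_2(r)$ depend only on $r$ as required.
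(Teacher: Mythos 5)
Your argument is sound and lands on the same core mechanism as the paper — a rank-$O(r)$ perturbation of an explicitly diagonalizable comparison matrix, followed by Weyl's interlacing theorem — but the details are genuinely different. The paper compares ${D^*}^rD^r$ with $(D^*D)^r$, whose eigenvalues are the $2r$-th powers of the known singular values $\sigma_j(D)=2\cos\bigl(\pi j/(2m+1)\bigr)$, and proves the rank bound $\mathrm{rank}\bigl({D^*}^rD^r-(D^*D)^r\bigr)\le 2r$ by an algebraic argument with ``banded'' and ``cornered'' matrices built on the fact that $[D^*,D]$ has only two nonzero entries; one application of Weyl then yields \emph{both} sides of \eqref{sing_val_Dr_bounds} simultaneously. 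You instead compare with the circulant having symbol $|1-e^{i\theta}|^{2r}$, verifying the bulk Toeplitz entries $(-1)^{i-j}\binom{2r}{r+|i-j|}$ by Chu--Vandermonde and noting that the truncation and wrap-around discrepancies live in $O(r)$ rows, which is equally valid and arguably makes the limiting symbol more visible; the price is that this route only gives the lower bound, so you need the separate Horn-inequality argument $\sigma_{rs-(r-1)}(D^{-r})\le\sigma_s(D^{-1})^r$ for the upper bound, with a worse (but still admissible, $r$-dependent) constant $O(r^r)$, whereas the paper's single interlacing gives a cleaner two-sided statement. Two small bookkeeping points to fix: in the upper bound you must take $s=\lfloor (j+r-1)/r\rfloor$ rather than the ceiling, since otherwise $rs-(r-1)$ can exceed $j$ and the monotonicity step reverses; and in the lower bound the Weyl shift $\lambda_{m-j+1}({D^*}^rD^r)\le\lambda_{m-j+1-c(r)}(C)$ requires $m-j+1-c(r)\ge 1$, so the exceptional indices are $j>m-c(r)$ (not $j\le c(r)$, where the shifted circulant eigenvalue is still $\lesssim_r (j/m)^{2r}$); those boundary indices are handled trivially by $\sigma_j(D^{-r})\ge\|D^r\|_{\mathrm{op}}^{-1}\ge 2^{-r}$ together with $m/j\le 2$ there, exactly as the paper does at the edges of its interlacing range.
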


The proof of this result is rather standard in the study of 
Toeplitz matrices, and is 
given in Appendix \ref{sec_sing_val_D}.

\subsection{Lower bound for $\sigma_{\min}(D^{-r}E)$}
\label{lower_bound}

In light of the above discussion, 
the distribution of $\sigma_{\min}(D^{-r}E)$
is the same as that of 
\begin{equation}
\inf_{\|x\|_2 = 1} \|\Sigma(D^{-r})E x\|_2.
\end{equation} 
We replace $\Sigma(D^{-r})$ with an arbitrary diagonal matrix
$S$ with $S_{jj} =: s_j > 0$. The first two results will concern upper
bounds for the norm of independent but 
non-identically distributed Gaussian vectors. They are rather standard,
but we include them for the definiteness of our discussion when they will
be used later.

\begin{prop} \label{power2} 
Let $\xi \sim \G(0,\frac{1}{m}\mathrm{I}_m)$.
For any $\Theta  > 1$, 
\begin{equation}
\mathbb{P} \left( \sum_{j=1}^m s^2_j \xi_j^2 > \Theta  \|s\|^2_\infty \right)
\leq \Theta ^{m/2} e^{-(\Theta -1)m/2}.
\end{equation}
\end{prop}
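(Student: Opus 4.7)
The plan is a standard Chernoff/Cram\'er bound applied to the random variable $X := \sum_{j=1}^m s_j^2 \xi_j^2$. I would start from Markov's inequality in the form $\mathbb{P}(X > u) \leq e^{-tu}\,\mathbb{E}[e^{tX}]$ for $t>0$, choose $u = \Theta\|s\|_\infty^2$, and optimize over $t$ at the end.

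The first substantive step is to compute (and then bound) the moment generating function. Since the coordinates $\xi_j$ of $\xi \sim \G(0,\tfrac{1}{m}\mathrm{I}_m)$ are independent and $m\xi_j^2$ is a $\chi^2_1$ variable with MGF $(1-2\tau)^{-1/2}$ valid for $\tau<1/2$, independence gives
\begin{equation*}
\mathbb{E}[e^{tX}] \;=\; \prod_{j=1}^m \mathbb{E}\bigl[e^{t s_j^2 \xi_j^2}\bigr] \;=\; \prod_{j=1}^m \Bigl(1 - \tfrac{2t s_j^2}{m}\Bigr)^{-1/2},
\end{equation*}
valid provided $t < m/(2\|s\|_\infty^2)$. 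Because $x\mapsto (1-x)^{-1/2}$ is increasing on $(0,1)$, each factor can be bounded using $s_j^2 \le \|s\|_\infty^2$, yielding the clean bound $\mathbb{E}[e^{tX}] \le (1 - 2t\|s\|_\infty^2/m)^{-m/2}$. This monotone majorization is the only place the $\ell^\infty$ norm enters; every other step is insensitive to the individual $s_j$'s.

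The second step is to reparameterize and optimize. Setting $\tau := 2t\|s\|_\infty^2/m \in (0,1)$, so that $tu = \Theta m\tau/2$, Markov gives
\begin{equation*}
\mathbb{P}\bigl(X > \Theta\|s\|_\infty^2\bigr) \;\le\; \inf_{\tau\in(0,1)}\; (1-\tau)^{-m/2}\, e^{-\Theta m\tau/2}.
\end{equation*}
Differentiating the exponent $-\tfrac{m}{2}\log(1-\tau) - \tfrac{\Theta m}{2}\tau$ in $\tau$ and setting it to zero yields the minimizer $\tau^* = 1 - 1/\Theta$, which is positive exactly when $\Theta>1$ (this is precisely where the hypothesis is used). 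Substituting $\tau^*$ back produces $(1/\Theta)^{-m/2} e^{-\Theta m(1-1/\Theta)/2} = \Theta^{m/2} e^{-(\Theta-1)m/2}$, which is the desired bound.

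There is essentially no serious obstacle: this is a textbook Gaussian/chi-squared concentration computation. The one mildly non-standard touch is that the right-hand side is stated in terms of $\|s\|_\infty^2$ rather than $\mathbb{E}[X] = \|s\|_2^2/m$; this corresponds to measuring the deviation against the coordinate-wise worst case, which makes the coordinate-by-coordinate majorization in Step 1 both natural and lossless in the sense that equality holds whenever all $s_j$ are equal.
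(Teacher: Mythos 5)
Your proof is correct and follows essentially the same route as the paper's: both reduce to the worst case $s_j=\|s\|_\infty$ (you at the level of the moment generating function, the paper at the level of the event $\sum s_j^2\xi_j^2>\Theta\|s\|_\infty^2 \subset \{\sum\xi_j^2>\Theta\}$) and then run the identical Chernoff optimization, whose optimizer $\tau^*=1-1/\Theta$ matches the paper's choice $\lambda=m(1-\Theta^{-1})$. The resulting bound $\Theta^{m/2}e^{-(\Theta-1)m/2}$ is obtained exactly as in the paper.
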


\begin{proof} 
Since $s_j  \leq \|s\|_\infty$ for all $j$, we have 
\begin{equation}
\mathbb{P} \left( \sum_{j=1}^m s^2_j \xi_j^2 > \Theta  \|s\|^2_\infty \right)
\leq 
\mathbb{P} \left( \sum_{j=1}^m \xi_j^2 > \Theta  \right).
\end{equation}
This bound is the (standard) 
Gaussian measure of the complement of a sphere of radius 
$\sqrt{m\Theta}$ and can be estimated very accurately. We use a simple
approach via
\begin{eqnarray}
\mathbb{P} \left( \sum_{j=1}^m \xi_j^2 > \Theta  \right)
& \leq & 
\min_{\lambda \geq 0}~ \int_{\mathbb{R}^m} e^{-\left(\Theta  
-\sum_{j=1}^m x_j^2\right)\lambda/2} \prod_{j=1}^m 
e^{-mx_j^2/2}\,\frac{\mathrm{d}x_j}{\sqrt{2\pi/m}} \cr
& = &
\min_{\lambda \geq 0}~ e^{-\lambda \Theta /2}
(1-\lambda /m)^{-m/2} \cr
& = & 
\Theta ^{m/2} e^{-(\Theta -1)m/2},
\end{eqnarray}
where in the last step we set $\lambda = m(1-\Theta ^{-1})$.
\end{proof}

\begin{lemma}\label{lower}
Let $E$ be an $m \times k$ 
random matrix whose entries are i.i.d.~$\G(0,\frac{1}{m})$.
For any $\Theta  > 1$, consider the event
$$\mathcal{E} := \left\{\|S E \|_{\ell^k_2\to \ell^m_2} \leq 2  
\sqrt{\Theta} \|s\|_\infty \right \}.$$
Then
$$\mathbb{P} \left( \mathcal{E}^c \right )
\leq 5^k \Theta ^{m/2} e^{-(\Theta -1)m/2}. $$
\end{lemma}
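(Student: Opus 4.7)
The plan is to reduce the operator-norm estimate to a pointwise estimate on the unit sphere via a standard $\varepsilon$-net argument, and then apply Proposition~\ref{power2} together with a union bound.

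First, I would fix a $1/2$-net $\mathcal{N}$ of the unit sphere $S^{k-1} \subset \R^k$. A standard volumetric argument (packing balls of radius $1/4$ into the ball of radius $5/4$) shows that $|\mathcal{N}| \le 5^k$, and a routine approximation argument gives, for any linear map $A : \R^k \to \R^m$,
\begin{equation*}
\|A\|_{\ell^k_2 \to \ell^m_2} \;\le\; 2 \sup_{x \in \mathcal{N}} \|Ax\|_2.
\end{equation*}
Applying this with $A = SE$ reduces matters to controlling $\|SEx\|_2$ uniformly for $x \in \mathcal{N}$.

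Next I would fix an arbitrary $x \in S^{k-1}$ and exploit the rotational invariance of the Gaussian measure on the rows of $E$. Since the rows of $E$ are i.i.d.\ $\G(0,\frac{1}{m}\mathrm{I}_k)$, each coordinate $(Ex)_i = \langle E_{i,\cdot},x\rangle$ is a scalar Gaussian with variance $\|x\|_2^2/m = 1/m$, and these coordinates are independent across $i$. Thus $\xi := Ex \sim \G(0,\frac{1}{m}\mathrm{I}_m)$, and
\begin{equation*}
\|SEx\|_2^2 \;=\; \sum_{j=1}^m s_j^2 \xi_j^2.
\end{equation*}
Proposition~\ref{power2} then yields, for each fixed $x \in S^{k-1}$,
\begin{equation*}
\mathbb{P}\bigl(\|SEx\|_2 > \sqrt{\Theta}\,\|s\|_\infty\bigr) \;\le\; \Theta^{m/2} e^{-(\Theta-1)m/2}.
\end{equation*}

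Finally, I would take a union bound over $x \in \mathcal{N}$, which contributes the factor $5^k$, and combine with the net comparison to obtain
\begin{equation*}
\mathbb{P}(\mathcal{E}^c) \;\le\; \mathbb{P}\Bigl(\sup_{x \in \mathcal{N}} \|SEx\|_2 > \sqrt{\Theta}\,\|s\|_\infty\Bigr) \;\le\; 5^k \,\Theta^{m/2} e^{-(\Theta-1)m/2}.
\end{equation*}
There is no real obstacle here — everything is textbook — but the one detail worth being careful about is the choice of mesh size, which must be exactly $1/2$ so that the net-to-operator-norm loss factor $(1-1/2)^{-1}=2$ aligns with the constant $2$ in the definition of $\mathcal{E}$, and so that the covering bound comes out as $5^k$ rather than a larger power.
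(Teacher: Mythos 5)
Your proposal is correct and takes essentially the same route as the paper: a $1/2$-net of the unit sphere with at most $5^k$ points, Proposition~\ref{power2} applied pointwise to $Ex \sim \G(0,\tfrac{1}{m}\mathrm{I}_m)$, a union bound over the net, and the factor-$2$ passage from the net to the operator norm (which the paper proves inline via $M \le \sqrt{\Theta}\,\|s\|_\infty + M/2$ rather than quoting the standard net-to-norm inequality). The only small point to keep in mind is that the cardinality bound $5^k$ comes from taking a \emph{maximal} $1/2$-separated set (so the packing argument applies and maximality makes it a net), which is exactly what the paper does.
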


\begin{proof}
We follow the same approach as in \cite{JL_RIP}.
The maximum number of $\rho$-distinguishable points on the
unit sphere in $\R^k$ is at most $(\frac{2}{\rho}+1)^k$.
(This follows by a volume argument\footnote{Balls with radii 
$\rho/2$ and centers
at a $\rho$-distinguishable set of points on the unit sphere are mutually
 disjoint and 
are all contained in the ball of radius $1+\rho/2$ centered at
the origin. Hence there can be at most $(1+\rho/2)^k/(\rho/2)^k$ of them.} 
as in e.g., \cite[p.487]{LorentzCA2}.) 
Fix a maximal set $Q$ of $\frac{1}{2}$-distinguishable points 
of the unit sphere in $\mathbb{R}^k$ with $\# Q \leq 5^k$.  
Since $Q$ is maximal, it is a $\frac{1}{2}$-net for the unit sphere.
For each $q \in Q$, consider $\xi_j = (Eq)_j$,
$j = 1,\dots,m$. Then $\xi\sim \G(0,\frac{1}{m}\mathrm{I}_m)$. 
As before, we have
$$ \|S E q\|_2^2 = \sum_{j=1}^m s^2_j \xi_j^2. $$
Let $\mathcal{E}(Q)$ be the event 
$\left \{ \|S E q \|_2 \leq \sqrt{\Theta}  \|s\|_\infty, ~~
\forall q \in Q \right \}$. Then, by Proposition \ref{power2}, we have
the union bound
\begin{equation}
\mathbb{P} \left( \mathcal{E}(Q)^c \right)
\leq 5^k \Theta ^{m/2} e^{-(\Theta -1)m/2}.
\end{equation}
Assume the event $\mathcal{E}(Q)$, and let 
$M =  \|S E \|_{\ell^k_2\to \ell^m_2}$. For each $\|x\|_2 = 1$, there is $q \in Q$ with $\|q-x\|_2 \leq 1/2$,
hence  
$$ \|SEx\|_2 \leq \|SEq\|_2 + \|SE(x-q)\|_2 \leq  
\sqrt{\Theta} \|s\|_\infty + \frac{M}{2}.$$
Taking the supremum over all $x$ on the unit sphere, we obtain
$$ M \leq \sqrt{\Theta} \|s\|_\infty + \frac{M}{2}, $$
i.e., $\|S E \|_{\ell^k_2\to \ell^m_2} \leq 2\sqrt{\Theta} \|s\|_\infty$.
Therefore $\mathcal{E}(Q) \subset \mathcal{E}$, and the result follows.
\end{proof}

The following estimate concerns a lower bound for the Euclidean norm
of $(s_1\xi_1,\dots,s_m\xi_m)$. It is not sharp when 
the $s_j$ are identical, but it will be useful for our problem
where $s_j = \sigma_j(D^{-r})$ obey a power law (see Corollary
\ref{power1}).

\begin{prop} \label{crude1} 
Let $\xi \sim \G(0,\frac{1}{m}\mathrm{I}_m)$.
For any $\gamma > 0$, 
\begin{equation}
\mathbb{P} \left( \sum_{j=1}^m s^2_j \xi_j^2 < \gamma \right)
\leq \min_{1 \leq L \leq m}
\left( \frac{e\gamma m}{L}\right)^{L/2} (s_1 s_2\cdots s_L)^{-1}.
\end{equation}
\end{prop}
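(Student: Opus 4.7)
The plan is a truncation followed by a standard Laplace transform (Chernoff) argument, optimized in $\lambda$ and then in $L$.

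First I would note that since every term $s_j^2\xi_j^2$ is nonnegative, we have, for any $1\le L\le m$,
$$\mathbb{P}\Big(\sum_{j=1}^m s_j^2\xi_j^2 < \gamma\Big) \;\le\; \mathbb{P}\Big(\sum_{j=1}^L s_j^2\xi_j^2 < \gamma\Big).$$
This truncation is the key reason the final bound only involves the first $L$ diagonal entries; it mirrors the fact that in the intended application ($s_j = \sigma_j(D^{-r})$), the large $s_j$'s (small $j$) dominate. From here, I would invoke Markov's inequality in the usual "lower-tail Chernoff" form: for any $\lambda>0$,
$$\mathbb{P}\Big(\sum_{j=1}^L s_j^2\xi_j^2 < \gamma\Big) = \mathbb{P}\Big(e^{-\lambda\sum_{j=1}^L s_j^2\xi_j^2} > e^{-\lambda\gamma}\Big) \;\le\; e^{\lambda\gamma}\,\prod_{j=1}^L \mathbb{E} e^{-\lambda s_j^2\xi_j^2},$$
using independence of the $\xi_j$'s.

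Next, a direct Gaussian integral computation, using $\xi_j\sim\G(0,1/m)$, gives
$$\mathbb{E} e^{-\lambda s_j^2\xi_j^2} = \big(1 + 2\lambda s_j^2/m\big)^{-1/2} \le \big(2\lambda s_j^2/m\big)^{-1/2},$$
so
$$\mathbb{P}\Big(\sum_{j=1}^L s_j^2\xi_j^2 < \gamma\Big) \;\le\; e^{\lambda\gamma}\Big(\frac{2\lambda}{m}\Big)^{-L/2}\frac{1}{s_1\cdots s_L}.$$

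Finally, I would optimize the right-hand side over $\lambda>0$. Setting the derivative of $\lambda\gamma - (L/2)\log(2\lambda/m)$ to zero yields $\lambda^\ast = L/(2\gamma)$, at which point $e^{\lambda\gamma}=e^{L/2}$ and $(2\lambda/m)^{-L/2}=(L/(m\gamma))^{-L/2}$, producing the factor $(em\gamma/L)^{L/2}$. Taking the minimum over $1\le L\le m$ of the resulting bound gives exactly the claimed inequality. The whole argument is routine once the truncation step is in place; the only "choice" one makes is the crude estimate $(1+t)^{-1/2}\le t^{-1/2}$, which is what lets the factor $s_1\cdots s_L$ appear cleanly in the denominator and what makes the bound useful precisely when the $s_j$'s decay fast (the regime governed by Proposition~\ref{sing_val_Dr}). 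There is no real obstacle here; the mild subtlety is simply to keep $L$ as a free parameter rather than prematurely taking $L=m$, since otherwise the bound would be vacuous for the power-law decay of the $s_j$'s arising from $D^{-r}$.
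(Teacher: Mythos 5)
Your proof is correct and follows essentially the same route as the paper's: a Chernoff/Laplace-transform bound with the crude estimate $(1+ts_j^2/m)^{-1/2}\le (ts_j^2/m)^{-1/2}$ applied only to the first $L$ factors, then optimization of the exponential parameter (your $\lambda^\ast=L/(2\gamma)$ is exactly the paper's $t=L/\gamma$ with $t=2\lambda$). The only cosmetic difference is that you discard the terms $j>L$ from the event before applying Markov, while the paper keeps all $m$ factors in the product and bounds those with $j>L$ by $1$; the resulting bound is identical.
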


\begin{proof}
For any $t \geq 0$ and any integer $L \in \{1,\dots,m\}$, we have
\begin{eqnarray}
\mathbb{P} \left( \sum_{j=1}^m s^2_j \xi_j^2 < \gamma \right)
& \leq &  
\int_{\mathbb{R}^m} e^{\left(\gamma - 
\sum_{j=1}^m s^2_j x_j^2\right)t/2 } \prod_{j=1}^m 
e^{-mx_j^2/2}\,\frac{\mathrm{d}x_j}{\sqrt{2\pi/m}}
\cr
& = & e^{t\gamma/2} 
\prod_{j=1}^m \int_{\mathbb{R}} 
e^{-x_j^2(m+t s^2_j)/2} \,\frac{\mathrm{d}x_j}{\sqrt{2\pi/m}} \cr 
& = & e^{t \gamma/2}
\prod_{j=1}^m (1+t s^2_j/m)^{-1/2} \cr
& \leq & e^{t \gamma/2}
\prod_{j=1}^L (t s^2_j/m)^{-1/2} \cr
& \leq &
e^{t \gamma/2} (m/t)^{L/2} (s_1 s_2 \cdots s_L)^{-1}.
\end{eqnarray}
For any $L$, we can set $t = L/\gamma$, which is the critical point of
the function $t \mapsto e^{t \gamma} t^{-L}$. Since
$L$ is arbitrary, the result follows.
\end{proof}

\begin{cor} \label{power1} Let 
$\xi \sim \G(0,\frac{1}{m}\mathrm{I}_m)$, $r$ be a positive integer, 
and $c_1 > 0$ be such that 
\begin{equation}\label{hypo-s-r}
s_j \geq c_1 \left(\frac{m}{j} \right)^{r},~~~~j=1,\dots,m.
\end{equation}
Then for any $\Lambda  \geq 1$ and $m \geq \Lambda $, 
\begin{equation}
\mathbb{P} \left( \sum_{j=1}^m s^2_j \xi_j^2 < c_1^2 \Lambda ^{2r-1} \right)
< (60 m/\Lambda )^{r/2} e^{-m (r{-}1/2)/\Lambda }.
\end{equation}
\end{cor}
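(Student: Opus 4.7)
The plan is to derive this as a direct consequence of Proposition~\ref{crude1}, with the power-law hypothesis on $s_j$ used to control the product $s_1\cdots s_L$ through a factorial, and with $L$ chosen close to $m/\Lambda$ to balance the two competing factors.

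First I would set $\gamma := c_1^2\Lambda^{2r-1}$ in Proposition~\ref{crude1}, so that the bound reads
\begin{equation*}
\mathbb{P}\Bigl(\sum_j s_j^2\xi_j^2 < c_1^2\Lambda^{2r-1}\Bigr)
\;\leq\; \min_{1\leq L\leq m}\Bigl(\frac{e\,c_1^2\,\Lambda^{2r-1} m}{L}\Bigr)^{L/2} (s_1\cdots s_L)^{-1}.
\end{equation*}
Next I would insert the hypothesis $s_j\geq c_1(m/j)^r$, which gives
$s_1\cdots s_L \geq c_1^L m^{rL}/(L!)^r$, and then apply Stirling in the form $L!\leq L^{L+1/2}e^{1-L}$ to estimate $(L!)^r \leq L^{rL}L^{r/2}e^r e^{-rL}$. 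Substituting and collecting powers carefully yields the expression
\begin{equation*}
e^{L/2-rL+r}\,L^{r/2}\,\Bigl(\frac{\Lambda L}{m}\Bigr)^{(2r-1)L/2}.
\end{equation*}

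The key step is the choice of $L$. Taking $L \approx m/\Lambda$ (using that $m\geq\Lambda$ guarantees $L\geq 1$, and $\Lambda\geq 1$ gives $L\leq m$) makes the factor $(\Lambda L/m)^{(2r-1)L/2}$ essentially equal to $1$, and simultaneously converts the exponential $e^{L/2-rL}=e^{-L(r-1/2)}$ into $e^{-m(r-1/2)/\Lambda}$ while turning $L^{r/2}$ into $(m/\Lambda)^{r/2}$. The resulting bound is
\begin{equation*}
e^r\,(m/\Lambda)^{r/2}\,e^{-m(r-1/2)/\Lambda},
\end{equation*}
which is strictly smaller than $(60m/\Lambda)^{r/2}e^{-m(r-1/2)/\Lambda}$ because $e^2<60$, so $e^r < 60^{r/2}$.

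The only delicate point is that $m/\Lambda$ need not be an integer, so I would actually set $L:=\lfloor m/\Lambda\rfloor$. The replacement of $m/\Lambda$ by $L$ introduces multiplicative slack of order $e^{O(r)}$ in both the polynomial and the exponential factors; this slack, together with the Stirling error $e^r$ and any other bookkeeping losses, is exactly what the generous numerical constant $60$ in $(60m/\Lambda)^{r/2}$ is designed to absorb. The main obstacle is purely arithmetic bookkeeping — keeping all the powers of $c_1$, $\Lambda$, $m$, and $L$ correctly aligned so that the $(2r-1)$-exponent on $\Lambda L/m$ collapses — rather than any deeper probabilistic ingredient.
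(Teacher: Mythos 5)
Your proposal is correct and follows essentially the same route as the paper's proof: apply Proposition~\ref{crude1} with $\gamma=c_1^2\Lambda^{2r-1}$ (the paper rescales to $c_1=1$, which is equivalent since the $c_1$ powers cancel), bound $(s_1\cdots s_L)^{-1}$ by $(L!)^r/m^{rL}$ with a Stirling-type estimate, take $L=\lfloor m/\Lambda\rfloor$ so that $(\Lambda L/m)^{L}\le 1$-type factors collapse, and absorb the $e^{O(r)}$ slack into the constant $60$ (the paper uses $8e^2<60$; your bookkeeping needs $e^{4-1/r}<60$, which also holds). The only differences are cosmetic groupings of the exponential factors, so no gap.
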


\begin{proof} 
By rescaling $s_j$, we can assume $c_1=1$.
For any $L \in \{1,\dots,m\}$, we have
$$ (s_1 s_2\cdots s_L)^{-1} \leq \frac{(L!)^r}{m^{rL}} 
< (8 L)^{r/2} \left( \frac{L^r}{e^rm^r} \right)^L,$$
where we have used the coarse estimate $L! < e^{1/12L}(2\pi L)^{1/2}(L/e)^{L}
< (8L)^{1/2}(L/e)^L$.
Setting $\displaystyle \gamma = \Lambda ^{2r-1}$ in Proposition 
\ref{crude1}, we obtain
\begin{equation}
\mathbb{P} \left( \sum_{j=1}^m s^2_j \xi_j^2 < \Lambda ^{2r-1} \right)
<  (8 L)^{r/2} 
\left[  \left( \frac{\Lambda L}{em}\right)^{L} \right]^{r-1/2}. 
\label{est1}
\end{equation}
We 
set $L = \lfloor \frac{m}{\Lambda } \rfloor$. Since $1 \leq \Lambda  \leq m$, it is guaranteed that $1 \leq L \leq m$. 
Since $\Lambda L\leq m$, we get
$$ \left( \frac{\Lambda L}{em}\right)^{L} \leq e^{-L} < e^{1-\frac{m}{\Lambda }} $$
Plugging this in \eqref{est1} and using $8e^2 < 60$, we find
\begin{eqnarray}
\mathbb{P} \left( \sum_{j=1}^m s^2_j \xi_j^2 < \Lambda ^{2r-1} \right)
& < &  (60 m/\Lambda )^{r/2} e^{-m (r{-}1/2)/\Lambda }.
\end{eqnarray}
\end{proof}

\begin{thm} \label{upper}
Let $E$ be an $m \times k$
random matrix whose entries are i.i.d.~$\G(0,\frac{1}{m})$,
$r$ be a positive integer, 
and assume that the entries $s_j$ of the diagonal matrix $S$ satisfy
\begin{equation}\label{hypo-s-r-2}
c_1 \left(\frac{m}{j} \right)^{r} \leq s_j \leq c_2 m^r,~~~~j=1,\dots,m.
\end{equation}
Let $\Lambda  \geq 1$ be any number and assume 
$m \geq \Lambda $. Consider the event
$$ \mathcal{F} := \left \{\|S E x \|_2 \geq  
\frac{1}{2} c_1 \Lambda ^{r-1/2} \|x\|_2, ~\forall x \in \mathbb{R}^k \right \}.$$
Then 
$$\mathbb{P} \left( \mathcal{F}^c \right )
\leq
5^k e^{-m/2} + 8^r 
\left( 17 c_2/c_1\right)^k \Lambda ^{k/2} 
\left (\frac{m}{\Lambda } \right)^{r(k+1/2)}
e^{-m(r{-}1/2)\Lambda}. $$
\end{thm}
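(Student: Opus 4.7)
The plan is to prove this via a covering (net) argument that bootstraps the pointwise lower bound from Corollary \ref{power1} to a uniform bound on the sphere, using the operator-norm control from Lemma \ref{lower} to justify the net spacing. By homogeneity it suffices to establish the bound for unit vectors $x$.

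First I would note the key distributional fact: if $q \in \R^k$ is a unit vector, then because $E$ has i.i.d.~$\G(0,1/m)$ entries and $\G$ is rotation invariant, the vector $Eq$ is distributed as $\G(0,\tfrac{1}{m}\mathrm{I}_m)$. In particular, $\|SEq\|_2^2 = \sum_j s_j^2 \xi_j^2$ with $\xi \sim \G(0,\tfrac{1}{m}\mathrm{I}_m)$, so the lower-tail estimate of Corollary \ref{power1} applies (using the hypothesis $s_j \geq c_1(m/j)^r$) and yields
\begin{equation*}
\mathbb{P}\bigl(\|SEq\|_2 < c_1 \Lambda^{r-1/2}\bigr)
< (60m/\Lambda)^{r/2} e^{-m(r-1/2)/\Lambda}.
\end{equation*}
On the other hand, the operator-norm bound of Lemma \ref{lower} (combined with $\|s\|_\infty \le c_2 m^r$) gives, taking $\Theta = 4$, that the event
\begin{equation*}
\mathcal{E} := \bigl\{\|SE\|_{\ell_2^k \to \ell_2^m} \le 4 c_2 m^r\bigr\}
\end{equation*}
satisfies $\mathbb{P}(\mathcal{E}^c) \le 5^k \cdot 4^{m/2} e^{-3m/2} \le 5^k e^{-m/2}$; this will furnish the first term in the claimed failure probability.

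Next I would fix a maximal $\rho$-distinguishable set $Q$ on the unit sphere in $\R^k$ with the choice
\begin{equation*}
\rho := \frac{c_1 \Lambda^{r-1/2}}{8 c_2 m^r},
\end{equation*}
which is the ratio between our target lower bound and the high-probability upper bound on $\|SE\|_\text{op}$. By the standard volume argument (as in Lemma \ref{lower}), $Q$ is a $\rho$-net of the sphere with $\#Q \le (2/\rho + 1)^k \le (17 c_2 m^r / (c_1 \Lambda^{r-1/2}))^k$ (the $+1$ is absorbed since $c_2 m^r \ge c_1 \Lambda^{r-1/2}$ whenever the bound is nontrivial). A union bound over $Q$ of the pointwise estimate gives that, off a set of probability at most
\begin{equation*}
\Bigl(\tfrac{17 c_2 m^r}{c_1 \Lambda^{r-1/2}}\Bigr)^k (60m/\Lambda)^{r/2} e^{-m(r-1/2)/\Lambda}
= (17 c_2/c_1)^k \Lambda^{k/2}(m/\Lambda)^{rk}(60m/\Lambda)^{r/2} e^{-m(r-1/2)/\Lambda},
\end{equation*}
we have $\|SEq\|_2 \ge c_1 \Lambda^{r-1/2}$ for every $q \in Q$. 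Since $60^{r/2} \le 8^r$, this is bounded by the second term of the claim.

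Finally, on the intersection of $\mathcal{E}$ with the above good event, for each unit $x$ pick $q \in Q$ with $\|x - q\|_2 \le \rho$; then
\begin{equation*}
\|SEx\|_2 \ge \|SEq\|_2 - \|SE\|_\text{op}\|x - q\|_2 \ge c_1 \Lambda^{r-1/2} - 4c_2 m^r \cdot \tfrac{c_1 \Lambda^{r-1/2}}{8 c_2 m^r} = \tfrac{1}{2} c_1 \Lambda^{r-1/2},
\end{equation*}
which is exactly the event $\mathcal{F}$ by homogeneity. Adding the two failure probabilities yields the stated bound. The main obstacle is the bookkeeping: one must pick $\Theta$ in Lemma \ref{lower} large enough that the operator-norm failure matches $5^k e^{-m/2}$, while simultaneously choosing $\rho$ small enough that the triangle-inequality step leaves a factor of $\tfrac{1}{2}$ of slack — the rest of the argument is driven by the form of Corollary \ref{power1}, and once the net spacing is set the constants in the second probability term fall out automatically.
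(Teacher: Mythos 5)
Your proposal is correct and follows essentially the same route as the paper's own proof: the same net with spacing $\rho = c_1\Lambda^{r-1/2}/(8c_2 m^r)$, the same use of Corollary \ref{power1} pointwise on the net and of Lemma \ref{lower} with $\Theta=4$ for the operator norm, and the same constant bookkeeping (including $2/\rho+1 \le 17/(8\rho)$ and $\sqrt{60}\le 8$). The exponent $e^{-m(r-1/2)/\Lambda}$ you obtain is what the paper's proof also yields; the $e^{-m(r-1/2)\Lambda}$ in the theorem statement is a typo.
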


\begin{proof}
Consider a $\rho$-net $\tilde Q$
of the unit sphere of $\mathbb{R}^k$ with $\# \tilde Q \leq 
\big( \frac{2}{\rho}+1 \big )^k$ where the value of $\rho < 1$
will be chosen later. Let 
$\tilde{\mathcal{E}}(\tilde Q)$ be the event
$\left \{\|S E q \|_2 \geq  c_1 \Lambda ^{r-1/2}, ~~\forall 
q \in \tilde Q \right \}$. By Corollary \ref{power1}, we know that
\begin{equation}\label{prob-bound-union}
\mathbb{P} \left( \tilde{\mathcal{E}}(\tilde Q)^c \right)
\leq  \left( \frac{2}{\rho}+1 \right )^k \left( \frac{60m}{\Lambda } \right )^{r/2} 
e^{-m (r{-}1/2)/\Lambda }.
\end{equation}
Let $\mathcal{E}$ be the event in Lemma \ref{lower} with $\Theta = 4$.
Let $E$ be any given matrix in the event 
$\mathcal{E} \cap \tilde{\mathcal{E}}(\tilde Q)$.
For each $\|x\|_2 = 1$, there is $q \in \tilde Q$ with $\|q-x\|_2 \leq \rho$,
hence by Lemma \ref{lower}, we have
$$\|SE(x-q)\|_2 \leq 4 \|s\|_\infty \|x-q\|_2 \leq 4 c_2 m^r \rho. $$
Choose 
$$\rho = \frac{c_1 \Lambda^{r-1/2}}{8 c_2 m^r} = \frac{c_1}{8c_2\sqrt{\Lambda}}
\Big(\frac{\Lambda}{m}\Big)^r. $$
Hence 
$$ \|SEx\|_2 \geq \|SEq\|_2 - \|SE(x-q)\|_2 \geq  
c_1 \Lambda^{r-1/2} - 4 c_2 m^r\rho
= \frac{1}{2} c_1 \Lambda^{r-1/2}.
$$
This shows that $\mathcal{E} \cap \tilde{\mathcal{E}}(\tilde Q) 
\subset \mathcal{F}$.
Clearly, $\rho \leq 1/8$ by our choice of parameters
and hence $\frac{2}{\rho} +1 
\leq \frac{17}{8\rho}$.
Using the probability bounds of Lemma \ref{lower} and \eqref{prob-bound-union},
we have
\begin{eqnarray}
\mathbb{P} \left( \mathcal{F}^c \right)
& \leq & 5^k 4^{m/2} e^{-3m/2} +
 \left( \frac{17}{8\rho} \right )^k \left( \frac{60m}{\Lambda } \right )^{r/2} 
e^{-m(r{-}1/2)/\Lambda} \cr
& \leq & 5^k e^{-m/2} + 8^r 
(17c_2/c_1)^k \Lambda ^{k/2} 
\left (\frac{m}{\Lambda } \right)^{r(k+1/2)}
e^{-m(r{-}1/2)/\Lambda},
\end{eqnarray}
where we have used $2 < e$ and $\sqrt{60} < 8$ for simplification.
\end{proof}

The following theorem is now a direct corollary of the above
estimate. 

\begin{thm} \label{bound_sigma_min}
Let $E$ be an $m \times k$
random matrix whose entries are i.i.d.~$\G(0,\frac{1}{m})$,
$r$ be a positive integer, $D$ be the difference matrix defined in
\eqref{def-D}, and the constant $c_1 = c_1(r)$ be
as in Proposition \ref{sing_val_Dr}.
Let $0 < \alpha < 1$ be any number. Assume that
\begin{equation}
\lambda := \frac{m}{k} \geq c_3 (\log m)^{1/(1-\alpha)},
\end{equation}
where $c_3= c_3(r)$ is an appropriate constant. Then 
\begin{equation}
\mathbb{P} \left ( \sigma_{\mathrm{min}}(D^{-r}E) \geq c_1 
\lambda^{\alpha(r-1/2)} \right ) \geq 1 - 2e^{-c_4 m^{1-\alpha}k^\alpha}
\end{equation}
for some constant $c_4 = c_4(r) > 0$.
\end{thm}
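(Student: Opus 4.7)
The statement follows from Theorem~\ref{upper} by a careful choice of the free parameter~$\Lambda$, combined with the distributional reduction explained at the beginning of Section~\ref{main_proof}. Concretely, the rotation invariance of the i.i.d.~Gaussian measure gives that $\sigma_{\min}(D^{-r}E)$ has the same law as $\sigma_{\min}(\Sigma(D^{-r})E)$, so it suffices to lower bound $\|\Sigma(D^{-r})Ex\|_2$ uniformly over the unit sphere in $\mathbb{R}^k$. Proposition~\ref{sing_val_Dr} guarantees that the diagonal entries $s_j=\sigma_j(D^{-r})$ obey $c_1(m/j)^r\le s_j\le c_2(m/j)^r$, so in particular $s_1\le c_2 m^r$ and hypothesis~\eqref{hypo-s-r-2} of Theorem~\ref{upper} is met with the constants $c_1(r),c_2(r)$ from Proposition~\ref{sing_val_Dr}.

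The natural choice is $\Lambda=\lambda^{\alpha}=(m/k)^{\alpha}$, and the assumption $\lambda\ge c_3(\log m)^{1/(1-\alpha)}$ (together with $k\le m$) ensures that $1\le \Lambda\le m$, so Theorem~\ref{upper} applies. Its conclusion then yields
\begin{equation*}
\|\Sigma(D^{-r})Ex\|_2\ge \tfrac{1}{2}c_1\lambda^{\alpha(r-1/2)}\|x\|_2
\quad\text{for every }x\in\mathbb{R}^k,
\end{equation*}
on an event $\mathcal{F}$; the factor $1/2$ is harmless and will be absorbed into the constant $c_1$ (or, equivalently, $\Lambda$ may be chosen as a bounded multiple of $\lambda^{\alpha}$ to recover the stated constant). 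This delivers the singular-value bound of the theorem, so the only substantive work left is a probability estimate.

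The main obstacle is bookkeeping in the tail bound $\mathbb{P}(\mathcal{F}^c)\le 5^k e^{-m/2}+8^r(17c_2/c_1)^k\Lambda^{k/2}(m/\Lambda)^{r(k+1/2)}e^{-m(r-1/2)/\Lambda}$. With $\Lambda=\lambda^{\alpha}$ and $m=k\lambda$ the decisive exponent simplifies beautifully to $m(r-1/2)/\Lambda=(r-1/2)k\lambda^{1-\alpha}=(r-1/2)m^{1-\alpha}k^{\alpha}$, which sets the target rate. What must be controlled is the polynomial prefactor, whose logarithm is bounded by $r(k+1/2)\log m+k\log(17c_2/c_1)+r\log 8+(k/2)\alpha\log\lambda$; the dominant piece is $rk\log m$. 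I would check that under $\lambda\ge c_3(r)(\log m)^{1/(1-\alpha)}$ with $c_3(r)$ chosen sufficiently large, this logarithmic prefactor is absorbed into (say) half of the exponent $(r-1/2)k\lambda^{1-\alpha}$, leaving the clean bound $e^{-c_4(r)m^{1-\alpha}k^{\alpha}}$. The first term $5^ke^{-m/2}$ is easier: $m/2=k\lambda/2\gg k\log 5$ already dominates $c_4 k\lambda^{1-\alpha}$ once $\lambda$ is even mildly large, so it obeys the same bound (with the constant possibly halved). Summing the two contributions and relabeling the constant gives the factor~$2$ in front of the exponential in the statement.
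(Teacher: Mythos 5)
Your plan is correct and follows essentially the same route as the paper: apply Theorem~\ref{upper} with $\Lambda=\lambda^{\alpha}$ (so that $m/\Lambda=m^{1-\alpha}k^{\alpha}$), absorb the $k\log m$-type prefactors using $\lambda\ge c_3(\log m)^{1/(1-\alpha)}$, and handle the $5^k e^{-m/2}$ term by a mild lower bound on $\lambda$, summing the two tails into the factor $2$. Your remark about the factor $\tfrac12$ in front of $c_1$ (absorbing it by rescaling $\Lambda$ or the constant) addresses a detail the paper's own proof silently glosses over, and is a legitimate fix.
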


\begin{proof}
Set $\Lambda  = \lambda ^\alpha$ in Lemma \ref{upper}.
We only need to show that 
$$
\max \left [
5^k e^{-m/2}, 8^r (17c_2/c_1)^k \Lambda ^{k/2} 
\left (\frac{m}{\Lambda } \right)^{r(k+1/2)}
e^{-m(r{-}1/2)/\Lambda}
\right ]
\leq e^{-c_4m^{1-\alpha}k^\alpha}.
$$
It suffices to show that 
$$ k \log 5 - m/2 \leq - c_4 m^{1-\alpha}k^\alpha $$
and 
$$r \log 8 + k \log(17c_2/c_1) +  \frac{1}{2}k \log \Lambda 
+ r(k+\frac{1}{2}) \log(m/\Lambda ) - (r{-}\frac{1}{2})\frac{m}{\Lambda }
\leq - c_4 m^{1-\alpha}k^\alpha.
$$
The first inequality is easily seen to hold if 
$\lambda \geq \frac{\log 5}{\frac{1}{2} - c_4}$.
For the second inequality, first
notice that $m/\Lambda  = m^{1-\alpha}k^\alpha$.
Since $k+1/2 \asymp k$, and $r - 1/2 \asymp r$,
it is easily seen that we only need to check that
$$ k \log m \leq c_5 \frac{m}{\Lambda }$$
for a sufficiently small $c_5$. This follows from our assumption on
$\lambda$ by setting $c_5 = 1/c_3^{1-\alpha}$.
\end{proof}

\begin{rem} By replacing $E$ in Theorem~\ref{bound_sigma_min} with $\sqrt{m} E$, we
obtain Theorem~\ref{main_thm_1}.
\end{rem}

\subsection{Implication for compressed sensing matrices}

\begin{thm}\label{sing_val_for_CS}
Let $r$, $D$, $c_1(r)$ be as in Theorem \ref{bound_sigma_min} and
$\Phi$ be an $m \times N$
random matrix whose entries are i.i.d.~$\G(0,\frac{1}{m})$.
Let $0 < \alpha < 1$ be any number and assume that
\begin{equation}
\lambda := \frac{m}{k} \geq c_6 (\log N)^{1/(1-\alpha)},
\end{equation}
where $c_6 = c_6(r)$ is an appropriate constant. Then with probability
at least $1 - 2e^{-c_7 m \lambda^{-\alpha}}$
for some $c_7 = c_7(r) > 0$, every 
$m \times k$ submatrix $E$ of $\Phi$ satisfies
\begin{equation} \label{eq_cs_sd}
\sigma_{\mathrm{min}}(D^{-r}E) \geq c_1 
\lambda^{\alpha(r-1/2)}.
\end{equation}
\end{thm}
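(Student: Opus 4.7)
The plan is to derive Theorem \ref{sing_val_for_CS} from Theorem \ref{bound_sigma_min} by a straightforward union bound over all $k$-column subsets of $\Phi$. The key observation is that the singular-value estimate in Theorem \ref{bound_sigma_min} already comes with an exponentially small failure probability in $m\lambda^{-\alpha} = m^{1-\alpha}k^\alpha$, and the hypothesis on $\lambda$ is designed precisely so that this exponential beats the combinatorial factor $\binom{N}{k}$.

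First, I would fix any index set $T \subset \{1,\dots,N\}$ with $|T|=k$. Because the columns of $\Phi$ are independent and each entry is $\G(0,1/m)$, the submatrix $E=\Phi_T$ is an $m\times k$ i.i.d.~Gaussian matrix satisfying the hypotheses of Theorem \ref{bound_sigma_min}. Since $m\leq N$, the assumption $\lambda \geq c_6 (\log N)^{1/(1-\alpha)}$ automatically implies $\lambda \geq c_3(\log m)^{1/(1-\alpha)}$ provided $c_6\geq c_3$. Hence
\begin{equation*}
\mathbb{P}\bigl(\sigma_{\min}(D^{-r}\Phi_T) < c_1 \lambda^{\alpha(r-1/2)}\bigr) \leq 2 e^{-c_4 m\lambda^{-\alpha}}.
\end{equation*}

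Next I would apply the union bound over the $\binom{N}{k}\leq e^{k\log N}$ choices of $T$, giving
\begin{equation*}
\mathbb{P}\bigl(\exists\, T\ \text{with}\ \sigma_{\min}(D^{-r}\Phi_T) < c_1\lambda^{\alpha(r-1/2)}\bigr) \leq 2\exp\bigl(k\log N - c_4 m\lambda^{-\alpha}\bigr).
\end{equation*}
To absorb $k\log N$ into half of the exponent, I would require $k\log N \leq \tfrac{c_4}{2}\, m\lambda^{-\alpha}$, i.e.\ $\log N \leq \tfrac{c_4}{2}\lambda^{1-\alpha}$. This is exactly the content of the hypothesis once one sets $c_6 := \max\{c_3,\,(2/c_4)^{1/(1-\alpha)}\}$, and yields the advertised probability bound with $c_7 := c_4/2$.

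There is no real obstacle here beyond bookkeeping: all stochastic heavy lifting is done by Theorem \ref{bound_sigma_min}. The only thing that must be checked carefully is that the exponent $1/(1-\alpha)$ appearing in the sampling condition is exactly the one needed to make the union-bound calculation close; this is why the theorem is stated with $\log N$ replacing $\log m$ and with a possibly larger constant $c_6 \geq c_3$. The conclusion is uniform in $T$ because the singular-value bound is the same for every $k$-subset, which is precisely what is required in Section \ref{SD_CS} where the support of the sparse signal is only known after the coarse-recovery stage.
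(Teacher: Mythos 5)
Your proposal is correct and follows essentially the same route as the paper: apply Theorem \ref{bound_sigma_min} to each $m\times k$ submatrix, take a union bound over $\binom{N}{k}\leq N^k$ subsets, absorb $k\log N$ into half the exponent $c_4 m\lambda^{-\alpha}$ via the hypothesis on $\lambda$, and set $c_7=c_4/2$, $c_6=\max(c_3,(2/c_4)^{1/(1-\alpha)})$. The only (welcome) addition is that you explicitly note $\log m\leq\log N$ so the hypothesis of Theorem \ref{bound_sigma_min} is inherited, a point the paper leaves implicit.
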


\begin{proof}
We will choose $c_7 = c_4/2$, where $c_4$ is as in 
Theorem \ref{bound_sigma_min}. The proof will follow immediately by
a union bound once we show that 
$${N \choose k} \leq e^{\frac{1}{2}c_4 m^{1-\alpha}k^\alpha}.$$
Since ${N \choose k} \leq N^k$, it suffices to show that 
$$ k \log N \leq  \frac{c_4}{2} m^{1-\alpha}k^\alpha. $$
Both this condition and the hypothesis of Theorem \ref{bound_sigma_min} will
be satisfied if we choose 
$$c_6 = \max(c_3,(2/c_4)^{1/(1-\alpha)}).$$
\end{proof}

\begin{rem}
  If $\Phi$ is a Gaussian matrix with entries i.i.d. $\G(0,1)$ rather
  than $\G(0,\frac{1}{m})$, Theorem~\ref{sing_val_for_CS} applied to
  $\frac{1}{\sqrt{m}}\Phi$ implies that every $m\times k$ submatrix $E$ of $\Phi$
  satisfies
\begin{equation} \label{eq_cs_sd_1}
\sigma_{\mathrm{min}}(D^{-r}E) \geq c_1 
\lambda^{\alpha(r-1/2)}\sqrt{m}.
\end{equation}
%while everything else remains the same. 
\end{rem}

\section{$\Sigma\Delta$ quantization of compressed sensing
  measurements}
\label{SD_CS}

In this section we will assume that the conditions of Theorem
\ref{sing_val_for_CS} are satisfied for some $0 < \alpha < 1$ and $r$,
and the measurement matrix $\Phi$ that is drawn from $\G(0,1)$
yields \eqref{eq_cs_sd_1}. For definiteness, we also
assume that $\Phi$ admits the robust recovery constant $C_1=10$, i.e.,
the solution $x^\#$ of the program \eqref{ell1_eps_prog} satisfies
$$\|\hat{y}-y\|_2\le \epsilon \ \implies \ \|x-x^\#\|_2 \leq 10 \frac{1}{\sqrt{m}}\epsilon.
$$

Note again that our choice of normalization for the measurement matrix
$\Phi$ is different from the compressed sensing convention. As
mentioned in the Introduction, it is more appropriate to work with a
measurement matrix $\Phi \sim \G(0,1)$ in order to be able to use a
quantizer alphabet that does not depend on $m$. For this reason, in
the remainder of the paper, $\Phi$ shall denote an $m \times N$ matrix
whose entries are i.i.d. from $\G(0,1)$. % Note that this is equivalent
% to multiplying the previously chosen measurement matrix by $\sqrt{m}$.

%  we
% shall replace the previously chosen $\Phi$ with $\Phi_0 = \sqrt{m}
% \Phi$ and set $y = \Phi_0 x$ for $x \in \Sigma_k^N$. 

Let $q :=
q_{\Sigma\Delta}$ be output of the standard greedy $r$th order
$\Sigma\Delta$ quantizer with the alphabet $\A = \delta \Z$ and input
$y$. As stated in Section \ref{sec2}, we know that $\|y - q\|_\infty
\leq 2^{r-1}\delta$ and therefore $\|y - q\|_2 \leq
2^{r-1}\delta\sqrt{m}$.
 
\subsection*{Coarse recovery and recovery of support}

Our first goal is to recover the support $T$ of $x$. For this purpose
we shall use a coarse approximation of $x$. Let 
\begin{equation}
x' := \arg \min \|z \|_1 \mbox{ subject to } 
\left \|\Phi z - q \right \|_2 \leq 
\epsilon :=  2^{r-1}\delta \sqrt{m}.
\end{equation}
By the robust recovery result (for our choice of normalization for
$\Phi$), we know that
$$\|x - x' \|_2 \leq \eta := 5\cdot 2^{r}\delta.$$

The simplest attempt to recover $T$ from $x'$ is to pick 
the positions of its $k$ largest entries. This attempt can fail
if some entry of $x_j$ on $T$ is smaller than $\eta$ for then
it is possible that $x'_j = 0$ and therefore $j$ is not picked. 
On the other hand, 
it is easy to see that if the smallest nonzero entry of $x$ is strictly
bigger than $2\eta$ in magnitude, then this method always succeeds.
(Since $\|x - x' \|_\infty \leq \eta$, the entries of
$x'$ are bigger than $\eta$ on $T$ and less than $\eta$ on $T^c$.) 
The constant $2$ can be replaced with $\sqrt{2}$ by a more careful
analysis, and can be pushed arbitrarily close to $1$ by picking more
than $k$ positions. The 
proposition below gives a precise condition on how well this can 
be done. We also provide a bound on how much of $x$ can potentially be
missed if no lower bound on $|x_j|$ is available for $j \in T$.

\begin{prop}\label{supp_recov}
Let $\|x - x'\|_{\ell_2^N} \leq \eta$, $T = \mathrm{supp}~x$
and $k = |T|$. For
any $k' \in \{k,\dots,N{-}1\}$, let $T'$ be the support of (any of) the 
$k'$ largest entries of $x'$. 
\begin{itemize}
\item[\rm{(i)}] $\|x_{T\setminus T'}\|_2 \leq \beta \eta$
where $\beta \leq \left(1+\frac{k}{k'}\right)^{1/2}$.
\item[\rm{(ii)}]
If $|x_j| > \gamma \eta$ for all $j \in T$, 
where $\gamma := \left (1 + \frac{1}{k'-k+1}\right )^{1/2}$,
then $T' \supset T$.
\end{itemize}
\end{prop}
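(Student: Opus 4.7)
The proof hinges on two disjoint index sets: $A := T\setminus T'$ (true support missed by the selection) and $B := T'\setminus T$ (false positives), with sizes $a := |A|$ and $b := |B|$ related by the simple counting identity $b - a = k' - k$ (since $|T|=k$ and $|T'|=k'$). The defining property of $T'$ is that every entry of $x'$ on $T'$ dominates, in magnitude, every entry of $x'$ off $T'$; in particular, $|x'_j| \leq |x'_\ell|$ for all $j \in A$ and $\ell \in B$.

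The key estimate I would derive is $\|x_A\|_2 \leq \sqrt{1 + a/b}\,\eta$. First, averaging the domination inequality across $B$ gives
\[
\|x'_A\|_2^2 \;\leq\; a\cdot \min_{\ell\in B}|x'_\ell|^2 \;\leq\; \frac{a}{b}\,\|x'_B\|_2^2 \;=\; \frac{a}{b}\,\|(x-x')_B\|_2^2,
\]
where the last equality uses $x_B = 0$. Writing $x_A = (x-x')_A + x'_A$ and combining the triangle inequality with the scalar Cauchy--Schwarz estimate $u + \sqrt{a/b}\,v \leq \sqrt{1+a/b}\,\sqrt{u^2+v^2}$, I obtain
\[
\|x_A\|_2 \;\leq\; \sqrt{1+a/b}\;\sqrt{\|(x-x')_A\|_2^2 + \|(x-x')_B\|_2^2} \;\leq\; \sqrt{1+a/b}\;\eta,
\]
the last step being Pythagoras on the disjoint index sets $A,B$ together with the hypothesis $\|x-x'\|_2 \leq \eta$.

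Part (i) then follows immediately: since $a \leq k$ and the function $a \mapsto a/(a+k'-k)$ is monotone increasing, we have $a/b \leq k/k'$, hence $\beta \leq \sqrt{1+k/k'}$. For part (ii), I would argue by contradiction. If $T \not\subset T'$, then $a \geq 1$, and the strict lower bound $|x_j| > \gamma\eta$ on $T$ yields $\|x_A\|_2 > \sqrt{a}\,\gamma\eta$. Comparing this with the upper bound $\sqrt{1 + a/(a+k'-k)}\,\eta$ and inserting $\gamma^2 = 1 + 1/(k'-k+1)$, a brief manipulation reduces the required inequality to $(a-1)\cdot [\text{a nonnegative quantity}] < 0$, which fails for every $a \geq 1$, giving the contradiction. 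The main subtlety is extracting the \emph{tight} constant $\sqrt{1+a/b}$ rather than the looser $1 + \sqrt{a/b}$ produced by a naive triangle step; the scalar Cauchy--Schwarz bound above is precisely what secures this improvement and accounts for the stated $\sqrt{2}$ (rather than $2$) threshold when $k' = k$.
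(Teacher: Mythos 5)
Your proof is correct. For part (i) you follow essentially the paper's own route: the decomposition into $A=T\setminus T'$ and $B=T'\setminus T$, the counting identity $|B|=|A|+k'-k$, the domination of the entries of $x'$ on $T'$ over those off $T'$, and the restriction of $\|x-x'\|_2^2\le\eta^2$ to $A\cup B$ are all the same; your scalar Cauchy--Schwarz step $u+\sqrt{a/b}\,v\le\sqrt{1+a/b}\,\sqrt{u^2+v^2}$ is exactly the paper's elementary inequality $t+\sqrt{\eta^2-At^2}\le(1+1/A)^{1/2}\eta$ with $A=b/a$, just deployed in a different order. Part (ii) is where you genuinely diverge: the paper proves it by a decreasing-rearrangement argument, bounding the $(k'+1)$st largest entry $z_{k'+1}$ of $|x'|$ via $(k'-k+1)z_{k'+1}^2\le\eta^2-\max_{j\in T}|x_j-x'_j|^2$ and concluding $\min_{j\in T}|x'_j|>z_{k'+1}=\max_{j\in T'^c}|x'_j|$, whereas you reuse the refined form of (i) with the actual cardinality $a=|A|$ (not just $a\le k$) and derive a contradiction from $\|x_A\|_2^2>a\gamma^2\eta^2$. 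Your algebra checks out: with $d=k'-k$, clearing denominators in $a\gamma^2-1-\frac{a}{a+d}$ gives $(a-1)(d^2+ad+d+2a)\ge 0$ for $a\ge 1$, with equality exactly at $a=1$, so the strictness of $|x_j|>\gamma\eta$ is precisely what closes that case. Your route unifies (i) and (ii) around one quantitative estimate and makes transparent why $\gamma$ is the right threshold (the $a=1$ equality case); the paper's rearrangement argument yields the slightly stronger conclusion that the entries of $x'$ on $T$ strictly dominate those on $T'^c$, which is what a thresholding implementation actually uses. One cosmetic point: note explicitly that (i) is trivial when $T\setminus T'=\emptyset$ and that $B\neq\emptyset$ otherwise (as the paper does), so the division by $b$ is licensed.
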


\begin{proof}
\rm{(i)}
We have
\begin{equation}\label{error_split}
\sum_{j\in T} |x_j - x'_j|^2 + \sum_{j \in T^c} |x'_j|^2 =
\|x - x'\|_2^2 \leq \eta^2.
\end{equation}
In particular, this implies
\begin{equation}\label{TT'}
\sum_{j\in T\setminus T'} |x_j - x'_j|^2 + \sum_{j \in T'\setminus T} 
|x'_j|^2 \leq \eta^2.
\end{equation}

Suppose $T \setminus T' \not= \emptyset$. Then $T'\setminus T$ is also
nonempty. In fact, we have
$$|T'\setminus T| = |T \setminus T'| + k'-k.$$ 
Now, observe that
$$
\frac{1}{|T\setminus T'|} \sum_{j \in T\setminus T'} 
|x'_j|^2 \leq
\max_{j \in T\setminus T'} ~|x'_j|^2
\leq \min_{j \in T'\setminus T}~ |x'_j|^2
\leq \frac{1}{|T'\setminus T|} \sum_{j \in T'\setminus T} 
|x'_j|^2,
$$
which, together with \eqref{TT'} implies
$$ \|x_{T\setminus T'}\|_2 \leq \|x'_{T\setminus T'}\|_2
+ \|(x-x')_{T\setminus T'}\|_2
\leq 
\|x'_{T\setminus T'}\|_2 + 
\sqrt{\eta^2 - \frac{|T'\setminus T|}{|T\setminus T'|} 
\|x'_{T\setminus T'}\|^2_2}.
$$
It is easy to check that for any $A >0$, and any $0 \leq t \leq \eta/\sqrt{A}$,
\begin{equation}\label{lin_quad_max}
t + \sqrt{\eta^2 - At^2} \leq 
\left(1+\frac{1}{A}\right)^{1/2} \eta.
\end{equation}
The result follows by 
setting $A = |T'\setminus T|/|T\setminus T'|$ and noticing that 
$A \geq k'/k$.

\rm{(ii)}
Let $z_1 \geq \cdots \geq z_N$ be the decreasing 
rearrangement of $|x'_1|, \dots, |x'_N|$. We have
$$\sum_{j \in T} |x'_j|^2 \leq \sum_{i=1}^{k} z_i^2$$ 
so
$$\sum_{j \in T^c} |x'_j|^2 \geq \sum_{i=k+1}^{N} z_i^2 \geq
\sum_{i=k+1}^{k'+1} z_i^2 \geq (k'-k+1)z_{k'+1}^2.$$ 
Hence by \eqref{error_split} we have
$$\max_{j \in T} |x_j - x'_j|^2 + (k'-k+1) z_{k'+1}^2 \leq \eta^2.$$
Since $|x'_j| \geq |x_j| - |x_j - x'_j|$, the above inequality now implies
$$\min_{j \in T} |x'_j| 
\geq \min_{j \in T}|x_j| - \max_{j \in T}|x_j - x'_j|
\geq \min_{j \in T}|x_j| - \sqrt{\eta^2 - (k'-k+1)z_{k'+1}^2}.$$
Now, another application of \eqref{lin_quad_max} with 
$A = k'-k+1$ yields
$$- \sqrt{\eta^2 - (k'-k+1)z_{k'+1}^2} \geq z_{k'+1} - \gamma \eta$$
and therefore
$$\min_{j \in T} |x'_j| \geq \min_{j \in T}|x_j| + 
z_{k'+1} - \gamma \eta > z_{k'+1} = \max_{j \in T'^c} |x'_j|.$$
It is then clear that $T \subset T'$ because if $T'^c \cap T \not= \emptyset$, the
inequality
$$\max_{j \in T'^c} |x'_j| \geq \max_{j \in T'^c \cap T} |x'_j|
\geq \min_{j \in T} |x'_j|$$
would give us a contradiction.
\end{proof}

Note that if the $k'$ largest entries of $x'$ are picked with 
$k' > k$, then one would need to work with $T'$ for the fine
recovery stage, and therefore the starting assumptions on
$\Phi$ have to be modified for $k'$. For simplicity we shall
stick to $k'=k$ and consequently $\gamma = \sqrt{2}$.

\subsection*{Fine recovery}

Once $T$ is found, the $r$th order
Sobolev dual frame $F:=F_{\mathrm{Sob},r}$ of $E = \Phi_T$ is computed
and we set $\hat x_{\Sigma\Delta} = Fq$. We now
restate and prove Theorem~\ref{main_thm_2}.
\begin{thm}\label{main_thm_2_1} 
  Let $\Phi$ be an $m \times N$ matrix whose entries are
  i.i.d.~according to $\G(0,1)$. Suppose $\alpha \in (0,1)$ and
  $\lambda:=m/k \geq c (\log N)^{1/(1-\alpha)}$ where $c=c(r,\alpha)$. Then
  there are two constants $c'$ and $C$ that depend only on $r$ such that 
  with probability at least $1 - \exp(-c' m \lambda^{-\alpha})$ on the 
  draw of $\Phi$, the
  following holds: For every $x\in \Sigma^N_k$
  such that $\min_{j\in \mathrm{supp}(x)} |x_j| \ge C \delta$,
  the reconstruction $\hat{x}_{\sd}$ satisfies
\begin{equation}\label{sd_error_main_1}
\|x - \hat{x}_{\sd}\|_2 \lesssim_r  \lambda^{-\alpha(r-\frac{1}{2})}
\delta.
\end{equation}
\end{thm}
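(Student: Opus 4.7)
The plan is to combine the uniform singular-value lower bound of Theorem~\ref{sing_val_for_CS} with the two-stage coarse/fine recovery procedure described at the start of Section~\ref{SD_CS}, in such a way that both stages succeed on the same high-probability event. Concretely, I would first invoke Theorem~\ref{sing_val_for_CS} (in the $\G(0,1)$ normalization, i.e.~equation~\eqref{eq_cs_sd_1}) to obtain a single event $\Omega$, of probability at least $1-2\exp(-c_7 m\lambda^{-\alpha})$, on which every $m\times k$ submatrix $E=\Phi_T$ of $\Phi$ satisfies
\begin{equation*}
\sigma_{\min}(D^{-r}E)\ \ge\ c_1(r)\,\lambda^{\alpha(r-1/2)}\sqrt{m}.
\end{equation*}
By shrinking $c'$ if necessary, I would intersect $\Omega$ with the standard RIP/robust-recovery event for Gaussian $\Phi$, which holds with probability $1-e^{-cm}\gg 1-\exp(-c'm\lambda^{-\alpha})$, so that the robust $\ell_1$ bound with constant $C_1=10$ (as assumed at the beginning of Section~\ref{SD_CS}) is available on $\Omega$.

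Fix any $x\in\Sigma^N_k$ with support $T$ satisfying $\min_{j\in T}|x_j|\ge C\delta$, where $C$ is to be chosen. For the coarse stage, let $q=q_{\sd}$ be the output of the greedy $r$th-order scheme with alphabet $\delta\Z$; then $\|y-q\|_\infty\le 2^{r-1}\delta$ gives $\|y-q\|_2\le 2^{r-1}\delta\sqrt{m}$, so setting $\epsilon=2^{r-1}\delta\sqrt{m}$ in \eqref{ell1_eps_prog} yields a coarse reconstruction $x'$ with
\begin{equation*}
\|x-x'\|_2\ \le\ \eta\ :=\ 5\cdot 2^r\,\delta,
\end{equation*}
using the rescaled robust-recovery bound \eqref{ell1_eps_prog_alt}. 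Applying Proposition~\ref{supp_recov}(ii) with $k'=k$, so that $\gamma=\sqrt 2$, I choose $C:=5\sqrt 2\cdot 2^r$; then the hypothesis $\min_{j\in T}|x_j|>\gamma\eta$ is satisfied, and the proposition guarantees that the support $T'$ of the $k$ largest entries of $x'$ equals $T$ exactly. Thus the correct submatrix $E=\Phi_T$ is identified deterministically on $\Omega$.

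For the fine stage, I form the $r$th-order Sobolev dual $F_{\mathrm{Sob},r}$ of $E=\Phi_T$ and set $\hat x_{\sd}=F_{\mathrm{Sob},r}q$. The error bound \eqref{err-bound-Sob} gives
\begin{equation*}
\|x-\hat x_{\sd}\|_2\ \le\ \frac{\|u\|_2}{\sigma_{\min}(D^{-r}E)},
\end{equation*}
where $u$ is the internal state of the $\sd$ scheme. The greedy rule \eqref{greedy-quant} provides $\|u\|_\infty\le\delta/2$, hence $\|u\|_2\le(\delta/2)\sqrt m$. Combining this with the lower bound on $\sigma_{\min}(D^{-r}E)$ from the event $\Omega$ yields
\begin{equation*}
\|x-\hat x_{\sd}\|_2\ \le\ \frac{(\delta/2)\sqrt m}{c_1(r)\,\lambda^{\alpha(r-1/2)}\sqrt m}\ \lesssim_r\ \lambda^{-\alpha(r-1/2)}\delta,
\end{equation*}
which is precisely \eqref{sd_error_main_1}.

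The only real work is the uniformity: the bound on $\sigma_{\min}(D^{-r}\Phi_T)$ must hold simultaneously for every possible support $T$ that could be produced by the coarse stage, because $T$ is signal-dependent. This is exactly what the union bound in the proof of Theorem~\ref{sing_val_for_CS} delivers under the hypothesis $\lambda\gtrsim(\log N)^{1/(1-\alpha)}$, which is why that extra $\log N$ (rather than $\log m$) factor is needed. Everything else is a bookkeeping chain of deterministic estimates once the event $\Omega$ is in place.
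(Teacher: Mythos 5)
Your proposal is correct and follows essentially the same route as the paper's own proof: coarse $\ell_1$ recovery with $\epsilon=2^{r-1}\delta\sqrt m$, support identification via Proposition~\ref{supp_recov}(ii) with $k'=k$ and $\gamma=\sqrt2$ (your $C=5\sqrt2\cdot 2^r$ coincides with the paper's $C_1 2^{r-1/2}$ for $C_1=10$), then the Sobolev-dual fine stage combining $\|u\|_2\le(\delta/2)\sqrt m$ with the uniform bound of Theorem~\ref{sing_val_for_CS}, intersecting the RIP event with the singular-value event exactly as the paper does. No gaps worth noting.
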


\begin{proof} Suppose that $\lambda \geq c (\log
  N)^{1/(1-\alpha)}$ with $c=c_6$ as in the proof of
  Theorem~\ref{sing_val_for_CS}. Let $q_\sd$ be obtained by quantizing
  $y:=\Phi x$ via an $r$th order $\sd$ scheme with alphabet $\A=\delta
  \Z$ and with the quantization rule as in \eqref{greedy-quant}, and
  let $u$ be the associated state sequence as in
  \eqref{r-th-SD-eq}. Define $x^\#$ as the solution of the program
$$\min \|z\|_1\ \text{subject to}\ \|\Phi z - q_\sd \|_2 \le
\epsilon.
$$
Suppose that $\Phi$ admits the robust recovery constant $C_1$, i.e.,
the solution $x^\#$ of the program \eqref{ell1_eps_prog_alt} satisfies
$\|x-x^\#\|_2 \leq C_1 \epsilon/\sqrt{m}$ for every $x$ in
$\Sigma^N_k$ provided that $\|y - q_\sd\| \le \epsilon$.  Note that
$C_1$, as given for example in \cite{CRT}, only depends on the RIP
constants of $\Phi$ and is well-behaved if $m$ and $N$ satisfy the
hypothesis of the theorem. As discussed in Section~\ref{sec2}, in this
case we have
$\|y-q_\sd\|_2 \le 2^{r-1} \delta \sqrt{m}$
which implies 
$$\|x-x^\#\|_2 \leq
C_1 2^{r-1} \delta.$$ 
Assume that  
\begin{equation}\label{size_cond_x_0}
\min_{j \in T} |x_j| \geq  C_1\cdot 2^{r-1/2}\delta =: C\delta.
\end{equation}
Then, Proposition~\ref{supp_recov} (with $\gamma=\sqrt{2}$ and
$\eta=C_1 2^{r-1}$) shows that $T'$, the support of the $k$
largest entries of $x^\#$, is identical to the support $T$ of $x$.
Finally, set 
$$\hat{x}_\sd=F_{\text{sob},r} q_\sd$$
where $F_{\text{sob},r}$ is the $r$th order Sobolev dual of $\Phi_T$.
Using the fact that $\|u\|_2 \le 2^{-1} \delta \sqrt{m}$ (see
Section~\ref{sec2}) together with the conclusion of
Theorem~\ref{sing_val_for_CS} and the error bound
\eqref{err-bound-Sob}, we conclude that
\begin{equation}
\|x - \hat x_{\Sigma\Delta} \|_2 \leq 
 \frac{\|u\|_2}{\sqrt{m}\,\sigma_{\mathrm{min}}(D^{-r}E)}  \leq
\frac{\lambda^{-\alpha(r-1/2)}}{2c_1}
\delta.
\end{equation}
Note that the RIP and therefore the robust recovery will hold with probability
$1-\exp(c''m)$, and our Sobolev dual reconstruction error bound will hold 
with probability $1 - \exp(-c_7 m \lambda^{-\alpha})$. Here $c_1$ and
$c_7$ are as in the proof of Theorem~\ref{sing_val_for_CS}. 

\end{proof}

\begin{rem}
  To interpret the size condition in a concrete case, assume that
  $\Phi$ admits the robust recovery constant $C_1=10$, and that we
  have
\begin{equation}\label{size_cond_x}
\min_{j \in T} |x_j| \geq \sqrt{2} \eta = 5\cdot 2^{r+1/2}\delta.
\end{equation}
If PCM is used as the quantization method, then
the best error guarantee we have that holds uniformly on $T$ would be
$$\|x - x^\#_{\mathrm{PCM}} \|_\infty \leq \|x - x^\#_{\mathrm{PCM}} \|_2 
\leq 5 \delta.$$
It can be argued that the approximately
recovered entries of $x^\#_{\mathrm{PCM}}$ are meaningful
only when the minimum nonzero entry of $x$ is at least as large
as the maximum uncertainty in $x^\#_{\mathrm{PCM}}$, which is
only known to be bounded by $5\delta$. Hence, 
in some sense the size condition \eqref{size_cond_x} is natural
(modulo the factor $2^{r+1/2}$).
\end{rem}

\subsection*{Quantizer choice and rate-distortion issues}

So far we have not made any assumptions on the step size $\delta$ of
the uniform infinite quantizer $\A = \delta \Z$. An important question concerns
how large $\delta$ should be for the most effective use of resources.
This question is motivated by the fact that infinite quantizers are not
practical and have to be replaced by finite ones. In the same vein, an alternative
question is to determine the minimum number of bits that the quantizer
needs to incorporate as well as 
the resulting approximation error. First, let us assume that 
\begin{equation}\label{dyadic_size_assumption}
 A\leq |x_j| \leq \rho := 2^b A ~~~\mbox{ for all }
j \in T.
\end{equation}
For usefulness of our results, one would be interested in
the regime $A \ll \rho$.  Thus, we introduce $2^b = \rho/A $ to
represent the number of dyadic scales over which the input is allowed
to range.  Clearly, $\delta_r$, the quantization step size used by an
$r$th order $\Sigma\Delta$ scheme for our support recovery results to
hold must satisfy $\delta_r \leq \frac{A/5}{2^{r+1/2}}$ (as before, we
assume $C_1=10$). Let us for the moment use the largest allowable
step-size, i.e., set 
\begin{equation}\label{delta_r}
\delta_r := \frac{A/5}{2^{r+1/2}}.
\end{equation}

Next, let us assume that a $B_r$-bit uniform quantizer of step size
$\delta_r$ is to replace $\A = \delta \Z$. We know that $\|q\|_\infty$
could be as large as $2^{r-1} \delta_r + \|y\|_\infty$, therefore we
need to bound $\|y\|_\infty$ efficiently.  If we use the RIP, then
$\Phi$ does not expand the $\ell_2$-norm of $k$-sparse vectors by more
than a factor of $2\sqrt{m}$ (note our choice of normalization for
$\Phi$), and therefore it follows that
$$\|y\|_\infty \leq \|y\|_2 \leq 2\sqrt{m} \|x\|_2 \leq 2  \rho\sqrt{mk},$$
which is a restatement of the inequality
$$ \|E \|_{\ell_\infty^k \to \ell_\infty^m} \leq
\sqrt{k} \|E \|_{\ell_2^k \to \ell_2^m}$$ 
that holds for any $m \times k$ matrix $E$. However, it can be argued
that the $(\infty,\infty)$-norm of a random matrix should typically be
smaller. In fact, if $E$ were drawn from the Bernoulli model, i.e.,
$E_{ij} \sim \pm 1$, then we would have
$$
\|E \|_{\ell_\infty^k \to \ell_\infty^m} = 
k = \lambda^{-1/2} \sqrt{mk},
$$
as can easily be seen from the general formula
\begin{equation}
\|E \|_{\ell_\infty^k \to \ell_\infty^m} =
 \max_{1 \leq i \leq m} \sum_{j=1}^k |E_{ij}|.
\end{equation}

Using simple concentration inequalities for Gaussian random
variables, it turns out that for the range of aspect ratio $\lambda =
m/k$ and probability of encountering a matrix $\Phi$ that we are
interested in, we have $\|E \|_{\ell_\infty^k \to \ell_\infty^m} \leq
\lambda^{-\alpha/2} \sqrt{mk}$ for every $m \times k$ submatrix $E$ of
$\Phi$.  We start with the following estimate:
\begin{prop}\label{l1_Gauss}
Let $\xi_1,\dots,\xi_k$ i.i.d.~standard Gaussian variables. Then, 
for any $\Theta  > 1$, 
\begin{equation}
\mathbb{P} \left( \sum_{j=1}^k |\xi_j| > \Theta  \right)
\leq 2^k e^{-\Theta^2 / (2k)}.
\end{equation}
\end{prop}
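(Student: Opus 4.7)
The plan is to carry out a standard Chernoff/moment-generating-function argument on the sum $S := \sum_{j=1}^k |\xi_j|$. Markov's inequality applied to $e^{tS}$ for $t \geq 0$ gives
\[
\mathbb{P}(S > \Theta) \leq e^{-t\Theta} \prod_{j=1}^k \mathbb{E}\bigl[e^{t|\xi_j|}\bigr],
\]
by independence, so the task reduces to estimating the one-dimensional moment generating function of $|\xi|$ for $\xi \sim \mathcal{N}(0,1)$.

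The first step is to obtain a clean bound on $\mathbb{E}[e^{t|\xi|}]$. A direct computation would pick up an error function term, which is mildly inconvenient. The slick workaround is the crude pointwise inequality $e^{t|\xi|} \leq e^{t\xi} + e^{-t\xi}$, valid for $t \geq 0$. Taking expectations and using that $\mathbb{E}[e^{\pm t\xi}] = e^{t^2/2}$ yields
\[
\mathbb{E}\bigl[e^{t|\xi|}\bigr] \leq 2 e^{t^2/2}.
\]
Plugging this into the Markov estimate gives $\mathbb{P}(S > \Theta) \leq 2^k e^{-t\Theta + kt^2/2}$ for every $t \geq 0$.

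The second step is to optimize in $t$. The exponent $-t\Theta + kt^2/2$ is minimized at $t^* = \Theta/k$, which is nonnegative since $\Theta > 1 > 0$, and the minimum value is $-\Theta^2/(2k)$. Substituting yields the claimed bound $2^k e^{-\Theta^2/(2k)}$.

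There is no real obstacle here; everything is routine. The only subtle choice is the use of the two-term pointwise bound on $e^{t|\xi|}$ to avoid dealing with $\Phi(t)$ explicitly, and this is precisely what produces the clean prefactor $2^k$ that will later combine nicely with the $5^k$-style union bounds over nets used elsewhere in Section \ref{main_proof}.
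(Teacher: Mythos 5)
Your proof is correct and follows essentially the same route as the paper: a Chernoff/Markov bound on $e^{tS}$, the estimate $\mathbb{E}\bigl[e^{t|\xi|}\bigr] \leq 2e^{t^2/2}$, and optimization at $t = \Theta/k$, yielding exactly the exponent $-\Theta^2/(2k)$ and the prefactor $2^k$. The only cosmetic difference is that the paper obtains the moment-generating-function bound by completing the square and bounding the half-line Gaussian integral by $1$, whereas you use the pointwise inequality $e^{t|\xi|} \leq e^{t\xi} + e^{-t\xi}$; both give the same constant.
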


\begin{proof} 
\begin{eqnarray}
\mathbb{P} \left( \sum_{j=1}^k |\xi_j| > \Theta  \right)
& \leq & 
\min_{t \geq 0}~ \int_{\mathbb{R}^k} e^{-\left(\Theta  
-\sum_{j=1}^k |x_j|\right)t} \prod_{j=1}^k 
e^{-x_j^2/2}\,\frac{\mathrm{d}x_j}{\sqrt{2\pi}} \cr
& = &
\min_{t \geq 0}~ e^{- \Theta t}
\left (e^{t^2/2} \int_\R e^{-\frac{1}{2} (|x| - t)^2} 
\,\frac{\mathrm{d}x}{\sqrt{2\pi}} \right)^k\cr
& = & 
\min_{t \geq 0}~ e^{- \Theta t}
\left (2 e^{t^2/2} \int_0^\infty e^{-\frac{1}{2} (x - t)^2} 
\,\frac{\mathrm{d}x}{\sqrt{2\pi}} \right)^k\cr
& \leq &
2^k \min_{t \geq 0}~ e^{- \Theta t + k t^2/2} \cr
& = & 
2^k e^{-\Theta^2 / (2k)}.
\end{eqnarray}
where in the last step we set $t = \Theta/k$.
\end{proof}

\begin{prop}\label{inf_inf_Gauss}
  Let $\Phi$ be an $m \times N$ random matrix whose entries are
  i.i.d.~$\G(0,1)$.  Let $0 < \alpha < 1$ be any number and assume
  that
\begin{equation}\label{inf_inf_cond}
\lambda := \frac{m}{k} \geq c_1 (\log N)^{1/(1-\alpha)},
\end{equation}
where $c_1$ is an appropriate constant. Then with probability at least
$1 - e^{-c_2 m^{1-\alpha}k^\alpha}$ for some $c_2 > 0$, every $m
\times k$ submatrix $E$ of $\Phi$ satisfies
\begin{equation}
\|E \|_{\ell_\infty^k \to \ell_\infty^m} \leq 
\lambda^{-\alpha/2} \sqrt{mk}.
\end{equation}
\end{prop}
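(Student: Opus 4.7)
The plan is to reduce the bound on $\|E\|_{\ell_\infty^k\to\ell_\infty^m}$ for every $m\times k$ submatrix $E$ to a uniform bound on the $\ell^1$-norms of the $k$-sparse restrictions of the rows of $\Phi$, then apply Proposition \ref{l1_Gauss} row-by-row and union-bound over all choices.

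First I would use the identity
\begin{equation*}
\|E\|_{\ell_\infty^k\to\ell_\infty^m}=\max_{1\le i\le m}\sum_{j=1}^k|E_{ij}|
\end{equation*}
recorded in the excerpt. For a fixed row $i$ and a fixed index set $T\subset\{1,\ldots,N\}$ with $|T|=k$, the entries $(\Phi_{ij})_{j\in T}$ are i.i.d.\ standard Gaussians, so Proposition~\ref{l1_Gauss} applies with $\Theta:=\lambda^{-\alpha/2}\sqrt{mk}=m^{(1-\alpha)/2}k^{(1+\alpha)/2}$. Then $\Theta^2/(2k)=\tfrac12 m^{1-\alpha}k^\alpha$, giving
\begin{equation*}
\mathbb{P}\!\left(\sum_{j\in T}|\Phi_{ij}|>\Theta\right)\le 2^k\,\exp\!\left(-\tfrac12 m^{1-\alpha}k^\alpha\right).
\end{equation*}

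Next I would union-bound over the $m$ rows and the $\binom{N}{k}\le N^k$ column subsets of size $k$. The failure probability is at most
\begin{equation*}
m\,N^k\,2^k\,\exp\!\left(-\tfrac12 m^{1-\alpha}k^\alpha\right)=\exp\!\left(\log m + k\log(2N) - \tfrac12 m^{1-\alpha}k^\alpha\right).
\end{equation*}
Because $m^{1-\alpha}k^\alpha=k\,\lambda^{1-\alpha}$, the hypothesis $\lambda\ge c_1(\log N)^{1/(1-\alpha)}$ guarantees $k\log(2N)\le \tfrac14 m^{1-\alpha}k^\alpha$ for a suitable absolute choice of $c_1$, while the $\log m$ term is absorbed using $\log m\le k\log m\le k\log N$ (after enlarging $c_1$ if necessary). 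Consequently the total exponent is at most $-c_2 m^{1-\alpha}k^\alpha$ for some $c_2>0$, which yields the claimed probability bound.

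The argument is almost entirely mechanical once Proposition~\ref{l1_Gauss} is in hand; the only point requiring some care is the choice of the constant $c_1$ in \eqref{inf_inf_cond}, which must be large enough to dominate both the $\binom{N}{k}$ combinatorial factor and the $2^k$ moment-generating factor in the tail bound. This is precisely analogous to the constant calibration carried out in the proof of Theorem~\ref{sing_val_for_CS}, and presents no real obstacle beyond bookkeeping.
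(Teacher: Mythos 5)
Your proposal is correct and follows essentially the same route as the paper: apply Proposition~\ref{l1_Gauss} to each row restricted to a fixed support, union-bound over the $m$ rows and the $\binom{N}{k}\le N^k$ supports to get the failure probability $m\,N^k\,2^k e^{-\Theta^2/(2k)}$ with $\Theta=\lambda^{-\alpha/2}\sqrt{mk}$, and then absorb the combinatorial terms into $\tfrac12 m^{1-\alpha}k^\alpha=\tfrac12 k\lambda^{1-\alpha}$ by taking $c_1$ large enough. The only difference is that you spell out the constant bookkeeping slightly more explicitly than the paper does.
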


\begin{proof}
Proposition \ref{l1_Gauss} straightforwardly implies that
\begin{equation}
\mathbb{P} \left( \{ \exists T \mbox { such that } |T|=k \mbox{ and }
\| \Phi_T \|_{\ell_\infty^k \to \ell_\infty^m} > \Theta\}
\right)
\leq {N \choose k} m  2^k e^{-\Theta^2 / (2k)}.
\end{equation}
Let $\Theta = \lambda^{-\alpha/2} \sqrt{mk}$. It remains to show that
$$ k \log N  + k \log 2 + \log m + c_2 m^{1-\alpha} k^\alpha
\leq \frac{\Theta^2}{2k}.$$ 
If $c_1$ in \eqref{inf_inf_cond} is sufficiently large and $c_2$ is
sufficiently small, then the expression on the left hand side is
bounded by $ k \lambda^{1-\alpha} /2 = \Theta^2/(2k)$.
\end{proof}

\par
Without loss of generality, we may now assume that $\Phi$ also
satisfies the conclusion of Proposition \ref{inf_inf_Gauss}.  Hence we
have an improved bound on the range of $y$ given by
\begin{equation}
\|y \|_\infty \leq \rho\lambda^{-\alpha/2} \sqrt{mk}
= \rho \lambda^{(1-\alpha)/2} k.
\end{equation}
We assume $B_r$ is chosen to satisfy
\begin{equation}
\label{bit_eq}
2^{B_r-1} \delta_r = 2^{r-1}\delta_r +  \rho \lambda^{(1-\alpha)/2} k, 
\end{equation}
so that the quantizer is not overloaded. Since $\rho/\delta_r 
\approx 2^{r + 1/2 + b}$ by \eqref{dyadic_size_assumption} and
\eqref{delta_r}, we see that the second term on the 
right hand side of \eqref{bit_eq} is significantly
larger than the first, which implies
\begin{equation}\label{dominant_term}
 2^{B_r-1} \delta_r \approx 2^b A \lambda^{(1-\alpha)/2}  k.
\end{equation}
 Hence, using \eqref{delta_r} again, $B_r$ must satisfy
\begin{equation}\label{eq:rate}
2^{B_r-1}\approx 5~2^{b+r+1/2}\lambda^{(1-\alpha)/2}k.
\end{equation}
Based on Theorem \ref{main_thm_2_1}, the approximation error (the distortion) 
$\mathscr{D}_{\Sigma\Delta}$ incurred
after the fine recovery stage via Sobolev duals satisfies the bound
\begin{equation}
\mathscr{D}_{\Sigma\Delta} \lesssim_r
\lambda^{-\alpha(r-1/2)} \delta_r \approx
 \frac{ \lambda^{-\alpha( r - 1/2)}
   A}{2^{r+1/2}}. \label{eq:distortion}
\end{equation}

A similar calculation for the PCM encoder with the same step size $\delta_r$
and the standard $\ell_1$ decoder results in the necessity for
roughly the same number of bits $B_r$ as the $\Sigma\Delta$ encoder
(because of the approximation \eqref{dominant_term}), but provides only
the distortion bound
\begin{equation}
\mathscr{D}_{\mathrm{PCM}} \lesssim
 \delta_r \approx \frac{A}{2^{r+1/2}}.
 \label{eq:distortion_PCM}
 \end{equation}

Note that the analysis above requires that both PCM and $\Sigma\Delta$
encoders utilize high-resolution quantizers, however the benefit of
using $\Sigma\Delta$ encoders is obvious upon comparing
\eqref{eq:distortion} and \eqref{eq:distortion_PCM}.

\section{Numerical experiments}
\label{numerics}

In order to test the accuracy of Theorem \ref{bound_sigma_min}, our
first numerical experiment concerns the minimum singular value of
$D^{-r}E$ as a function of $\lambda = m/k$. In Figure \ref{fig1}, we
plot the worst case (the largest) value, among $1000$ realizations, of
$1/\sigma_{min}(D^{-r}E)$ for the range $1 \leq \lambda \leq 25$,
where we have kept $k=50$. As predicted by this theorem, we find that
the negative slope in the log-log scale is roughly equal to $r-1/2$,
albeit slightly less, which seems in agreement with the presence of
our control parameter $\alpha$. As for the size of the $r$-dependent constants,
the function $5^r \lambda^{-r+1/2}$
seems to be a reasonably close numerical fit, which also explains why
we observe the separation of the individual curves after $\lambda > 5$.

Our next experiment involves the full quantization algorithm for
compressed sensing including the ``recovery of support'' and ``fine
recovery'' stages. To that end, we first generate a $1000\times 2000$
matrix $\Phi$, where the entries of $\Phi$ are drawn i.i.d. according
to $\G(0,1)$. To examine the performance of the proposed scheme as the
redundancy $\lambda$ increases in comparison to the performance of the
standard PCM quantization, we run a set of experiments: In each
experiment we fix the sparsity $k\in\{5,10,20,40\}$, and we
generate $k$-sparse signals $x$ with the non-zero entries of each
signal supported on a random set $T$, but with magnitude
$1/\sqrt{k}$. This ensures that $\|x\|_2=1$. Next, for
$m\in\{100,200,...,1000\}$ we generate the measurements $y=\Phi^{(m)}
x$, where $\Phi^{(m)}$ is comprised of the first $m$ rows of
$\Phi$. We then quantize $y$ using PCM, as well as the $1$st and $2$nd
order $\Sigma\Delta$ quantizers, defined via \eqref{r-th-SD-eq} and
\eqref{greedy-quant} (in all cases the quantizer step size is
$\delta=10^{-2}$). For each of these quantized measurements $q$, we
perform the coarse recovery stage, i.e., we solve the associated
$\ell_1$ minimization problem to recover a coarse estimate of $x$ as
well as an estimate $\widetilde{T}$ of the support $T$. The
approximation error obtained using the coarse estimate (with PCM
quantization) is displayed in Figures \ref{fig2} and \ref{fig3} (see
the dotted curve). Next, we implement the fine recovery stage of our
algorithm. In particular, we use the estimated support set
$\widetilde{T}$ and generate the associated dual
$F_{\text{sob},r}$. Defining
$F_{\text{sob},0}:=(\Phi_{\widetilde{T}}^{(m)})^\dagger$, in each
case, our final estimate of the signal is obtained via the fine
recovery stage as $\hat{x}_{\tilde{T}}=F_{\text{sob},r}q$,
$\hat{x}_{\tilde{T^c}}=0$. Note that this way, we obtain an
alternative reconstruction also in the case of PCM. We repeat this
experiment $100$ times for each $(k,m)$ pair and plot the average of
the resulting errors $\|x-\tilde{x}\|_2$ as a function of $\lambda$ in
Figure \ref{fig2} as well as the maximum of $\|x-\hat{x}\|_2$ in
Figure \ref{fig3}.  For our final experiment, we choose the entries of
$x_T$ i.i.d. from $\G(0,1)$, and use a quantizer step size
$\delta=10^{-4}$.  Otherwise, the experimental setup is identical to
the previous one. The average of the resulting errors
$\|x-\tilde{x}\|_2$ as a function of $\lambda$ is reported in Figure
\ref{fig4} and the maximum of $\|x-\hat{x}\|_2$ in Figure \ref{fig5}.

The main observations that we obtain from these experiments are as follows:
\begin{itemize}
\item $\sd$ schemes outperform the coarse reconstruction obtained from
  PCM quantized measurements significantly even when $r=1$ and even
  for small values of $\lambda$.
\item For the $\Sigma\Delta$ reconstruction error, the negative slope in the
  log-log scale is roughly equal to $r$. This outperforms the (best
  case) predictions of Theorem \ref{main_thm_2} which are obtained 
  through the operator
  norm bound and suggests the presence of 
  further cancellation due to the statistical
  nature of the $\Sigma\Delta$ state variable $u$, similar to the
  white noise hypothesis. 
\item When a fine recovery stage is employed in the case of PCM
  (using the Moore-Penrose pseudoinverse of the submatrix of $\Phi$
  that corresponds to the estimated support of $x$), the approximation
  is consistently improved (when compared to the coarse
  recovery). Moreover, the associated approximation error is observed to be of
  order $O(\lambda^{-1/2})$, in contrast with the error corresponding
  to the coarse recovery from PCM quantized measurements (with the 
  $\ell_1$ decoder only) where the
  approximation error does not seem to depend on $\lambda$. A
  rigorous analysis of this behaviour will be given in a separate
  manuscript.
\end{itemize}

\begin{figure}[b] %  figure placement: here, top, bottom, or page
\centerline{\includegraphics[width=5in]{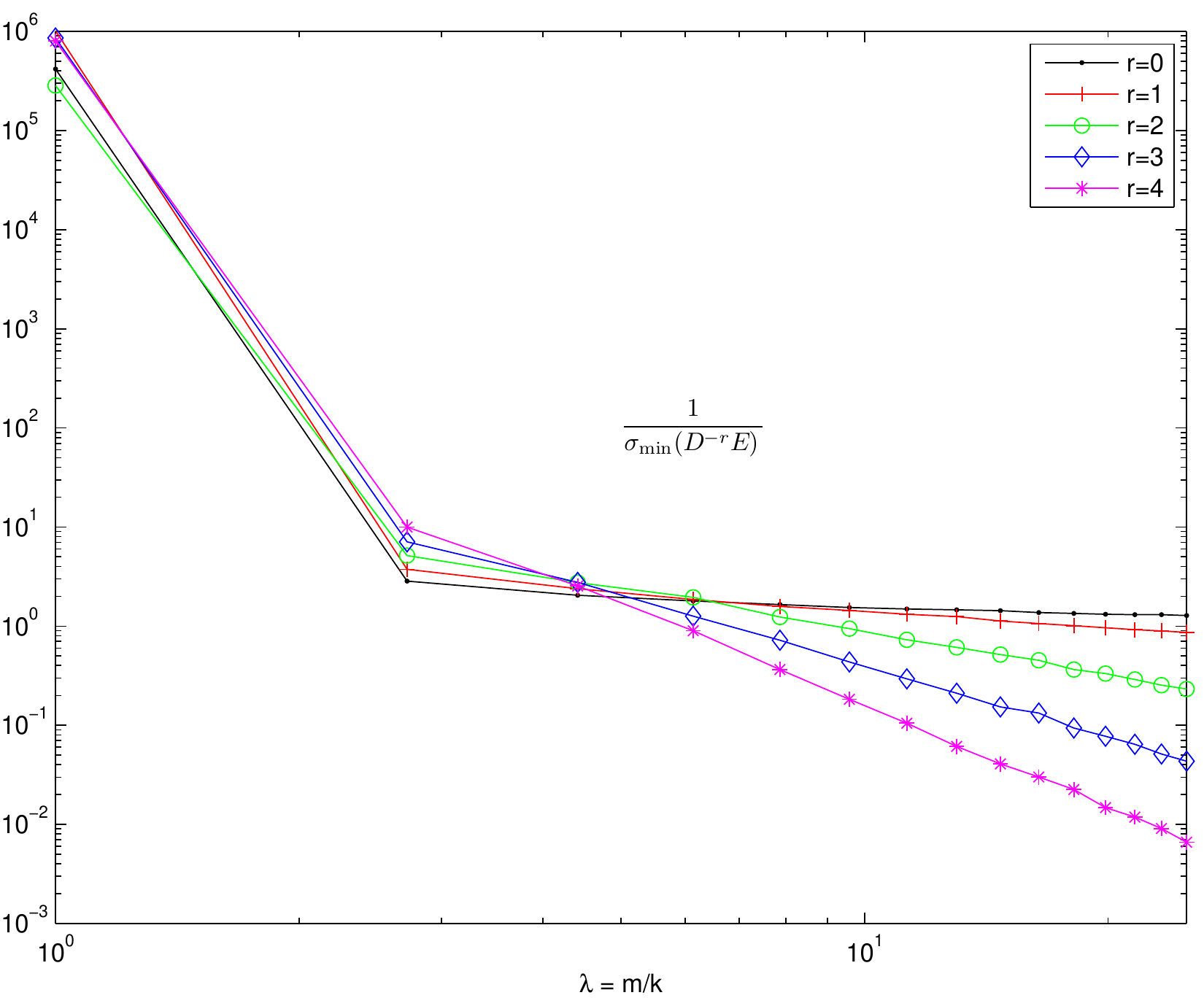}}
\caption{Numerical behavior (in log-log scale) of $1/\sigma_{min}(D^{-r}E)$
as a function of $\lambda = m/k$, for $r=0,1,2,3,4$. In this figure,
$k=50$ and $1 \leq \lambda \leq 25$. 
For each problem size, the largest value of 
$1/\sigma_{min}(D^{-r}E)$ among
$1000$ realizations of a random $m \times k$
matrix $E$ sampled from the Gaussian ensemble $\G(0,\frac{1}{m}I_m)$
was recorded.}
\label{fig1}
\end{figure}

\begin{figure}[b]
\begin{center}
\subfigure[]{
\includegraphics[width=3in,height=2in]{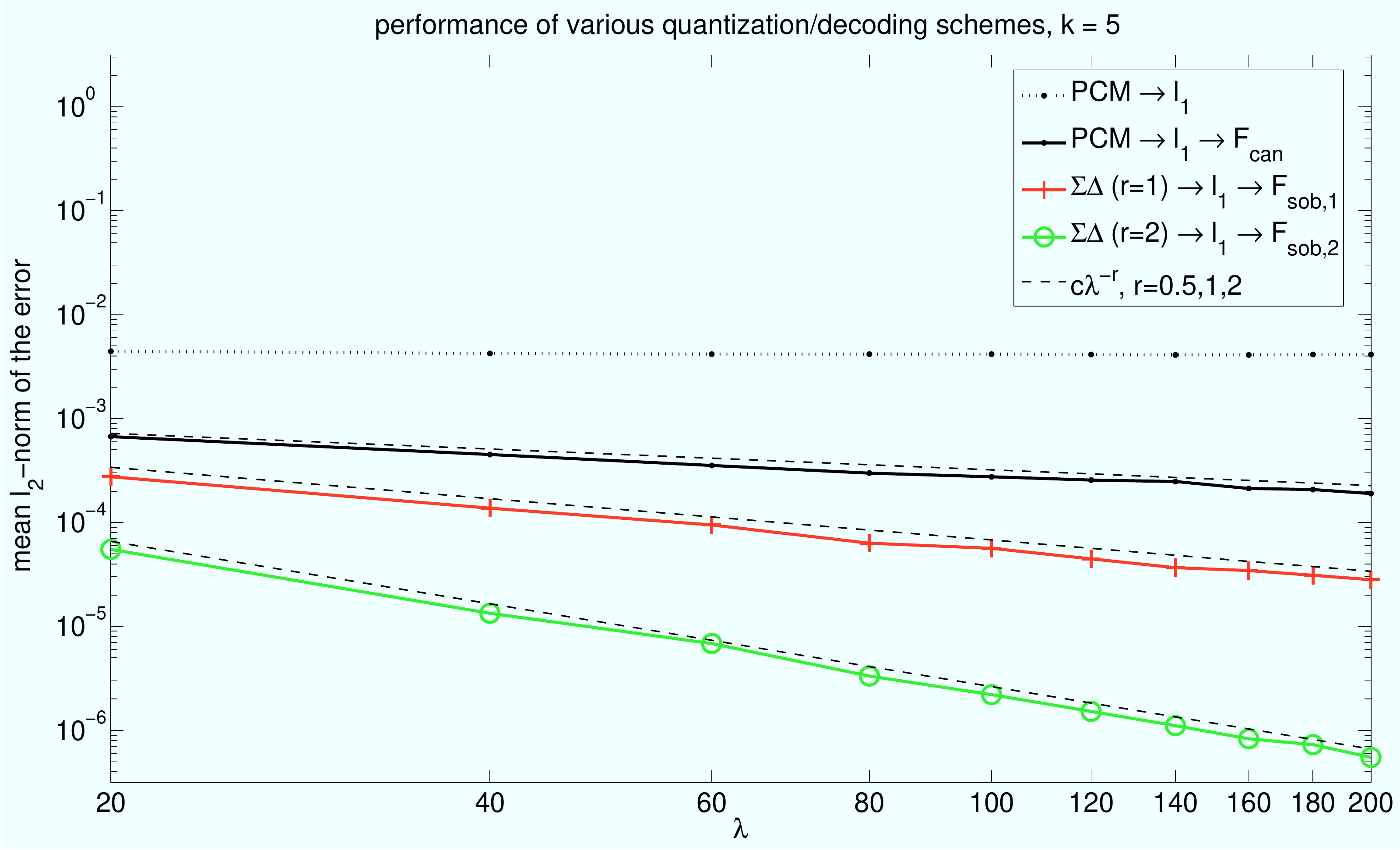}}
\subfigure[]{
\includegraphics[width=3in,height=2in]{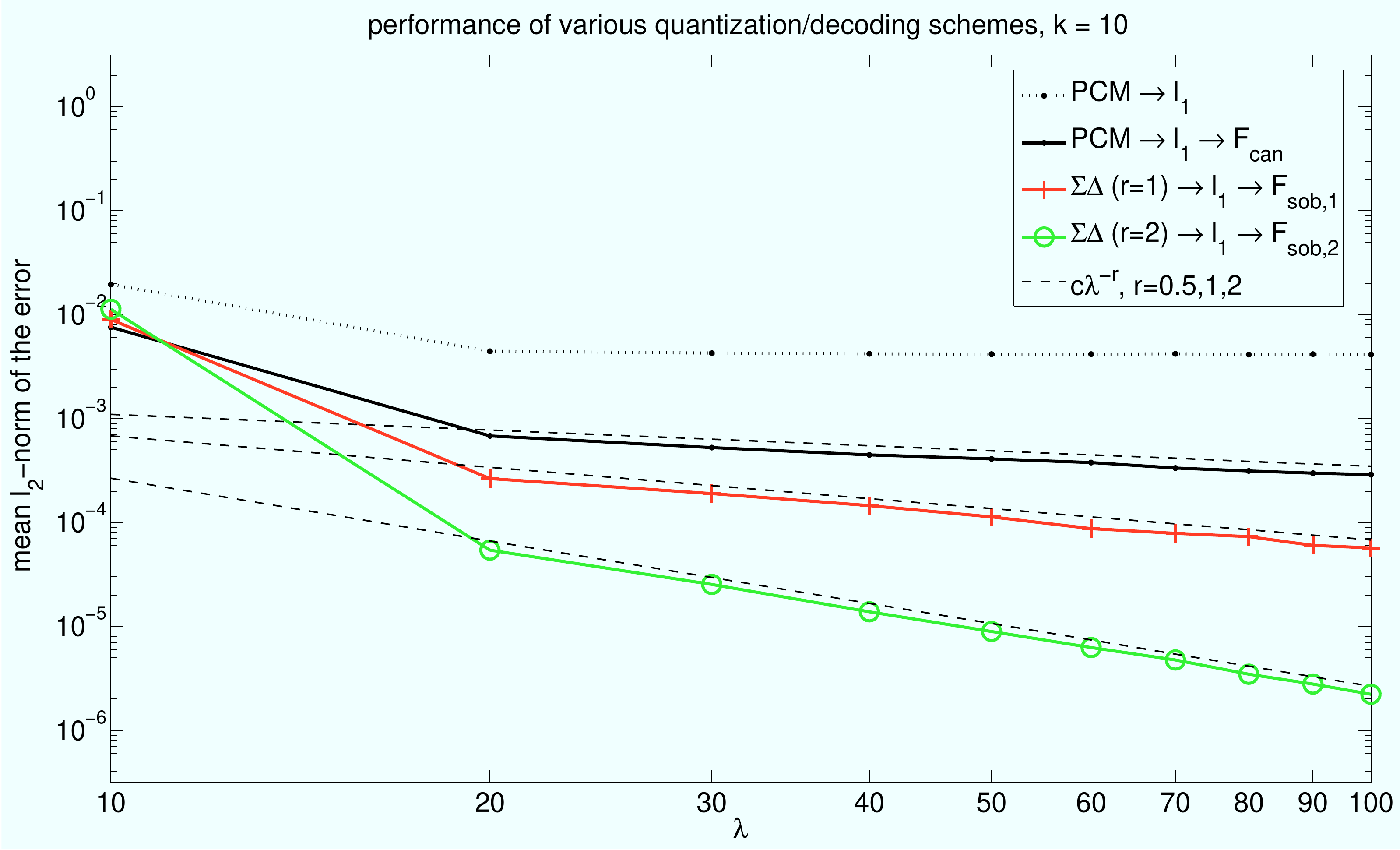}}
\subfigure[]{
\includegraphics[width=3in,height=2in]{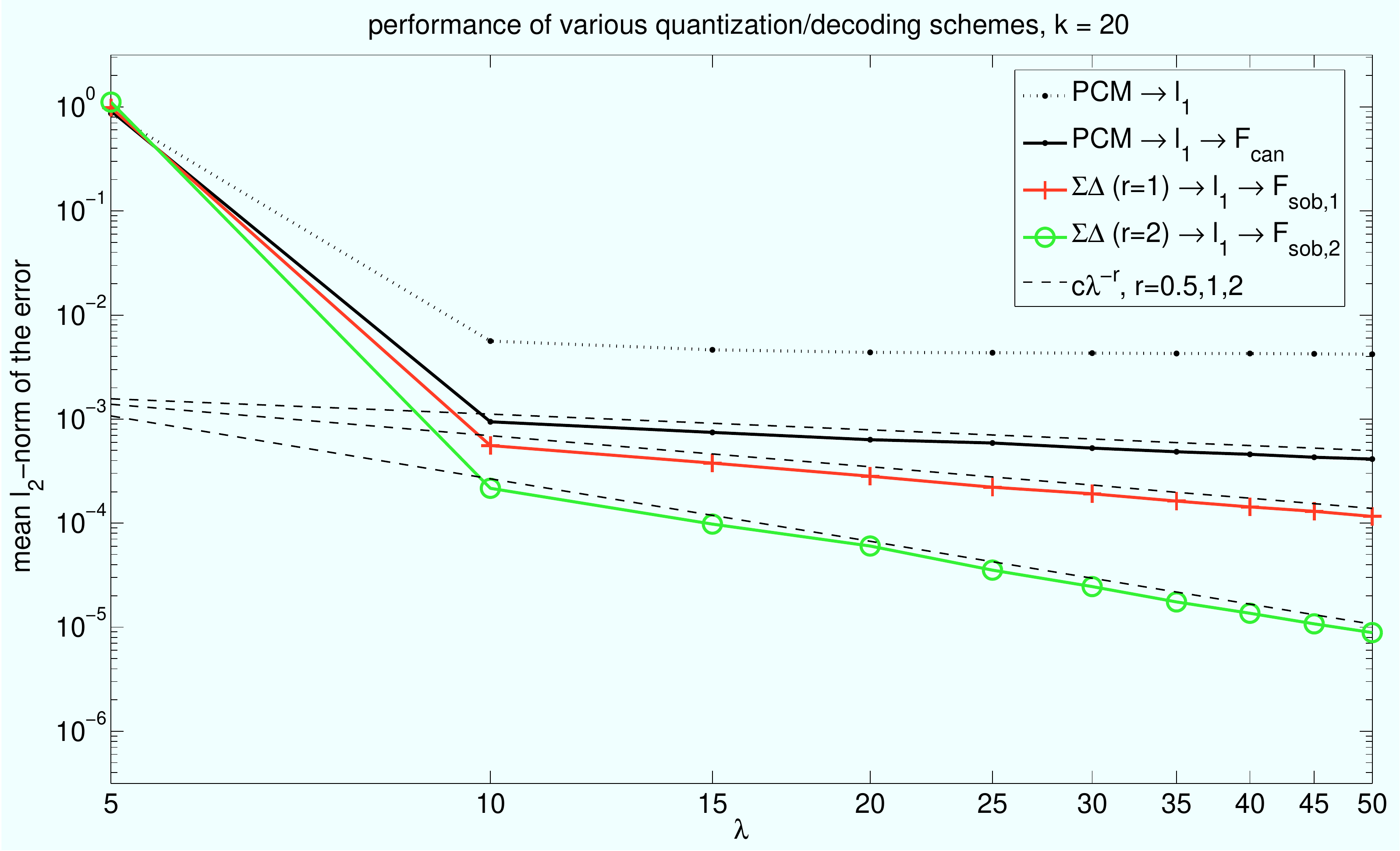}}
\subfigure[]{
\includegraphics[width=3in,height=2in]{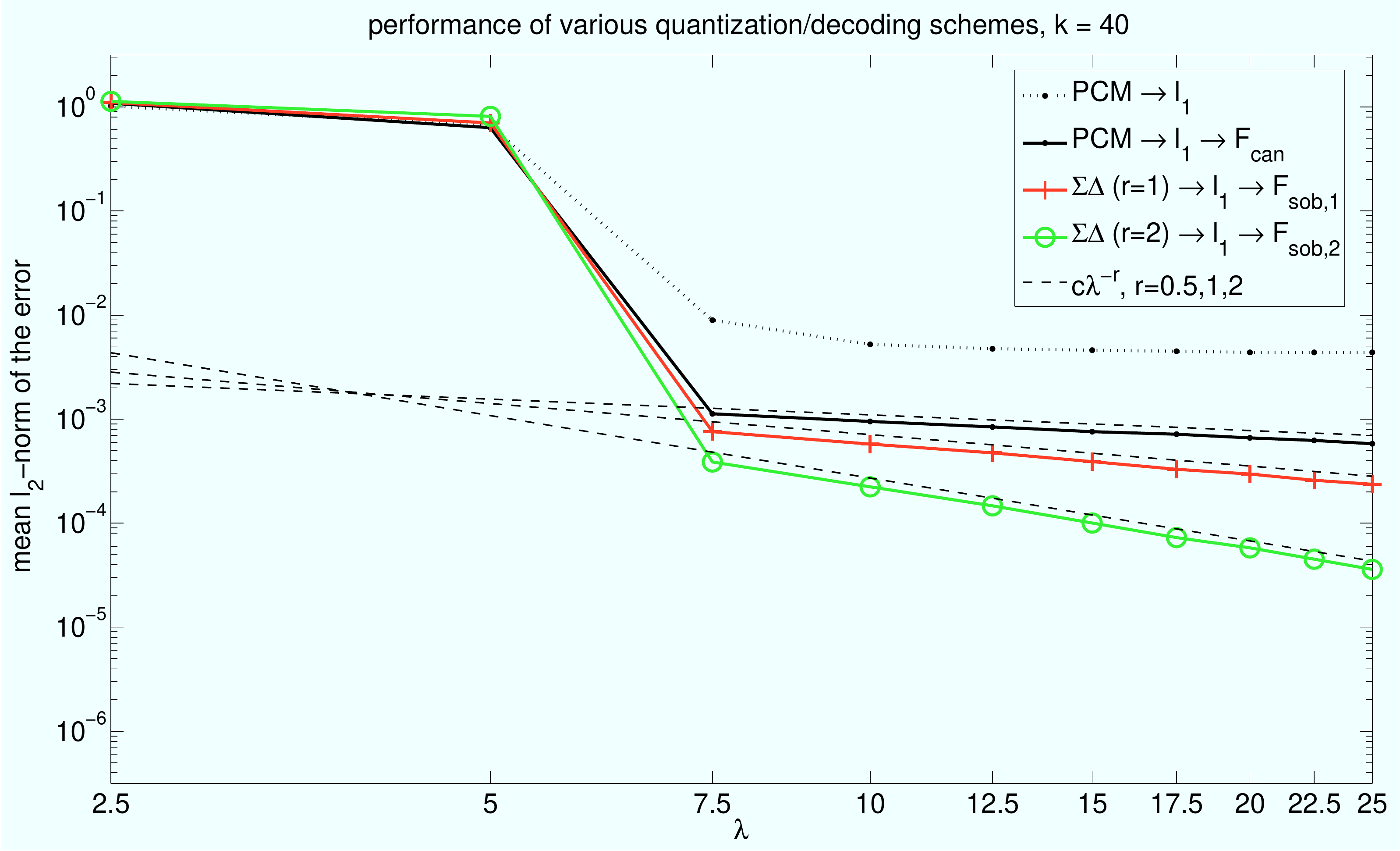}}
\caption{\label{fig2} The average performance of the proposed $\Sigma\Delta$ quantization and reconstruction schemes for various values of $k$. 
%Note the improved performance in comparison to PCM quantization. 
For this experiment the non-zero entries of $x$ are constant and $\delta=0.01$.}
\end{center}
\end{figure}

\begin{figure}[b]
\begin{center}
\subfigure[]{
\includegraphics[width=3in,height=2in]{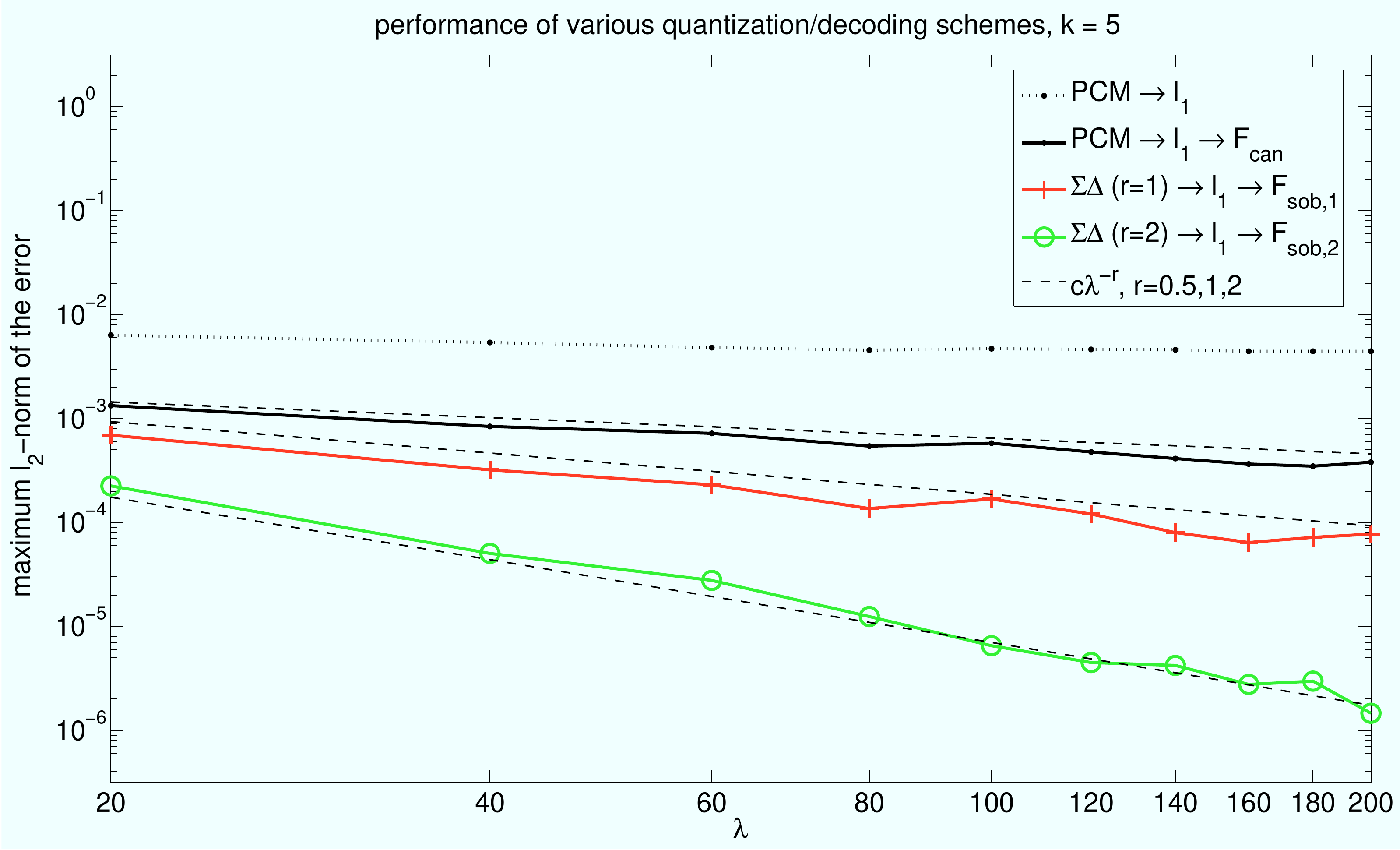}}
\subfigure[]{
\includegraphics[width=3in,height=2in]{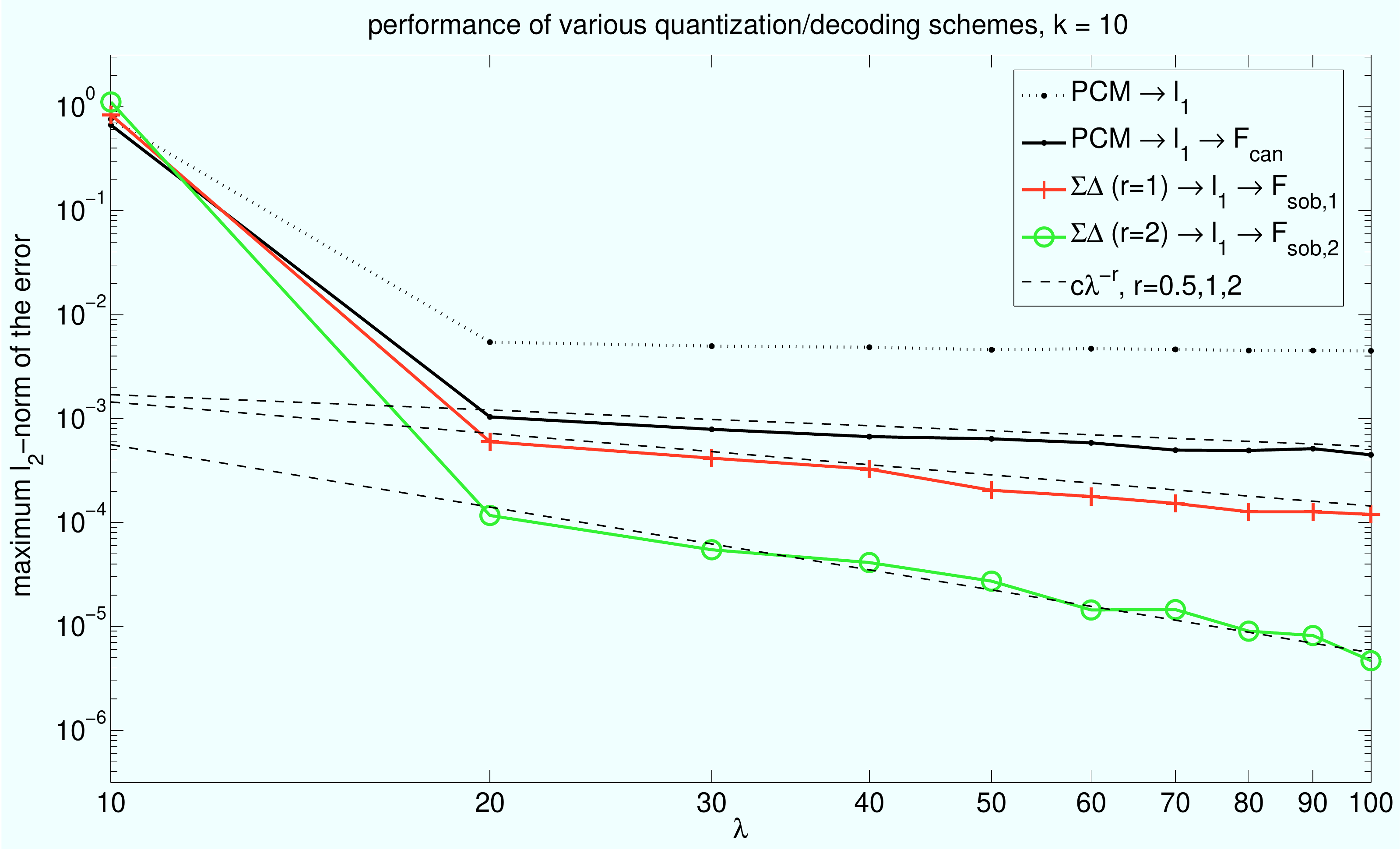}}
\subfigure[]{
\includegraphics[width=3in,height=2in]{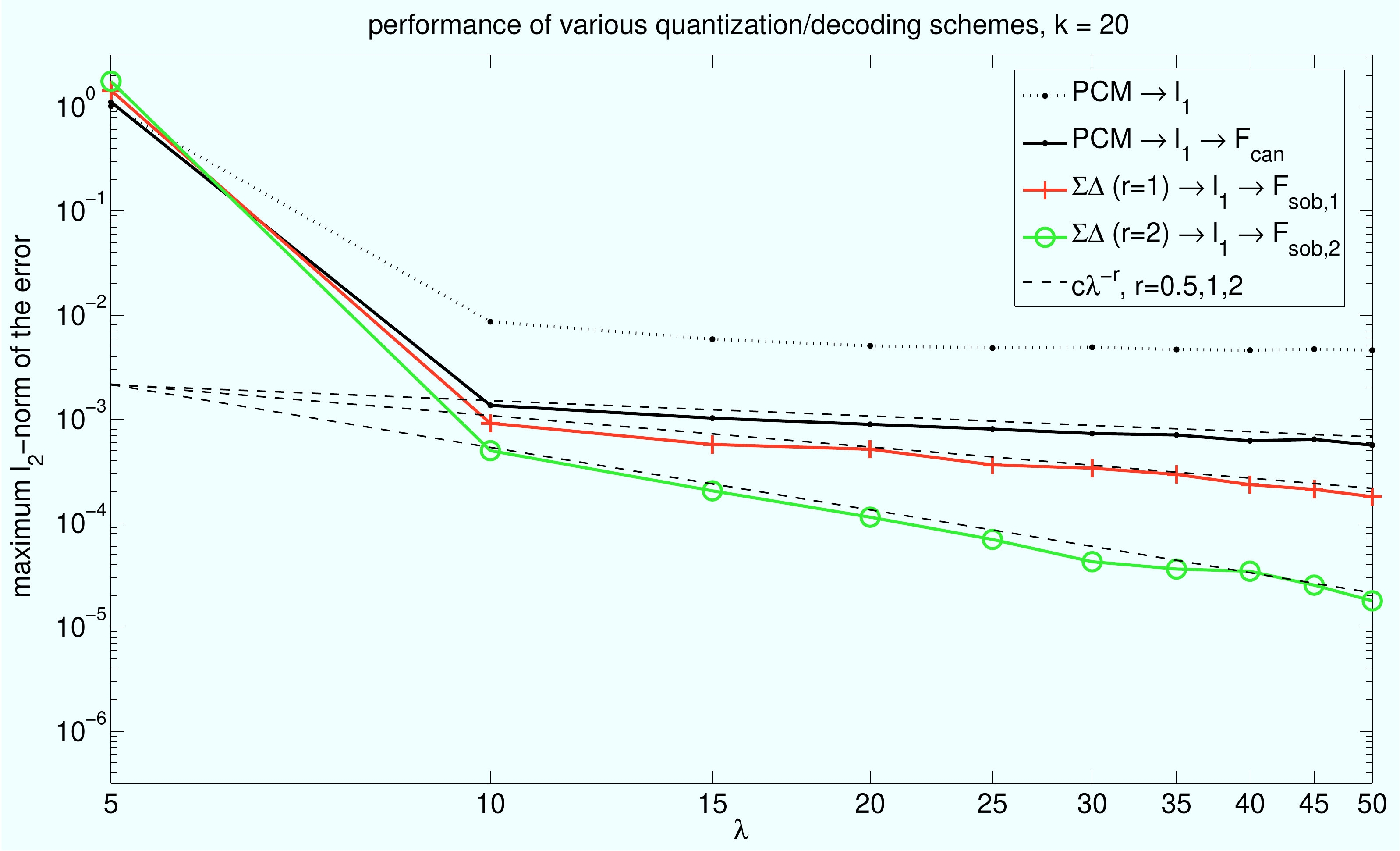}}
\subfigure[]{
\includegraphics[width=3in,height=2in]{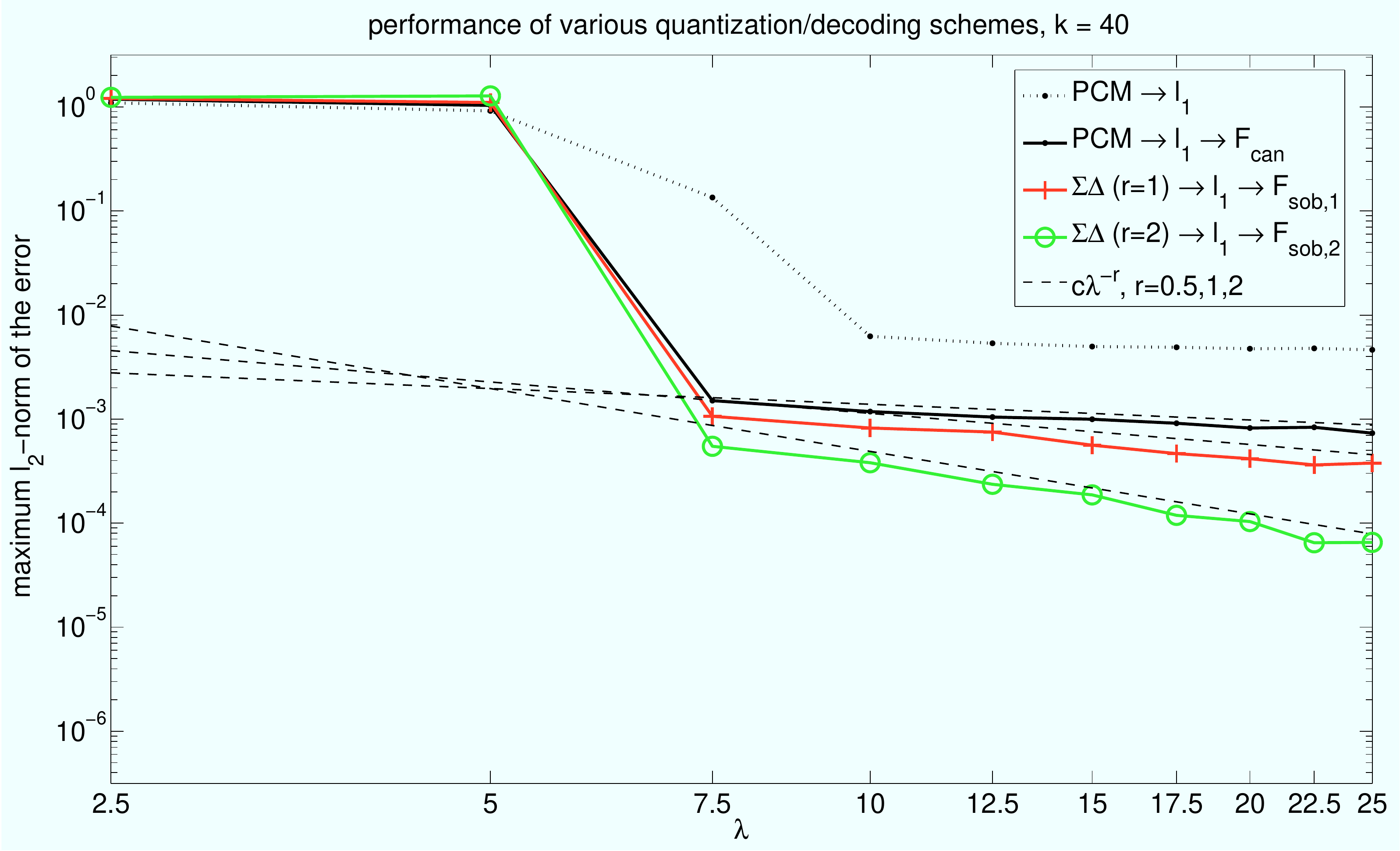}}
\caption{\label{fig3} The worst case performance of the proposed $\Sigma\Delta$ quantization and reconstruction schemes for various values of $k$. 
%Note the improved performance in comparison to PCM quantization. 
For this experiment the non-zero entries of $x$ are constant and $\delta=0.01$.}
\end{center}
\end{figure}

\begin{figure}[b]
\begin{center}
\subfigure[]{
\includegraphics[width=3in,height=2in]{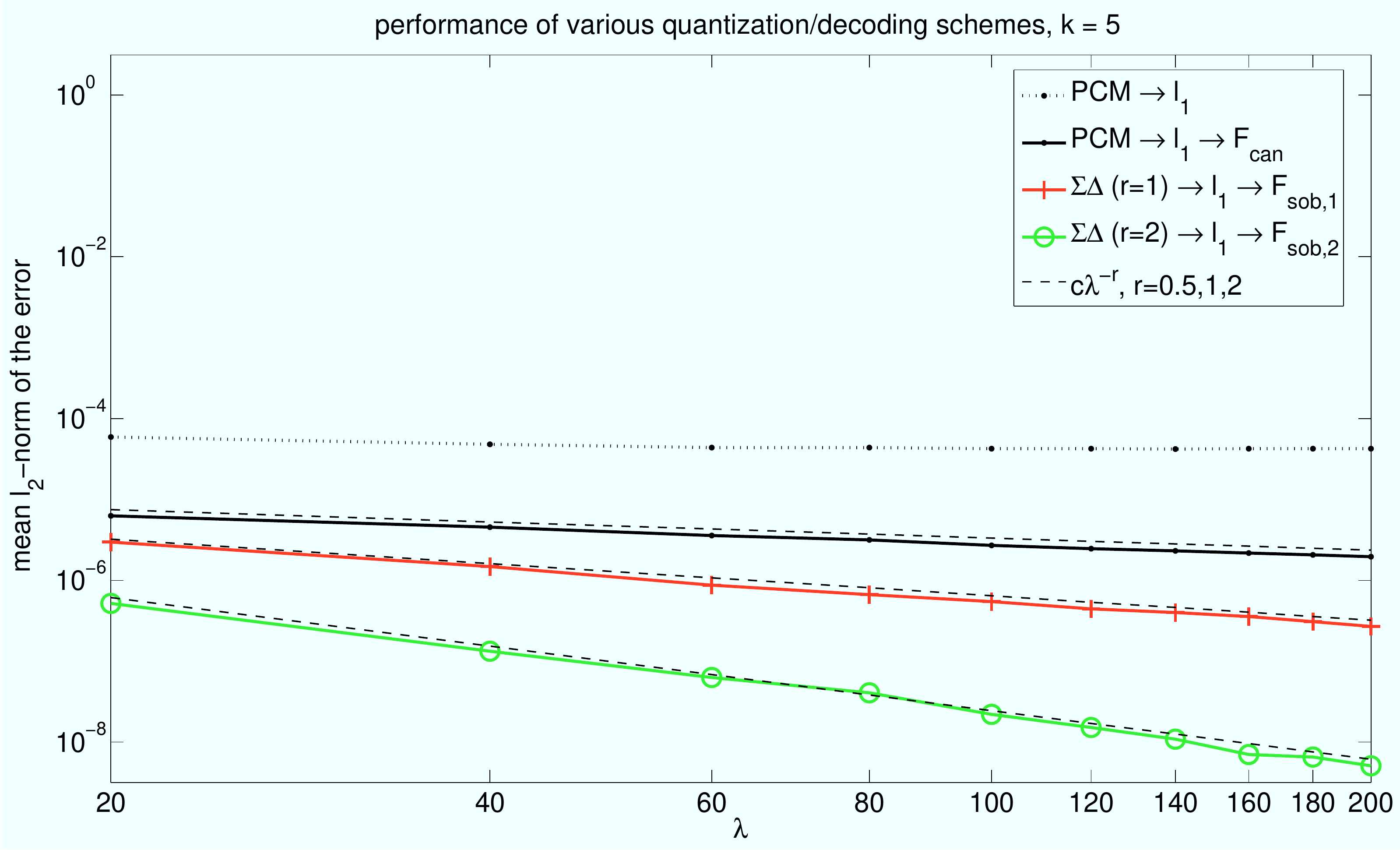}}
\subfigure[]{
\includegraphics[width=3in,height=2in]{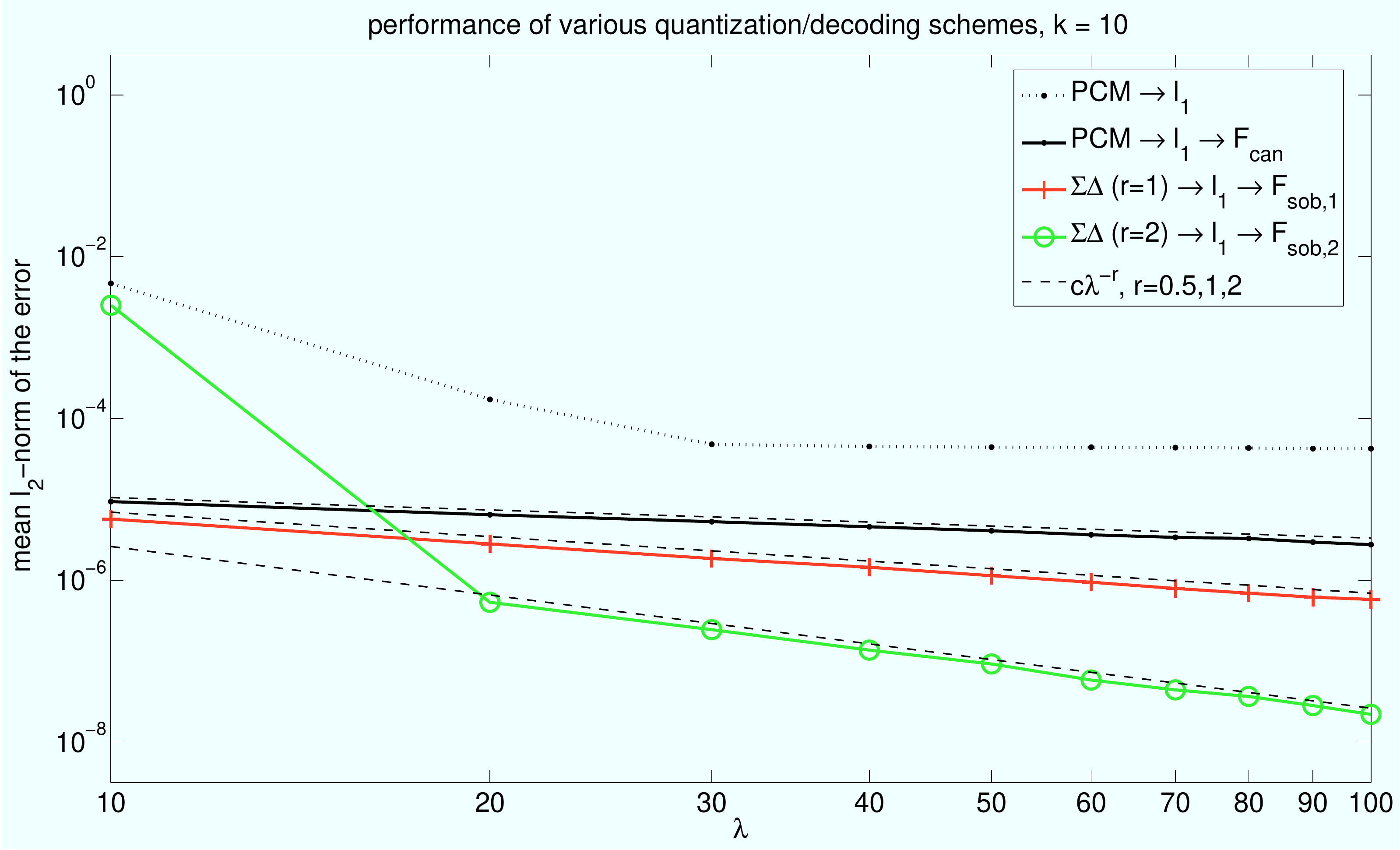}}
\subfigure[]{
\includegraphics[width=3in,height=2in]{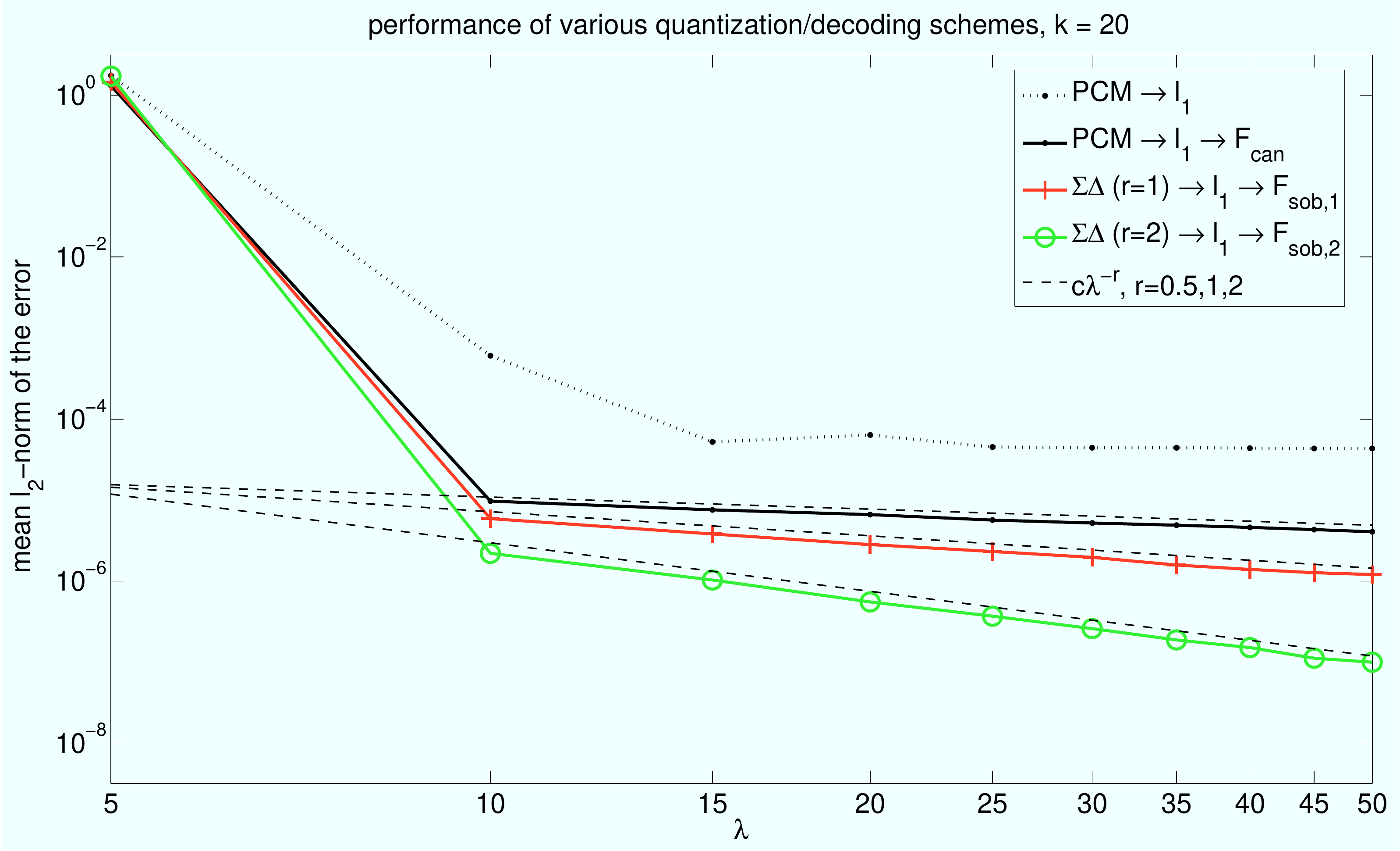}}
\subfigure[]{
\includegraphics[width=3in,height=2in]{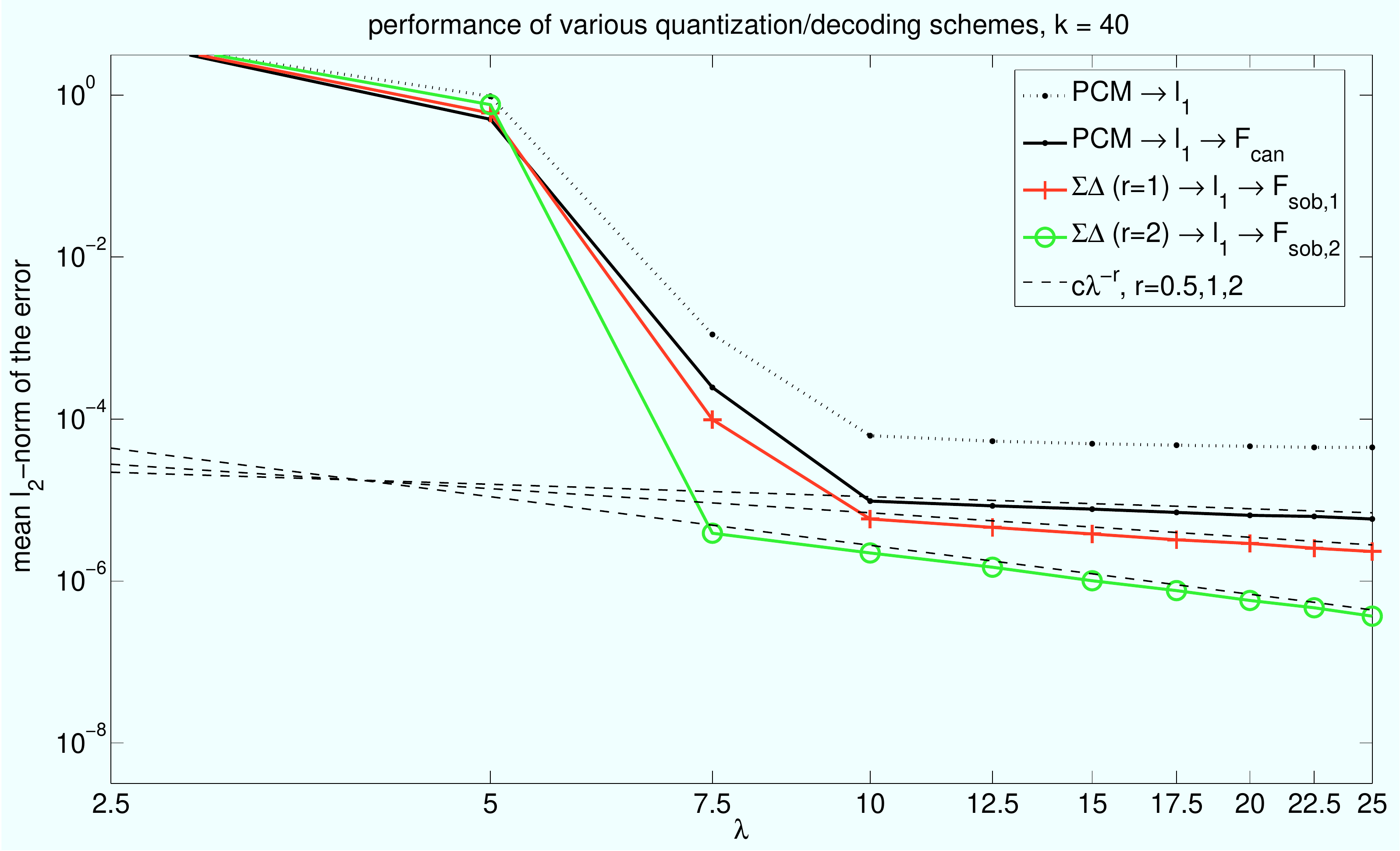}}
\caption{\label{fig4} The average performance of the proposed $\Sigma\Delta$ quantization and reconstruction schemes for various values of $k$. 
%Note the improved performance in comparison to PCM quantization. 
For this experiment the non-zero entries of $x$ are i.i.d. $\G(0,1)$ and $\delta=10^{-4}$.}
\end{center}
\end{figure}

\begin{figure}[b]
\begin{center}
\subfigure[]{
\includegraphics[width=3in,height=2in]{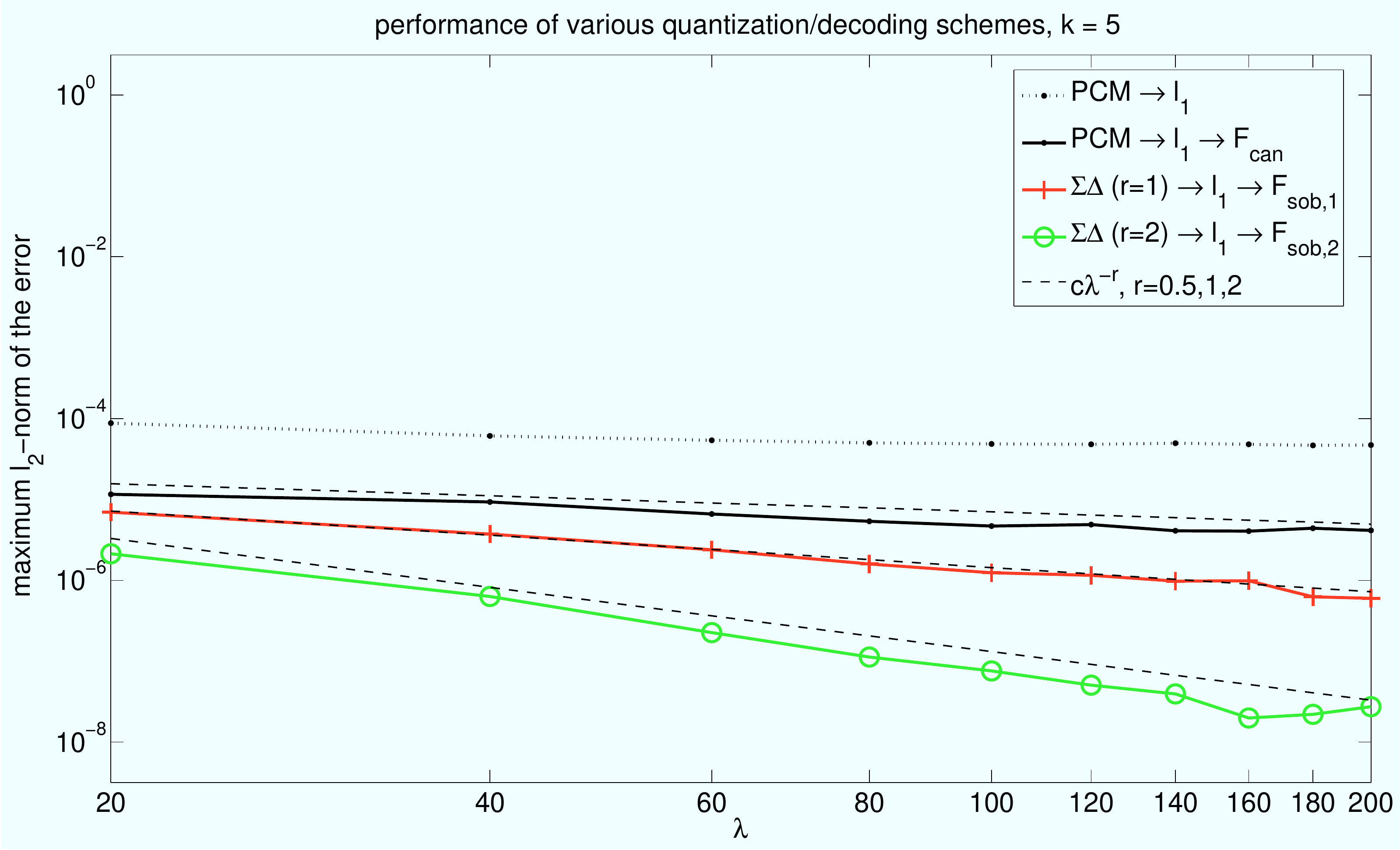}}
\subfigure[]{
\includegraphics[width=3in,height=2in]{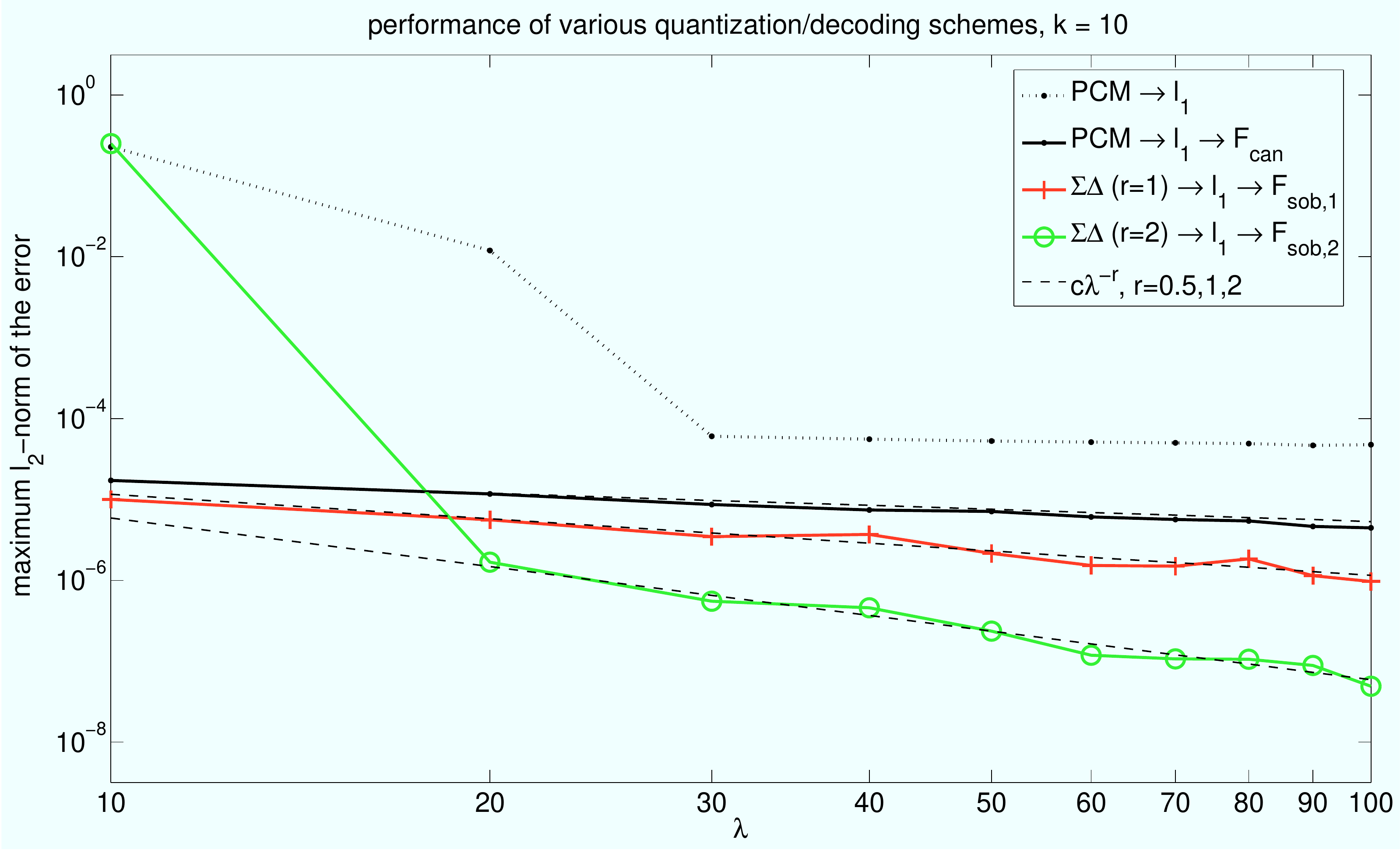}}
\subfigure[]{
\includegraphics[width=3in,height=2in]{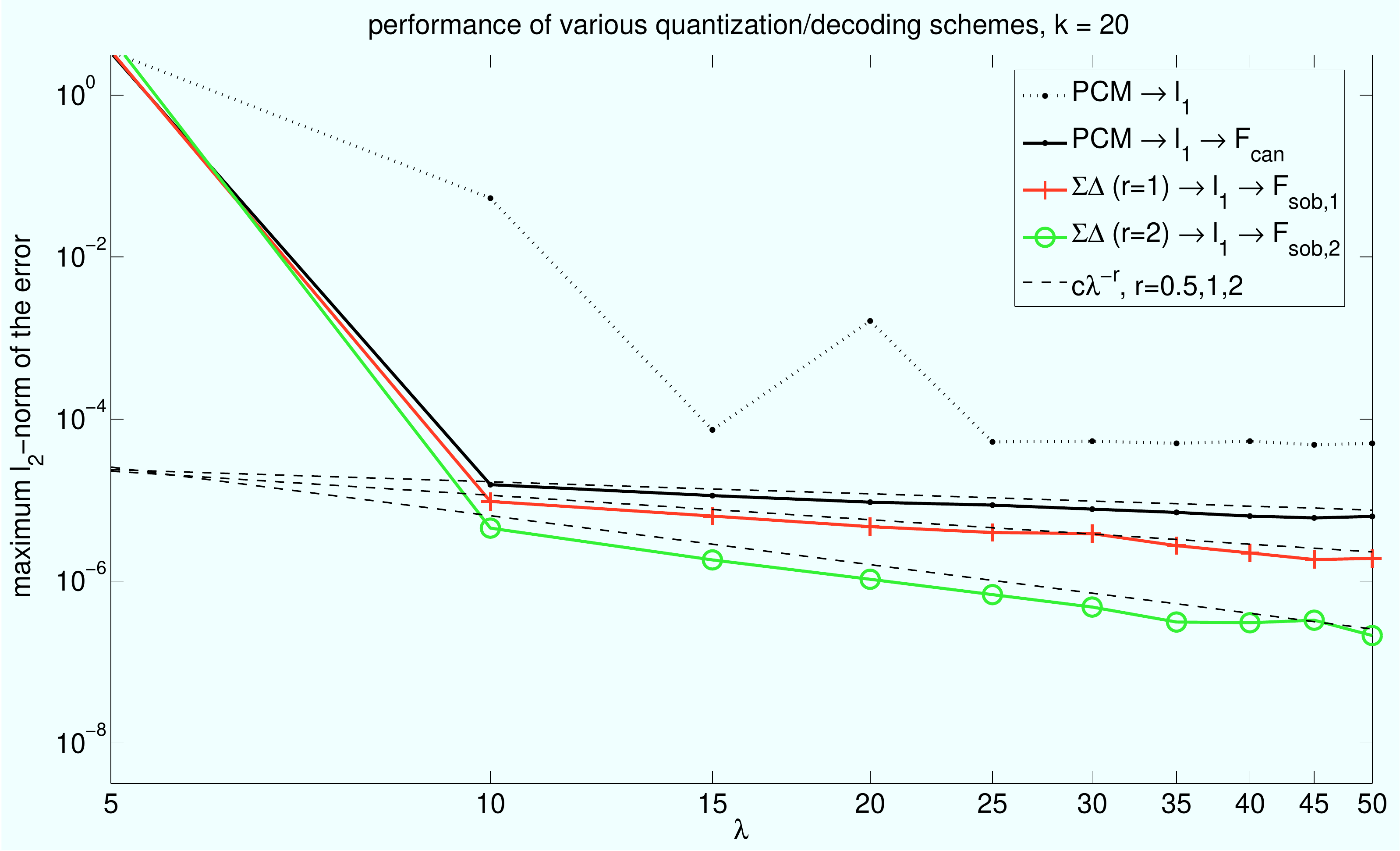}}
\subfigure[]{
\includegraphics[width=3in,height=2in]{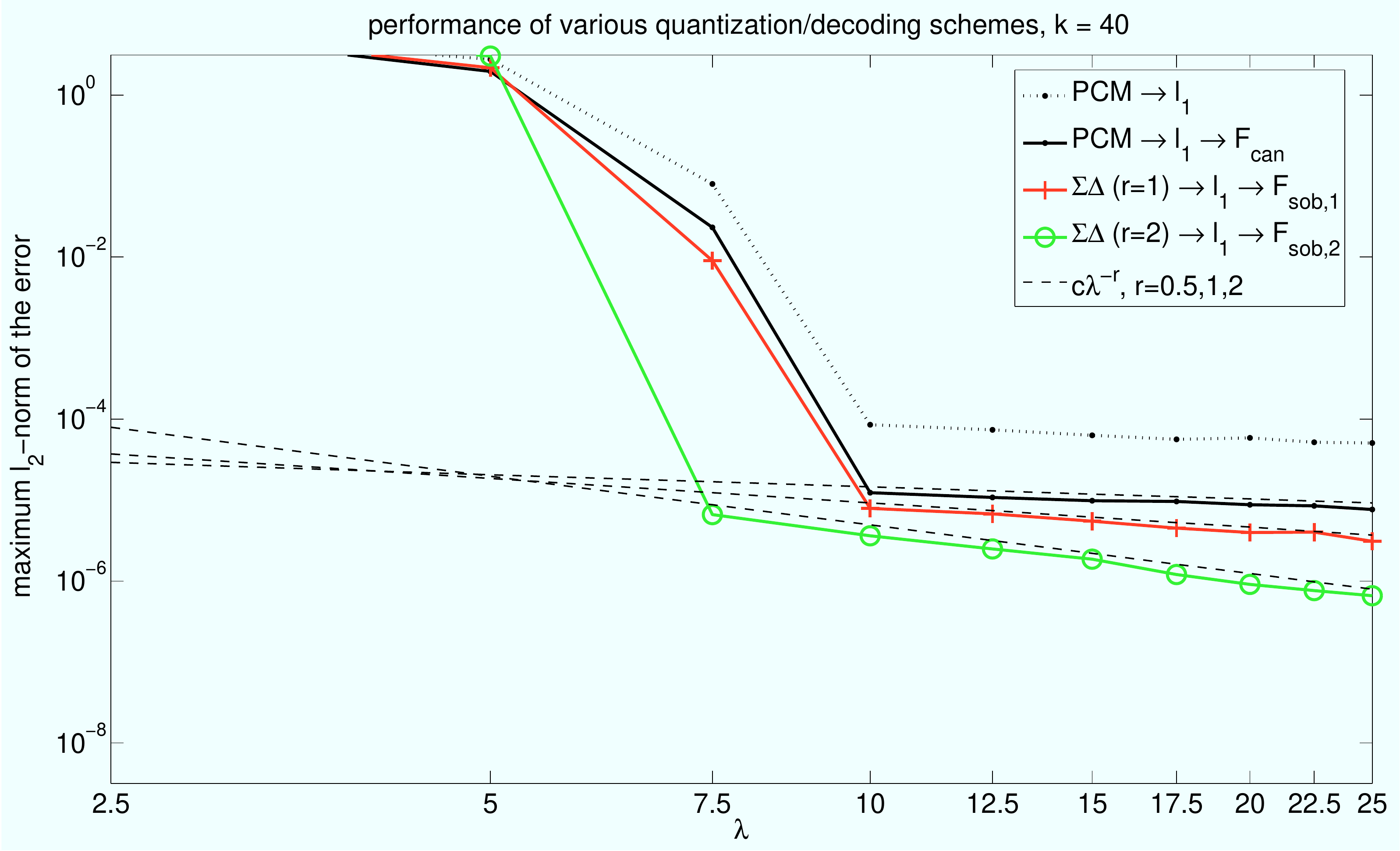}}
\caption{\label{fig5} The worst case performance of the proposed $\Sigma\Delta$ quantization and reconstruction schemes for various values of $k$. 
%Note the improved performance in comparison to PCM quantization. 
For this experiment the non-zero entries of $x$ are i.i.d. $\G(0,1)$ and $\delta=10^{-4}$.}
\end{center}
\end{figure}

\section{Remarks on extensions}
\label{extensions}

\subsection{Other noise shaping matrices}
In the above approach, the particular quantization scheme that we use
can be identified with its ``noise-shaping matrix'', which is $D^r$ in
the case of an $r$th order $\sd$ scheme and the identity matrix in the
case of PCM. 

The results we obtained above are valid for the aforementioned
noise-shaping matrices. However, our techniques are fairly general and
our estimates can be modified to investigate the accuracy obtained
using an arbitrary quantization scheme with the associated invertible
noise-shaping matrix $H$. In particular, the estimates depend solely
on the distribution of the singular values of $H$. Of course, in this
case, we also need change our ``fine recovery'' stage and use the
``$H$-dual'' of the corresponding frame $E$, which we define via
\begin{equation}\label{H-dual-formula}
F_{H} H = (HE)^\dagger.
\end{equation}

As an example, consider an $r$th order {\em high-pass} $\sd$ scheme
whose noise shaping matrix is $H^r$ where $H$ is defined via 
\begin{equation}\label{def-H}
H_{ij} := \left \{ 
\begin{array}{rl}
1, & \mbox{if $i=j$ or if $i=j+1$}, \cr
0, & \mbox{otherwise.}
\end{array}
\right.
\end{equation} 
It is easy to check that the singular values of $H$ are identical to
those of $D$. It follows that all the results presented in this paper
are valid also if the compressed measurements are quantized via an an
$r$th order high-pass $\sd$ scheme, provided the reconstruction is
done using the $H^r$-duals instead of the $r$th order Sobolev
duals. Note that such a result for high-pass $\sd$ schemes is not
known to hold in the case of structured frames.

\subsection{Measurement noise and compressible signals}
One of the natural questions is whether the quantization methods
developed in this paper are effective in the presence of measurement
noise in addition to the error introduced during the quantization
process. Another natural question is how to extend 
this theory to include the case when the
underlying signals are not necessarily strictly sparse, but 
nevertheless still ``compressible''. 

Suppose $x\in \R^N$ is not sparse, but compressible in the usual sense
 (e.g. as in \cite{CRT}), and let $y=\Phi x+e$, where $e$ stands for 
additive measurement noise.  The {\em coarse
recovery stage} inherits the stability and robustness properties
of $\ell_1$ decoding for 
compressed sensing, therefore the accuracy of this first reconstruction
depends on the best $k$-term approximation error for $x$,
and the deviation of $\Phi x$ from the quantized signal $q$ (which 
comprises of the measurement noise $e$ and the quantization error
$y - q$). Up to constant factors, the quantization error
for any (stable) $\sd$ quantizer is comparable to that of PCM, 
hence the reconstruction error at the coarse recovery stage would also
be comparable. In the {\em fine recovery stage}, however, the
difference between
$\sigma_{\max}(F_{H} H)$ and $\sigma_{\max}(F_{H})$ plays a critical
role. In the particular case of $H=D^r$ and $F_H=F_{\text{sob},r}$, 
the Sobolev duals we use in the
reconstruction are tailored to reduce the effect of the quantization error
introduced by an $r$th order $\sd$ quantizer. This is
reflected in the fact that as $\lambda$ increases, the kernel of
the reconstruction operator  $F_{\text{sob},r}$
contains a larger portion of high-pass sequences (like  
the quantization error of $\Sigma\Delta$ modulation), and is quantified by the 
bound $\sigma_{\max}(F_{\text{sob},r}D^r) \lesssim
\lambda^{-(r-1/2)}m^{-1/2}$ (see
Theorem~\ref{main_thm_1}, \eqref{sobolev-formula} and 
\eqref{err-bound-Sob}). Consequently, obtaining more measurements
increases $\lambda$, and even though $\|y - q\|_2$ increases as well,
the reconstruction error due to quantization decreases. 
At the same time, obtaining more measurements would also increase
the size of the external noise $e$, as well as the ``aliasing error'' 
that is the result of the ``off-support'' entries of $x$. However, 
this noise+error term is not counteracted by the
action of $F_{\text{sob},r}$. In fact, 
for any dual $F$, the relation $ F E = I$ implies
$\sigma_{\max}(F) \geq 1/\sigma_{\max}(E) \gtrsim
m^{-1/2}$ already and in the case of measurement noise, 
it is not possible to do better than the canonical dual $E^\dagger$ on average. 
In this case, depending on the size of the noise term, 
the fine recovery stage may not improve
the total reconstruction error even though the ``quantizer error'' is still
reduced.

One possible remedy for this problem is to construct alternative
quantization schemes with associated noise-shaping matrices that
balance the above discussed trade-off between the quantization error
and the error that is introduced by other factors. This is a delicate
procedure, and it will be investigated thoroughly in future
work. However, a first such construction can be made by using
``leaky'' $\sd$ schemes with $H$ given by 
\begin{equation}\label{def-lH}
H_{ij} := \left \{ 
\begin{array}{rl}
1, & \mbox{if $i=j$}, \cr
-\mu & \mbox{if $i=j+1$}, \cr
0, & \mbox{otherwise,}
\end{array}
\right.
\end{equation} 
where $\mu \in (0,1)$. Our preliminary numerical experiments 
(see Figures \ref{fig6} and \ref{fig7}) suggest
that this approach can be used to improve the accuracy of the
approximation further in the fine recovery stage in this more general
setting. We note that the parameter $\mu$ above can be adjusted based
on how compressible the signals of interest are and what the expected
noise level is.

\begin{figure}[b]
\begin{center}
\subfigure[]{
\includegraphics[width=3in,height=2in]{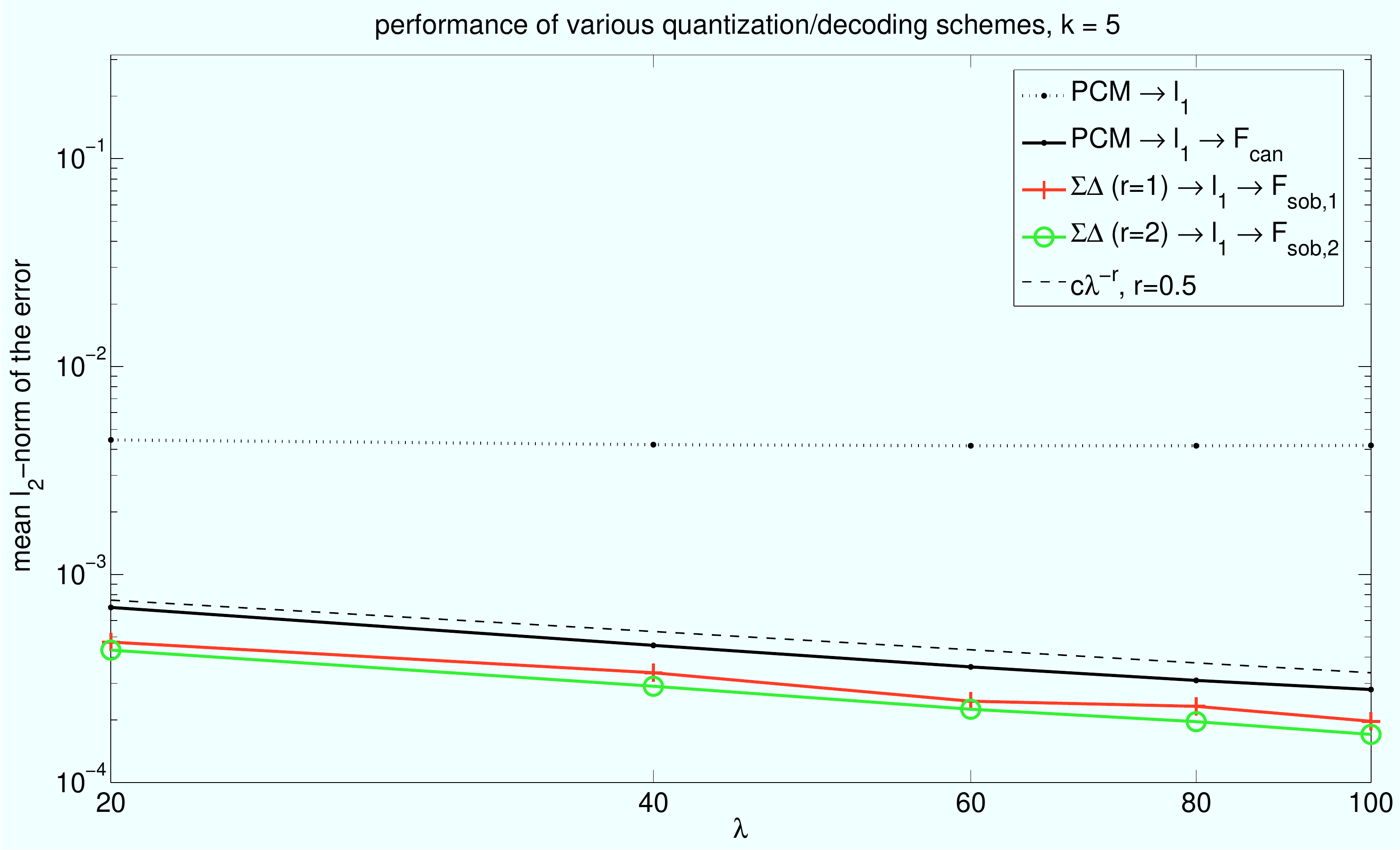}}
\subfigure[]{
\includegraphics[width=3in,height=2in]{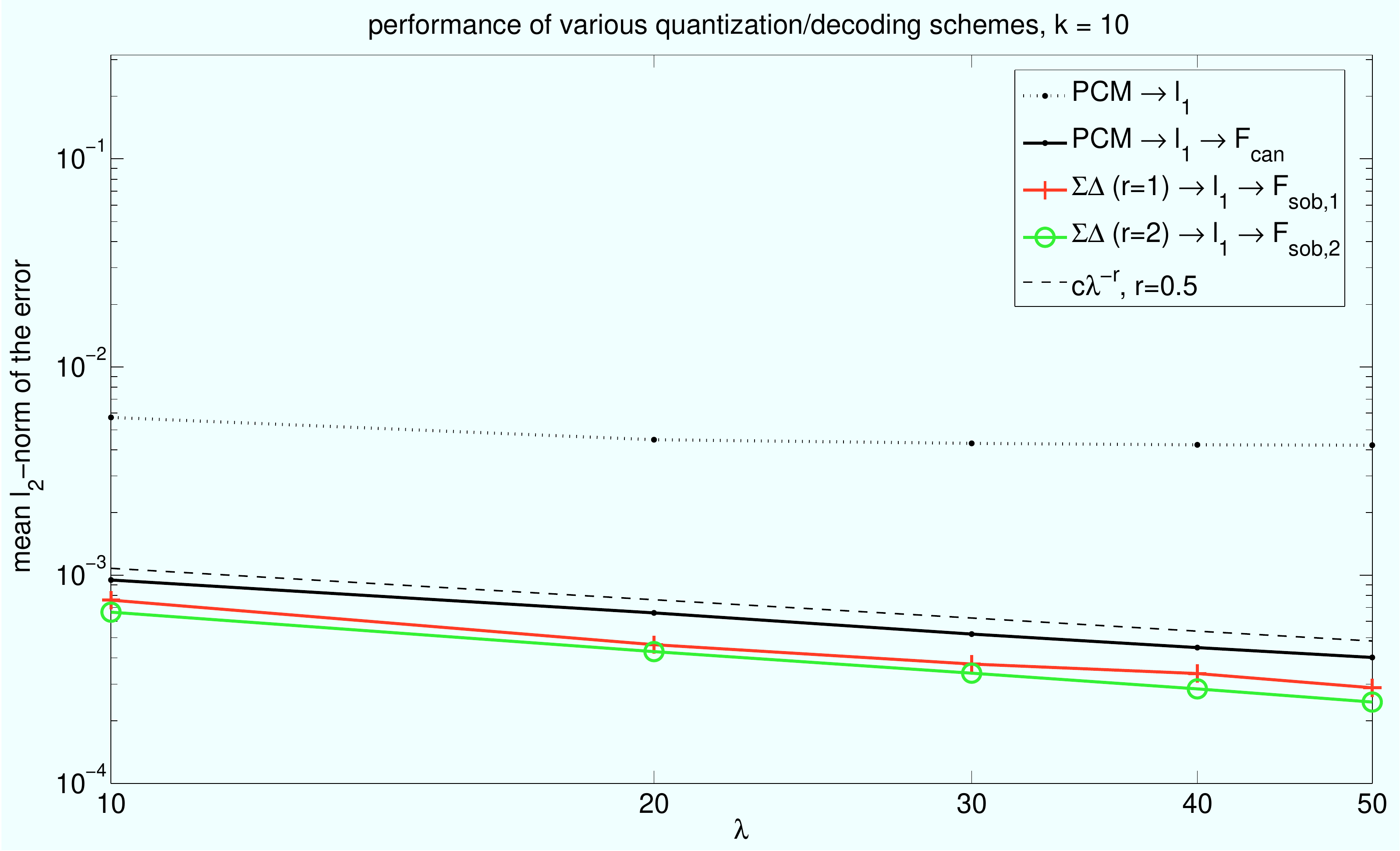}}
\subfigure[]{
\includegraphics[width=3in,height=2in]{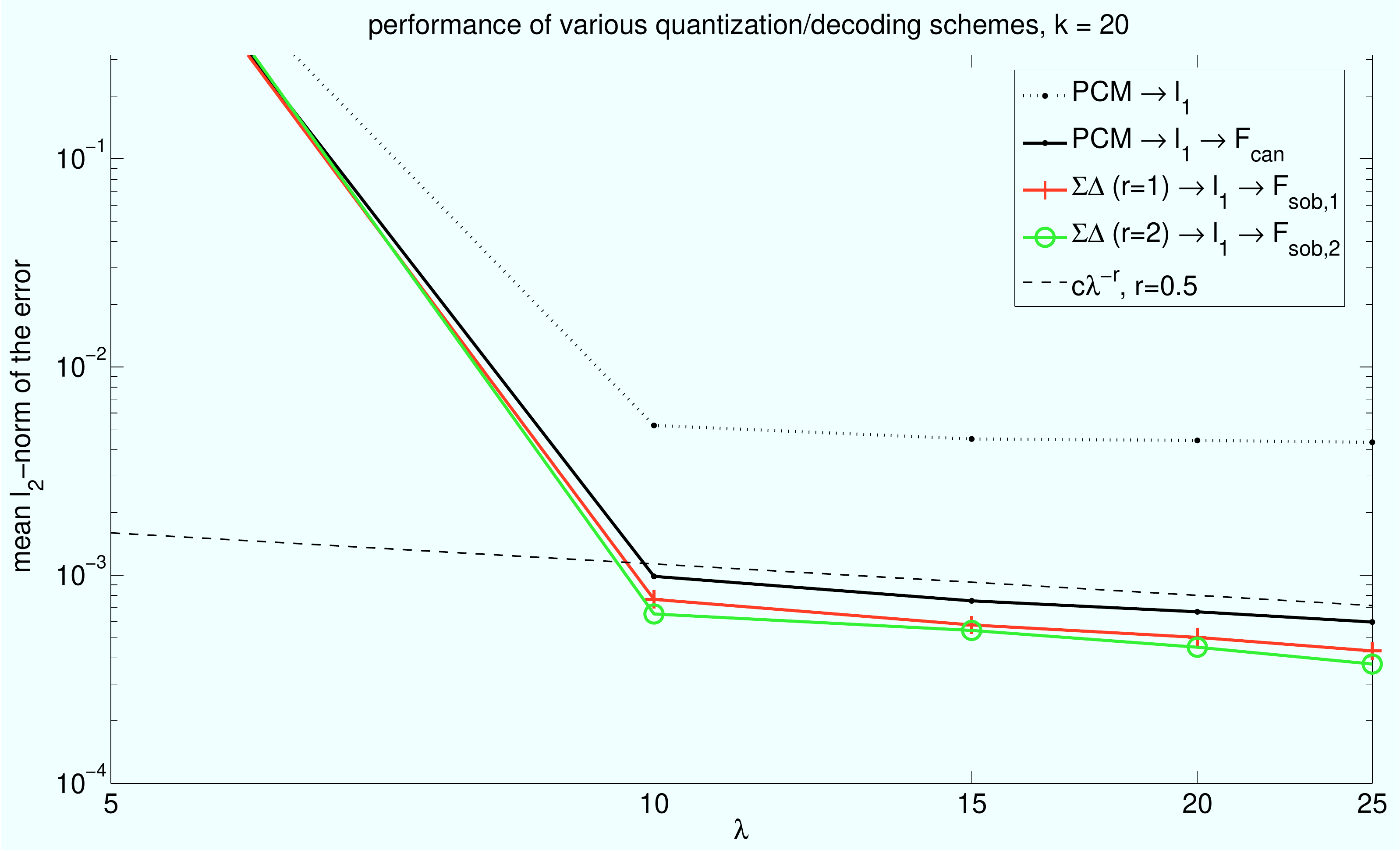}}
\caption{\label{fig6} The average case performance of the proposed $\Sigma\Delta$ quantization and reconstruction schemes (with general duals) for various values of $k$. 
%Note the improved performance in comparison to PCM quantization. 
For this experiment the non-zero entries of $x$ are constant and $\delta=0.01$.}
\end{center}
\end{figure}

\begin{figure}[b]
\begin{center}
\subfigure[]{
\includegraphics[width=3in,height=2in]{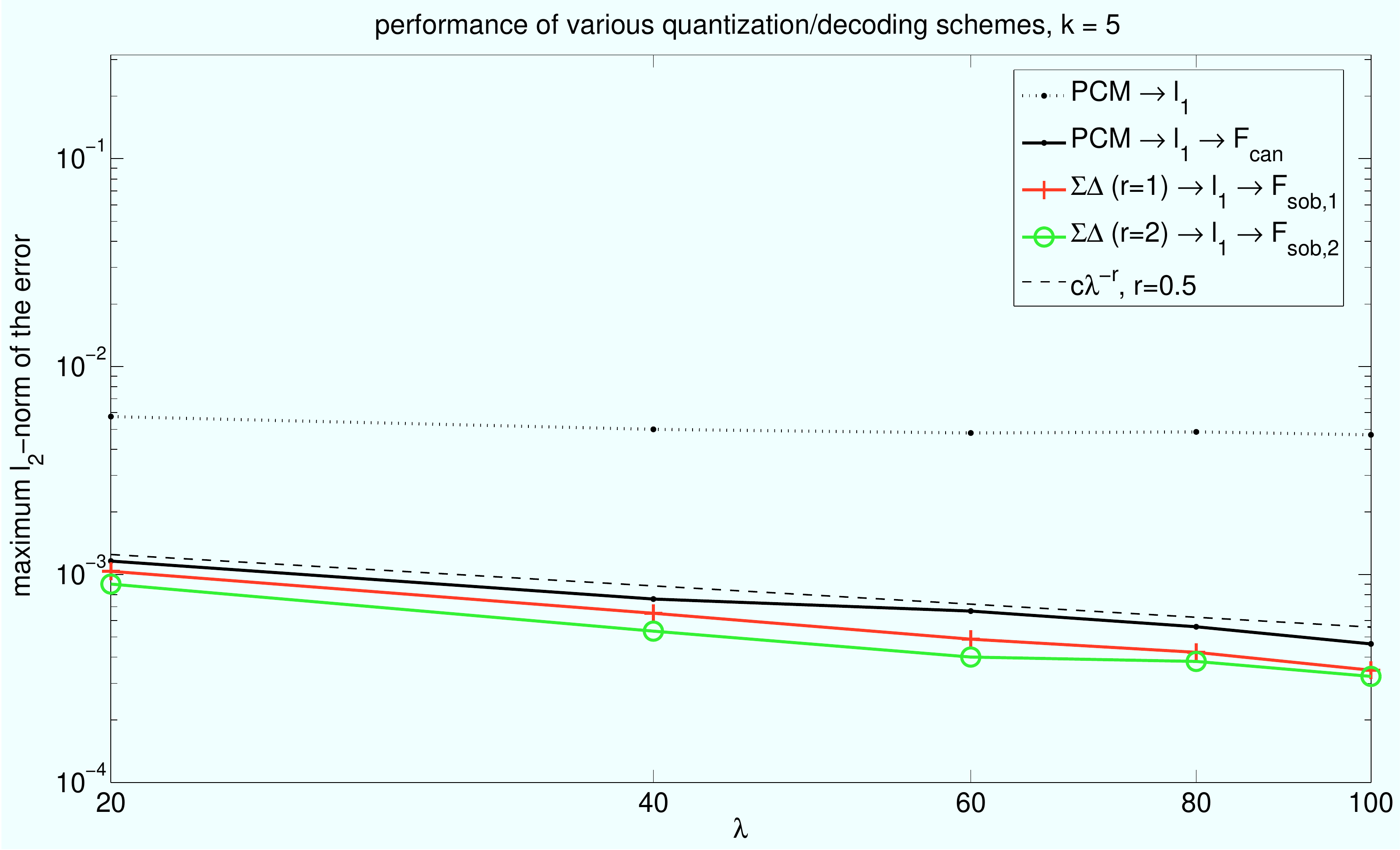}}
\subfigure[]{
\includegraphics[width=3in,height=2in]{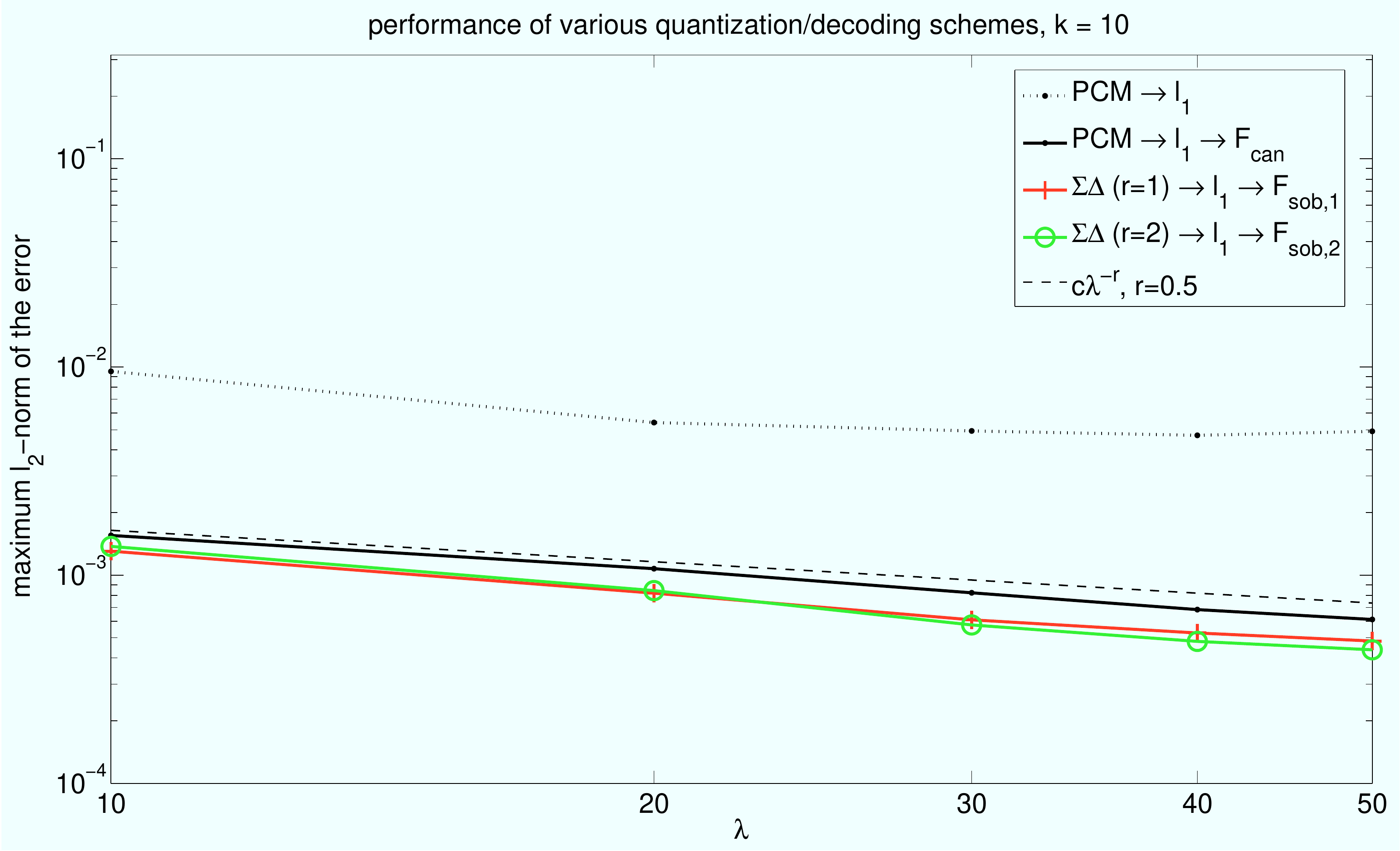}}
\subfigure[]{
\includegraphics[width=3in,height=2in]{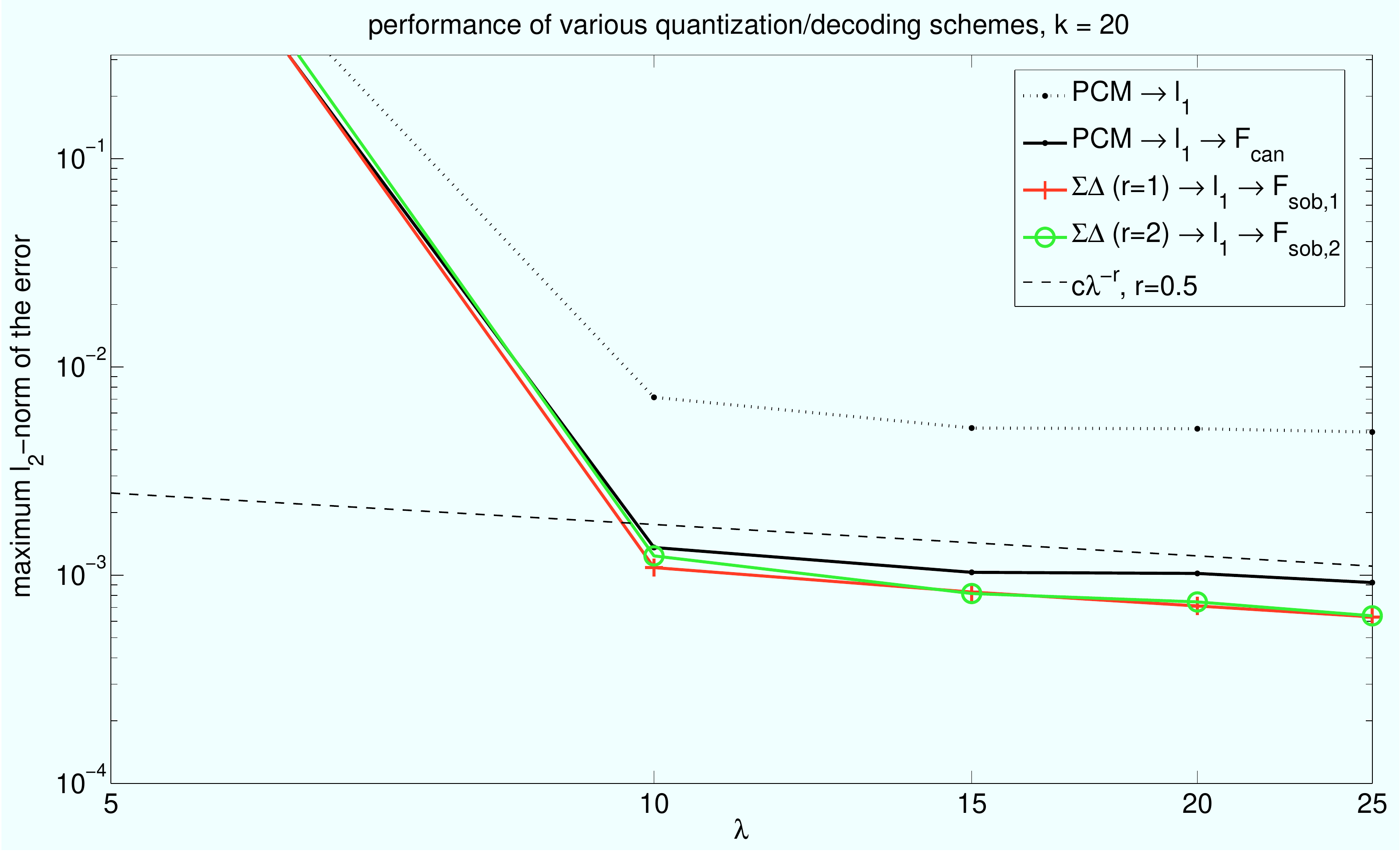}}
\caption{\label{fig7} The worst case performance of the proposed $\Sigma\Delta$ quantization and reconstruction schemes (with general duals) for various values of $k$. 
%Note the improved performance in comparison to PCM quantization. 
For this experiment the non-zero entries of $x$ are constant and $\delta=0.01$.}
\end{center}
\end{figure}

\section*{Acknowledgments} The authors would like to thank Ronald
DeVore for valuable discussions. This work was initiated during an AIM
workshop and matured during a BIRS workshop. We thank the American
Institute of Mathematics and Banff International Research Station for
their hospitality. This work was supported in part by: National
Science Foundation Grant CCF-0515187 (G\"unt\"urk), Alfred P. Sloan
Research Fellowship (G\"unt\"urk), National Science Foundation Grant
DMS-0811086 (Powell), a Pacific Century Graduate Scholarship from the
Province of British Columbia through the Ministry of Advanced
Education (Saab), a UGF award from the UBC (Saab), and a Natural
Sciences and Engineering Research Council of Canada Discovery Grant
(Y{\i}lmaz).

\appendix
\section{Singular values of $D^{-r}$}
\label{sec_sing_val_D}

It will be more convenient to work with the singular values of $D^r$.
Note that because of our convention of descending ordering of singular values, 
we have 
\begin{equation}\label{sing_val_inversion}
\sigma_j(D^{-r}) = \frac{1}{\sigma_{m+1-j}(D^r)},~~j=1,\dots,m.
\end{equation}
For $r=1$, an explicit formula is available \cite{von1941distribution, strang1999dct}. Indeed, we have
\begin{equation}\label{sing_val_D}
\sigma_j(D) = 2\cos\left(\frac{\pi j}{2m+1}\right),
~~j=1,\dots,m,
\end{equation}
which implies
\begin{equation}\label{sing_val_D_1}
\sigma_j(D^{-1}) = \frac{1}{2\sin\left(\frac{\pi (j-1/2)}{2(m+1/2)}\right)},
~~j=1,\dots,m.
\end{equation}

The first observation is that 
$\sigma_j(D^r)$ and $(\sigma_j(D))^r$ are different, because
$D$ and $D^*$ do not commute. However, this becomes insignificant
as $m \to \infty$. In fact, the 
asymptotic distribution of $(\sigma_j(D^r))_{j=1}^m$ as $m \to \infty$
is rather easy to find using standard results in the theory of Toeplitz matrices:
$D$ is a banded 
Toeplitz matrix whose symbol is $f(\theta) = 1 - e^{i\theta}$, hence
the symbol of $D^r$ is $(1-e^{i\theta})^r$. It then follows by 
Parter's extension of Szeg\"o's theorem \cite{Parter} that 
for any continuous function $\psi$, we have 
\begin{eqnarray} 
\lim_{m \to \infty} \frac{1}{m} \sum_{j=1}^m 
\psi( \sigma_j(D^r) ) & = & 
\frac{1}{2\pi} \int_{-\pi}^\pi \psi(|f(\theta)|^r) \,d\theta.
\end{eqnarray}
We have $|f(\theta)| = 2 \sin |\theta|/2$ for $|\theta|\leq \pi$,  
hence the distribution of $(\sigma_j(D^r))_{j=1}^m$ is
asymptotically the same as
that of $2^r \sin^r (\pi j/2m)$, and consequently, 
we can think of $\sigma_j(D^{-r})$ roughly as 
$\big(2^r \sin^r (\pi j/2m)\big)^{-1}$. Moreover, we know that
$\|D^r\|_{\mathrm{op}} \leq \|D\|^r_{\mathrm{op}} \leq 2^r$, hence
$\sigma_{\mathrm{min}}(D^{-r}) \geq 2^{-r}$.

When combined with 
known results on the rate of convergence to the limiting distribution
in Szeg\"o's theorem, the above asymptotics
could be turned into an estimate of the kind given in Proposition
\ref{sing_val_Dr}, perhaps with some loss of precision. 
Here we shall provide a more direct approach which is not asymptotic, and
works for all $m \geq 4r$. The underlying observation is that 
$D$ and $D^*$ almost commute: $D^*D  - D D^*$ has only two nonzero
entries, at $(1,1)$ and $(m,m)$. Based on this observation, 
we show below
that ${D^*}^rD^r$ is then a perturbation of 
$(D^*D)^r$ of rank at most $2r$. 

\begin{prop}\label{rank2r}
Let $C^{(r)} = {D^*}^rD^r-(D^*D)^r$
where we assume $m \geq 2r$. Define 
$$I_r :=\{1,\dots,r\}\times\{1,\dots,r\} \cup \{m-r+1,\dots,m\}\times\{m-r+1,\dots,m\}.$$ 
Then $C^{(r)}_{i,j}= 0$ for all $(i,j)\in I_r^c$. 
Therefore, $\mathrm{rank}(C^{(r)}) \leq 2r$.
\end{prop}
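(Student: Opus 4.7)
I would prove this by induction on $r$. The base case $r=1$ is trivial since $C^{(1)} = D^*D - D^*D = 0$. The inductive step rests on the algebraic identity $(D^*D)^r = D^*(DD^*)^{r-1}D$, which follows at once from $D(D^*D)^{r-1} = (DD^*)^{r-1}D$. Combined with the trivial $(D^*)^r D^r = D^*\bigl[(D^*)^{r-1} D^{r-1}\bigr]D$, this lets one factor
\begin{equation*}
C^{(r)} = D^*\widetilde{C}^{(r-1)}D, \qquad \widetilde{C}^{(r-1)} := (D^*)^{r-1}D^{r-1} - (DD^*)^{r-1}.
\end{equation*}
Spotting this factorization is the main (though quite modest) obstacle; it reduces the question to a support statement for $\widetilde{C}^{(r-1)}$, together with a simple lemma on how left-multiplication by $D^*$ and right-multiplication by $D$ propagate support.

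Writing $\widetilde{C}^{(r-1)} = C^{(r-1)} + \Delta^{(r-1)}$ with $\Delta^{(r-1)} := (D^*D)^{r-1} - (DD^*)^{r-1}$, the inductive hypothesis handles $C^{(r-1)}$. For $\Delta^{(r-1)}$, I would use the fact that $D^*D = L - e_m e_m^T$ and $DD^* = L - e_1 e_1^T$, where $L := 2I - S - S^*$ and $S$ is the lower-shift matrix so that $D = I - S$. Expanding $(L - e_m e_m^T)^{r-1}$ distributively, every term besides $L^{r-1}$ contains at least one $e_m e_m^T$ factor and collapses to a scalar multiple of $(L^a e_m)(L^b e_m)^T$ for some $a, b \le r-2$. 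Since $L^s e_m$ has support in $\{m-s, \dots, m\}$, this shows $(D^*D)^{r-1} - L^{r-1}$ is supported in $\{m-r+2, \dots, m\}^2$, and symmetrically $(DD^*)^{r-1} - L^{r-1}$ in $\{1, \dots, r-1\}^2$. Subtracting, $\Delta^{(r-1)}$, and hence $\widetilde{C}^{(r-1)}$, is supported in $I_{r-1}$.

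To finish, observe that $D^* e_j \in \mathrm{span}\{e_{j-1}, e_j\}$ (with the convention $e_0 := 0$) and $D e_j \in \mathrm{span}\{e_j, e_{j+1}\}$ (with $e_{m+1} := 0$), so both $D^*\cdot$ and $\cdot D$ can spread the support of a matrix by at most one index toward smaller indices. Decomposing $\widetilde{C}^{(r-1)} = M_L + M_R$ with $M_L$ supported in $\{1, \dots, r-1\}^2$ and $M_R$ in $\{m-r+2, \dots, m\}^2$, one checks that $D^* M_L D$ remains in $\{1, \dots, r-1\}^2 \subseteq \{1, \dots, r\}^2$ (the would-be spread past index $1$ is cut off since $e_0 = 0$), while $D^* M_R D$ lands in $\{m-r+1, \dots, m\}^2$. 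Hence $C^{(r)} = D^* M_L D + D^* M_R D$ is supported in $I_r$, and $\mathrm{rank}(C^{(r)}) \le 2r$ follows immediately from the fact that the support lies in a union of two $r \times r$ blocks.
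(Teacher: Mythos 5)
Your argument is correct, and its inductive skeleton actually coincides with the paper's: the factorization $C^{(r)} = D^*\bigl(C^{(r-1)}+\Delta^{(r-1)}\bigr)D$ with $\Delta^{(r-1)}=(D^*D)^{r-1}-(DD^*)^{r-1}$ is precisely the paper's recursion $C^{(r+1)}=D^*\bigl(C^{(r)}+\Gamma_r\bigr)D$, and your observation that left multiplication by $D^*$ and right multiplication by $D$ spread supports by at most one index toward smaller indices (cut off at the boundary) is exactly the paper's property (iv). Where you genuinely diverge is in showing that the commutator-type term $\Delta^{(r-1)}=\Gamma_{r-1}$ is concentrated in the corners: the paper writes $D^*D=DD^*+\Gamma_1$ with $\Gamma_1=[D^*,D]=e_1e_1^T-e_me_m^T$ and controls the expansion through a small calculus of banded ($\sB_s$) and cornered ($\sC_s$) matrix classes, whereas you write both Gram matrices as rank-one corner perturbations of the common tridiagonal matrix $L=2I-S-S^*$, so that in the expansion of $(L-e_me_m^T)^{r-1}$ every term other than $L^{r-1}$ collapses to a scalar multiple of $(L^a e_m)(L^b e_m)^T$ with $a,b\le r-2$, localized near the corner by the bandedness of $L$ (and symmetrically with $e_1$), the $L^{r-1}$ terms cancelling in the difference. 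Your route is more explicit and self-contained: it exhibits $\Delta^{(r-1)}$ concretely as a sum of rank-one corner matrices and, as a small bonus, shows that the upper-left block of $C^{(r)}$ in fact fits inside $\{1,\dots,r-1\}^2$ (so $\mathrm{rank}\,C^{(r)}\le 2r-1$), since the leftward spread is truncated at index $1$. What the paper's class calculus buys is bookkeeping that never needs an explicit formula for $D^*D$ or $DD^*$ and transfers mechanically to products of other banded factors; both proofs ultimately rest on the same two facts, namely that $D$ is banded and that $[D^*,D]$ is supported at the $(1,1)$ and $(m,m)$ entries.
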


\begin{proof} 
Define the set $\sC_r$ of all ``$r$-cornered'' matrices as
$$\sC_r = \{M : M_{i,j} = 0 \mbox{ if } (i,j) \in I_r^c \},$$
and the set
$\sB_r$ of all ``$r$-banded'' matrices as
$$\sB_r = \{M : M_{i,j}=0 \mbox{ if } |i-j| > r \}.$$
Both sets are closed under matrix addition. 
It is also easy to check the following facts (for the admissible range 
of values for $r$ and $s$):
\begin{itemize}
\item[(i)] If $B \in \sB_r$ and $C \in \sC_s$, then 
$BC\in \sC_{r+s}$ and $CB \in \sC_{r+s}$. 
\item[(ii)] If $B \in \sB_r$ and $\tilde B \in \sB_s$, then 
$B\tilde B\in \sB_{r+s}$. 
\item[(iii)] If $C \in \sC_r$ and
$\tilde C \in \sC_s$, then $C\tilde C \in \sC_{\mathrm{max}(r,s)}$. 
\item[(iv)] If $C \in \sC_r$, then $D^* C D \in \sC_{r+1}$.
\end{itemize}

Note that $DD^*,D^*D \in \sB_1$ and 
the commutator $[D^*,D] =:\Gamma_1 \in \sC_1$.  Define
$$\Gamma_r := 
(D^*D)^r - (DD^*)^r  = (DD^*+\Gamma_1)^r - (DD^*)^r.
$$
We expand out the first term 
(noting the non-commutativity), cancel $(DD^*)^r$ and  
see that every term that remains
is a product of $r$ terms (counting each $DD^*$ as one term)
each of which is either in $\sB_1$ or in $\sC_1$.
Repeated applications of (i), (ii), and (iii)
yield $\Gamma_r \in \sC_{r}$.

We will now show by induction on $r$
that $C^{(r)} \in \sC_r$ for all $r$ such that $2r \leq m$. 
The cases $r=0$ and $r=1$ hold
trivially. Assume the statement holds for a given value of $r$. Since
$$
C^{(r+1)} = D^* (C^{(r)} + \Gamma_r) D
$$
and $\Gamma_r \in \sC_r$, property (iv) above 
now shows that $C^{(r+1)} \in \sC_{r+1}$. 
\end{proof}

The next result, originally due to 
Weyl (see, e.g., \cite[Thm 4.3.6]{HJ}), will now allow us to
estimate the eigenvalues of ${D^*}^rD^r$ using the
eigenvalues of $(D^*D)^r$:

\begin{thm}[Weyl] \label{Weyl}
Let $B$ and $C$ be $m \times m$ Hermitian matrices where $C$ has
rank at most $p$. 
Then
\begin{equation}
\lambda_{j+p}(B) \leq \lambda_{j}(B+C) \leq \lambda_{j-p}(B),~~~~
j=p+1,\dots,m-p,
\end{equation}
where we assume eigenvalues are in descending order.
\end{thm}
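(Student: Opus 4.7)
The plan is to derive this from the Courant--Fischer min--max characterization of eigenvalues, exploiting the single structural input that $C$ has rank at most $p$ and is Hermitian, so its kernel $N := \ker C$ has dimension at least $m-p$, and the quadratic forms $\langle (B+C)x,x\rangle$ and $\langle Bx,x\rangle$ coincide on $N$. This immediately suggests intersecting min--max test subspaces with $N$ and controlling the dimension loss via $\dim(V\cap N) \ge \dim V + \dim N - m \ge \dim V - p$.

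For the upper bound $\lambda_j(B+C)\le \lambda_{j-p}(B)$, I would use the dual form
\[
\lambda_j(B+C) \;=\; \min_{\dim T = m-j+1}\ \max_{\substack{x\in T\\ \|x\|=1}} \langle (B+C)x,x\rangle.
\]
Take $T_0$ an $(m-(j-p)+1) = (m-j+p+1)$-dimensional subspace achieving $\lambda_{j-p}(B)$ in the same formula applied to $B$, so that $\max_{x\in T_0,\|x\|=1}\langle Bx,x\rangle = \lambda_{j-p}(B)$. Dimension counting gives $\dim(T_0\cap N)\ge (m-j+p+1)+(m-p)-m = m-j+1$, so one may pick an $(m-j+1)$-dimensional subspace $T\subseteq T_0\cap N$. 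Then the outer $\min$ in the definition of $\lambda_j(B+C)$ is bounded above by the $\max$ over this $T$, and on $T\subseteq N$ the quadratic form $\langle (B+C)x,x\rangle$ reduces to $\langle Bx,x\rangle$, which is in turn bounded by $\lambda_{j-p}(B)$. The range $j\ge p+1$ is exactly what is needed to make $j-p\ge 1$.

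For the lower bound $\lambda_{j+p}(B)\le \lambda_j(B+C)$, I would run the symmetric argument using the primal form
\[
\lambda_j(B+C) \;=\; \max_{\dim S = j}\ \min_{\substack{x\in S\\ \|x\|=1}} \langle (B+C)x,x\rangle,
\]
or equivalently observe that $-C$ also has rank at most $p$ and swap the roles of $B$ and $B+C$. Here one lets $S_0$ be a $(j+p)$-dimensional subspace realizing $\lambda_{j+p}(B)$, intersects with $N$ to get a subspace of dimension at least $j$, restricts to an exactly $j$-dimensional subspace $S$ therein, and again exploits $\langle Cx,x\rangle = 0$ on $N$. The condition $j\le m-p$ is what keeps $j+p\le m$.

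There is no real obstacle beyond careful dimension bookkeeping; the proof is essentially combinatorial once one observes that the rank constraint on $C$ translates into a codimension-$p$ subspace on which $C$ is invisible to both quadratic forms. One alternative, which I would mention but not pursue, is to deduce the statement from the general Weyl inequality $\lambda_{i+j-1}(X+Y)\le \lambda_i(X)+\lambda_j(Y)$ with $j=p+1$, since the rank bound forces $\lambda_{p+1}(\pm C)\le 0$; but this would simply shift the burden to citing a more general result whose proof uses the same Courant--Fischer argument.
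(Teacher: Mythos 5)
Your proof is correct: both halves follow exactly as you describe from Courant--Fischer, since $\ker C$ has dimension at least $m-p$ and the two quadratic forms agree there, and your dimension counts (needing $j\ge p+1$ for the upper bound and $j\le m-p$ for the lower) match the stated range of $j$. Note that the paper does not prove this statement at all --- it cites it as Weyl's theorem from Horn and Johnson, Thm.\ 4.3.6 --- and your min--max argument (or the alternative you mention via $\lambda_{i+j-1}(X+Y)\le\lambda_i(X)+\lambda_j(Y)$ with $\lambda_{p+1}(\pm C)\le 0$) is precisely the standard proof found there, so there is nothing to reconcile.
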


We are now fully equipped to prove Proposition \ref{sing_val_Dr}.

\begin{proof}[Proof of Proposition \ref{sing_val_Dr}]
We set $p=2r$, $B = (D^*D)^r$, 
and $C = C^{(r)} = {D^*}^rD^r-(D^*D)^r$ in Weyl's theorem. By Proposition
\ref{rank2r}, $C$ has rank at most $2r$. Hence, we have the relation
\begin{equation}
\lambda_{j+2r}((D^*D)^r) \leq \lambda_{j}({D^*}^rD^r) \leq 
\lambda_{j-2r}((D^*D)^r),~~~~
j=2r+1,\dots,m-2r.
\end{equation}
Since $\lambda_j((D^*D)^r) = \lambda_j(D^*D)^r$, this corresponds to
\begin{equation}\label{sing_est}
\sigma_{j+2r}(D)^r \leq \sigma_{j}(D^r) \leq 
\sigma_{j-2r}(D)^r,~~~~
j=2r+1,\dots,m-2r.
\end{equation}
For the remaining values of $j$, we will simply use the largest and smallest
singular values of $D^r$ as upper and lower bounds. However, note that
$$\sigma_1(D^r) = \|D^r\|_{\mathrm{op}} \leq 
\|D\|_{\mathrm{op}}^{r} = (\sigma_1(D))^r$$
and similarly
$$\sigma_m(D^r) = \|D^{-r}\|^{-1}_{\mathrm{op}} \geq 
\|D^{-1}\|_{\mathrm{op}}^{-r} = (\sigma_m(D))^r.$$
Hence \eqref{sing_est} can be rewritten as
\begin{equation}
\sigma_{\min(j+2r,m)}(D)^r \leq \sigma_{j}(D^r) \leq 
\sigma_{\max(j-2r,1)}(D)^r,~~~~
j=1,\dots,m.
\end{equation}
Inverting these relations via \eqref{sing_val_inversion}, we obtain
\begin{equation}
\sigma_{\min(j+2r,m)}(D^{-1})^r \leq \sigma_{j}(D^{-r}) \leq 
\sigma_{\max(j-2r,1)}(D^{-1})^r,~~~~
j=1,\dots,m.
\end{equation}
Finally, 
to demonstrate the desired bounds of Proposition \ref{sing_val_Dr}, 
we rewrite 
\eqref{sing_val_D_1} via the inequality
$2x/\pi \leq \sin x \leq x$ for $0 \leq x \leq \pi/2$ as
\begin{equation}
\frac{m+1/2}{\pi(j-1/2)} \leq \sigma_j(D^{-1}) \leq \frac{m+1/2}{2(j-1/2)},
\end{equation}
and observe that $\min(j+2r,m) \asymp_r j$ and 
$\max(j-2r,1) \asymp_r j$ for $j=1,\dots,m$. 
\end{proof}
\par {\em Remark.} The constants $c_1(r)$ and $c_2(r)$ that one 
obtains from the above argument would be significantly exaggerated.
This is primarily due to the fact that Proposition \ref{sing_val_Dr} is
not stated in the tightest possible form. The advantage of this form is 
the simplicity
of the subsequent analysis in Section \ref{lower_bound}.
Our estimates of $\sigma_{\min}(D^{-r}E)$ would become
significantly more accurate if the asymptotic distribution of 
$\sigma_j(D^{-r})$ is incorporated into our proofs in 
Section \ref{lower_bound}. However, the main disadvantage would be
that the estimates would then hold only for all sufficiently large $m$.

\frenchspacing

\bibliographystyle{plain}
\bibliography{CS_SD}

\end{document}